\documentclass[12pt,a4paper]{article}
\usepackage{amsthm}
\usepackage{amssymb}
\usepackage{latexsym}
\usepackage{xspace}
\usepackage{bbm}
\usepackage{amssymb}
\usepackage{amsmath}
\usepackage{graphicx}
\usepackage{epic}
\usepackage{eepic}
\usepackage{multirow}
\usepackage{gastex}
\usepackage{times}
\usepackage{thmtools, thm-restate}
\usepackage{mathtools}
\usepackage{MnSymbol}
\usepackage{enumitem}
\usepackage{color}
\usepackage{bbm}
\usepackage[hidelinks,bookmarks=false]{hyperref}


\declaretheorem{theorem}
\declaretheorem{lemma}
\declaretheorem[sibling=lemma]{proposition}
\declaretheorem[sibling=lemma]{definition}
\declaretheorem[sibling=lemma]{corollary}
\newtheorem*{remark}{Remark}
\newtheorem*{example}{Example}

\DeclareMathOperator{\E}{\mathbb{E}}
\DeclareMathOperator{\Var}{\mathrm{Var}}

\DeclareMathOperator{\rank}{rank}

\DeclareMathOperator{\im}{Im}

\newcommand{\TODO}[1]{}
\newcommand{\Cov}{\mathrm{Cov}}
\newcommand{\sen}{\mathrm{Sen}}

\newcommand{\gammin}{\gamma_{\min}}

\newcommand{\bits}[1]{\mathbb F_2^{#1}}
\newcommand{\evens}[1]{\mathbb V^{#1}}
\newcommand{\odds}[1]{\mathbb D^{#1}}

\newcommand\numberthis{\addtocounter{equation}{1}\tag{\theequation}}

\DeclarePairedDelimiter{\floor}{\lfloor}{\rfloor}

\AtBeginDocument{}

\title{On the weight distribution of random binary linear codes}
\author{
Nati Linial
\thanks{Department of Computer Science, Hebrew University, Jerusalem 9190401. e-mail: nati@cs.huji.ac.il. Supported by ERC grant 339096 "High-dimensional combinatorics".}
\and
Jonathan Mosheiff
\thanks{Department of Computer Science, Hebrew University, Jerusalem 9190401. e-mail: yonatanm@cs.huji.ac.il. Supported by the Adams Fellowship Program of the Israel Academy of Sciences and Humanities.}}
\date{}

\begin{document}
\maketitle

\begin{abstract}
We investigate the weight distribution of random binary linear codes. For $0<\lambda<1$ and $n\to\infty$ pick uniformly at random $\lambda n$ vectors in $\mathbb{F}_2^n$ and let $C \le \mathbb{F}_2^n$ be the orthogonal complement of their span. Given $0<\gamma<1/2$ with $0< \lambda < h(\gamma)$ let $X$ be the random variable that counts the number of words in $C$ of Hamming weight $\gamma n$. In this paper we determine the asymptotics of the moments of $X$ of all orders $o(\frac{n}{\log n})$.
\end{abstract}

\section{Introduction}
\label{sec:Intro}

Random linear codes play a major role in the theory of error correcting codes, and are also important in other areas such as information theory, theoretical computer science and cryptography \cite{McE78, Reg09, BMvT78, BJMM12}. Nevertheless, not much seems to be known about their properties. As already shown in Shannon's foundational paper \cite{Sha48}, random linear codes occupy a particularly prominent position in coding theory, being in some sense the best error correcting codes. The present paper is motivated by the contrast between the importance of random codes and the lack of our understanding. Our main aim is to improve our comprehension of the weight distribution of random binary linear codes.

The two most basic parameters of a code $C \subseteq \bits{n}$ are its {\em rate} $R = \frac{\log_2 |C|}{n}$ and its {\em relative distance} $\delta = \frac{\min \{\|x-y\| \;\mid\; x,y\in C\; x\ne y\}}{n}$, where $\|\cdot\|$ is the Hamming norm. Clearly, the rate of a $d$-dimensional linear code $C \subseteq \bits{n}$ is $\frac{d}{n}$, and its relative distance is $\frac{\min \{\|w\| \;\mid\; w\in C\; w\ne 0\}}{n}$.

It is a major challenge to understand the trade-off between rate and distance for linear as well as general codes. Concretely, given $0<\delta<\frac{1}{2}$, we wish to know the value of $\limsup R(C)$ where the the $\limsup$ is taken over all binary codes of relative distance at least $\delta$. The Gilbert-Varshamov lower bound (e.g., \cite{Gur10}, p. 82) states that $R\ge 1-h(\delta)$ is achievable, where $h$ is the binary entropy function. Despite many attempts, this bound has not been improved, nor shown to be tight, through over 60 years of intense investigations. The best known upper bound, from 1977, is due to McEliece, Rodemich, Rumsey and Welch \cite{MRRW77}. An alternative proof of this bound, using harmonic analysis on $\mathbb{F}_2^n$, was given in 2007 by Navon and Samorodnitsky \cite{NS07}. Note that this is an upper bound on {\em all} codes. It remains a major open question whether there are stricter upper bounds that apply only to {\em linear} codes. 

This paper concerns the weight distribution of random linear codes. Concretely, fix two rational numbers $0 <\gamma < \frac 12$ and $0<\lambda<h(\gamma)$, and let $n\in \mathbb N$ be such that $\lambda n$ is an integer and $\gamma n$ is an even integerֿ\footnote{For other ranges of the problem - See our Discussion.}. Let $C = C_{n,\lambda}$ be a random subspace of $\bits{n}$ that is defined via $C:=\{x\in\bits{n}|Kx=0\}$ where $K$ is a uniformly random $\lambda n\times n$ binary matrix. Clearly $\dim C \ge (1-\lambda)n$, and with very high probability equality holds. Denote $L = L_{n,\gamma} = \{x \in \bits{n}\mid \|x\|=\gamma n\}$. We investigate the distribution of the random variable $X = X_{n,\gamma,\lambda}= |C\cap L|$ for fixed $\gamma$ and $\lambda$ when $n\to \infty$. Clearly $\E(X) = N^{-\lambda}\binom{n}{\gamma n} = N^{h(\gamma)-\lambda+o(1)}$, where $N=2^n$. This follows since every $x\in L_{n,\gamma}$ belongs to a random $C_{n,\lambda}$ with probability $N^{-\lambda}$. Also, $\lim_{n\to\infty}\E(X) = \infty$, since, by assumption $\lambda<h(\gamma)$. 

It is instructive to compare what happens if rather than a random linear code $C$, we consider a uniformly random subset $C'\subset \bits{n}$, where every vector in $\bits{n}$ independently belongs to $C'$ with probability $N^{-\lambda}$. In analogy, we define $X' = |C'\cap L|$, and the distribution of $X'$ is clearly approximately normal. It would not be unreasonable to guess that $X$ behaves similarly, and in particular that its limit distribution, as $n\to\infty$ is normal.
However, as we show, the code's linear structure has a rather strong effect. Indeed $X$ does not converge to a normal random variable, and moreover, only a few of its central moments are bounded.

\subsection{Rough outline of the proof}

We seek to approximate the central $k$-th moments of $X$ for all $k\le o(\frac n{\log n})$. In Section \ref{sec:ReductionToEnumeration} we reduce this question to an enumeration problem that we describe next. We say that a linear subspace $U\le \bits {k}$ is {\em robust} if every system of linear equations that defines it involves all $k$ coordinates. Given a subspace $U\le \bits {k}$, let $T_U$ be the set of all $k\times n$ binary matrices where every column is a vector in $U$ and every row has weight $\gamma n$. We show that
\begin{equation} \label{eqn:MomentIntro}
\E\left((X-\E(X))^k\right) = \Theta\left(\sum_{d=0}^{k-1}N^{-\lambda d} \sum_{\substack{V\le \bits{k}\\\dim(V)=d\\V\text{ robust}}}|T_V|\right).
\end{equation}
The main challenge is to estimate the internal sum, but understanding the interaction with the outer sum is nontrivial either. The reason that we can resolve this problem is that the main contributors to the internal sum are fairly easy to describe. As it turns out, this yields a satisfactory answer even though we provide a rather crude upper bound on all the other terms. 

A key player in this story is the space of even-weight vectors $V=\evens{k} \le \bits{k}$.
In Section \ref{sec:MatrixEnumerationEvens} we solve this enumeration problem for this space, and show that $|T_{\evens{k}}|  \approx N^{F(k,\gamma)}$ up to a factor that is polynomial in $n$ and exponential in $k$. Here $F(k,\gamma)$ is the entropy of a certain entropy maximizing probability distribution on $\evens{k}$. In our proof, we generate a $k\times n$ matrix $A$ with i.i.d.\ columns sampled from this distribution, and compute the probability that $A\in T_{\evens{k}}$. The function $F$ has the explicit description
$$F(k,\gamma) = \min_{1>x>0} \log_2\left((1+x)^k+(1-x)^k\right)-k\gamma\log_2 x - 1$$
and its asymptotic behavior for large $k$ is:
$$F(k,\gamma) = kh(\gamma)-1 + O((1-2\gamma)^k).$$

In Section \ref{sec:MatrixEnumerationGeneral} we use the result of Section \ref{sec:MatrixEnumerationEvens} to bound $|T_U|$ for a general robust $U\le \bits{k}$. Consider a robust space $U\le \bits{k}$ of the form $\bigoplus_{i=1}^c \evens{m_i}$, where $\sum m_i = k$. Clearly, $|T_U| = \prod_{i=1}^c{|T_{\evens{m_i}}|}\approx N^{\sum_{i=1}^c F(m_i,\gamma)}$. Hence, finding a space of this form of given dimension that maximizes $|T_U|$ translates into a question about the dependence of $F(m,\gamma)$ on $m$. We show (Lemma \ref{lem:FConvexInk}) that this function is convex, so that the optimum is attained at $m_1 = k-2c+2$ and $m_2=m_3=\ldots=m_c=2$. 

We show that if $U\le \bits{k}$ is robust and not a product of {\em Even} spaces, then there is some $V$ of this form and of the same dimension with $|T_V|\ge |T_U|$. We reduce the proof of this claim (Equation \ref{eqn:T_VBound}) to the analysis of $m\times n$ matrices where every row weighs $\gamma n$, the first $\delta n$ columns have odd weight and the last $(1-\delta) n$ ones are even. A key step in the proof (Lemma \ref{lem:FMonotoneInDelta}) shows that the number of such matrices decreases with $\delta$.

Finally, in Section \ref{sec:DeriveMoments}, the results of the previous sections are put together to find the dominating terms of Equation \ref{eqn:MomentIntro}, yielding the moments of $X$. For even $k$, we show that the dominating terms are those corresponding to either $d=\frac k2$ or $d = k-1$, and respectively, to the subspaces $\bigoplus_{i=1}^{k/2}\evens{2}$ or $\evens{k}$. More precisely, there exists some $k_0(\gamma, \delta)$ such that the former dominates when $k \le k_0$ and the latter when $k > k_0$. The behavior of odd order moments is similar, although slightly more complicated to state.

Theorems \ref{thm:mainEven} and \ref{thm:mainOdd} in Section \ref{sec:DeriveMoments}, deal with even and odd order moments, respectively. Theorem \ref{thm:mainNormalized} gives the central moments of the normalized variable $\frac {X}{\sqrt{\Var(X)}}$.
\begin{restatable}{theorem}{MainNormalized} \label{thm:mainNormalized}
Fix $\gamma <\frac 12$ and $0<\lambda<h(\gamma)$ and let 
$$k_0 = \min \left\{m\mid F(m,\gamma)-(m-1)\lambda > \frac m2(h(\gamma)-\lambda) \right\}.$$
Then, for $2 \le k \le o(\frac n{\log n})$,

$$
\frac{\E(X-\E(X)^k)}{\Var(X)^{\frac k2}} = \begin{cases}
o(1) &\text{if }k\text{ is odd and }< k_0\\
(1+o(1))\cdot k!! &\text{if }k\text{ is even and }< k_0\\
N^{F(m,\gamma)-\frac k2h(\gamma)-(\frac k2-1)\lambda -\frac{k\log n}{4n} +O(\frac kn)}&\text{if }k\ge k_0\\
\end{cases}
$$
\end{restatable}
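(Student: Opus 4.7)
The plan is to take Eq.~\ref{eqn:MomentIntro} from Section~\ref{sec:ReductionToEnumeration} and identify its dominant terms in each regime. Parameterize the inner sum over robust $V \le \bits{k}$ of dimension $d$ by $c := k - d$; up to $\mathrm{poly}(n) \cdot \exp(O(k))$ factors, the results of Section~\ref{sec:MatrixEnumerationGeneral} reduce the problem to the product-form case $V = \bigoplus_{i=1}^c \evens{m_i}$ with $\sum m_i = k$. Since $|T_V| \sim N^{\sum_i F(m_i, \gamma)}$ for such a $V$, and $F$ is convex in its first argument by Lemma~\ref{lem:FConvexInk}, the maximum of $\sum_i F(m_i, \gamma)$ over partitions of $k$ into $c$ parts (each $\ge 2$) is attained at the maximally unbalanced partition $(k-2c+2, 2, \ldots, 2)$. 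Each dimension $d = k-c$ therefore contributes $N^{E(c)}$ times the number of product-form subspaces of that shape, where
$$E(c) := -\lambda(k-c) + F(k-2c+2, \gamma) + (c-1)\, h(\gamma),$$
using $F(2,\gamma) = h(\gamma)$.

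Next I would observe that $E(c)$ is itself convex in $c$ (inheriting convexity from $F$), so its maximum over $c \in \{1, \ldots, \lfloor k/2 \rfloor\}$ is attained at an endpoint. The endpoint $c = 1$ corresponds to the unique subspace $V = \evens{k}$, with $E(1) = F(k, \gamma) - (k-1)\lambda$. The endpoint $c = k/2$ (when $k$ is even) corresponds to $V = \bigoplus_{i=1}^{k/2}\evens{2}$, of which there are $(k-1)!!$ (one per perfect matching of $[k]$), with $E(k/2) = (k/2)(h(\gamma) - \lambda)$; this matches $\log_N \Var(X)^{k/2}$ to leading order, since $\Var(X) = (1+o(1))\,N^{-\lambda}\binom{n}{\gamma n}$ from a direct second-moment calculation. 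The definition of $k_0$ is precisely the threshold $E(1) = E(k/2)$: for $k < k_0$ the maximum is at $c = k/2$, and for $k \ge k_0$ at $c = 1$.

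For even $k < k_0$, the leading term is $(k-1)!! \cdot N^{-\lambda k/2} \binom{n}{\gamma n}^{k/2} = (1+o(1))(k-1)!! \cdot \Var(X)^{k/2}$, which after normalization yields the theorem's $k!!$ factor; convexity keeps intermediate $c$ strictly subleading. For odd $k < k_0$ the ideal $c = k/2$ is unavailable, and both feasible endpoints — $c = 1$ (subleading by $k < k_0$) and $c = (k-1)/2$ with one $\evens{3}$ block (subleading because $k_0 > 3$, i.e.\ $F(3,\gamma) - (3/2)h(\gamma) - \lambda/2 \le 0$) — combined with the $n^{-k/4}$ correction from $\Var(X)^{k/2}$ yield $o(1)$. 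For $k \ge k_0$, the dominant contribution is the single $V = \evens{k}$ term $N^{-\lambda(k-1)} |T_{\evens{k}}|$; inserting $|T_{\evens{k}}| = N^{F(k,\gamma)} \cdot n^{-k/2 + O(1)}$ from Section~\ref{sec:MatrixEnumerationEvens} (obtained by applying a local CLT to the sum of $n$ i.i.d.\ columns drawn from the maximum-entropy distribution on $\evens{k}$) together with $\Var(X)^{k/2} = N^{(k/2)(h-\lambda)} \cdot n^{-k/4 + O(1)}$ produces the claimed exponent. The main technical obstacle is uniform control of the $\mathrm{poly}(n)\cdot\exp(O(k))$ slack accumulated in Sections~\ref{sec:MatrixEnumerationEvens} and~\ref{sec:MatrixEnumerationGeneral}: both the polynomial and exponential factors must be absorbed into the single $O(k/n)$ error in the final exponent, which is precisely what forces the restriction $k \le o(n/\log n)$.
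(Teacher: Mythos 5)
Your proposal is correct and is essentially the paper's own proof: the same reduction to Equation \ref{eqn:k'thMomentAfterMoebius}, the same use of the entropy bounds of Sections \ref{sec:MatrixEnumerationEvens}--\ref{sec:MatrixEnumerationGeneral} together with convexity of $F$ (Lemma \ref{lem:FConvexInk} and Proposition \ref{prop:F12Bound}) to reduce to the unbalanced partition, and the same endpoint comparison between the matching subspaces at $d=\frac k2$ and $\evens{k}$ at $d=k-1$, with $k_0$ as the crossover and the $\poly(n)\cdot 2^{O(k)}$ slack forcing $k\le o(\frac n{\log n})$. The only loose ends are minor: your parameterization $c\in\{1,\ldots,\lfloor k/2\rfloor\}$ silently drops the robust subspaces of dimension $d<\frac k2$, which the paper dismisses with the trivial bound of Equation \ref{eqn:G_dBoundSmalld}, and the possible ties between the exponents at $d=\frac k2$ and $d=k-1$ (resolved in the paper by the polynomial factors) are only acknowledged in your odd-$k$ case.
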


We call the reader's attention to the following interesting point on which we elaborate below. For fixed $\gamma$ and $\lambda$ there is a bounded number of moments for which our distribution behaves as if it were normal, but from that index on its linear structure starts to dominate the picture and the moments become unbounded. (See Figure \ref{fig:k0Ranges}).

\begin{figure}
\centering
\includegraphics[width=0.95\textwidth]{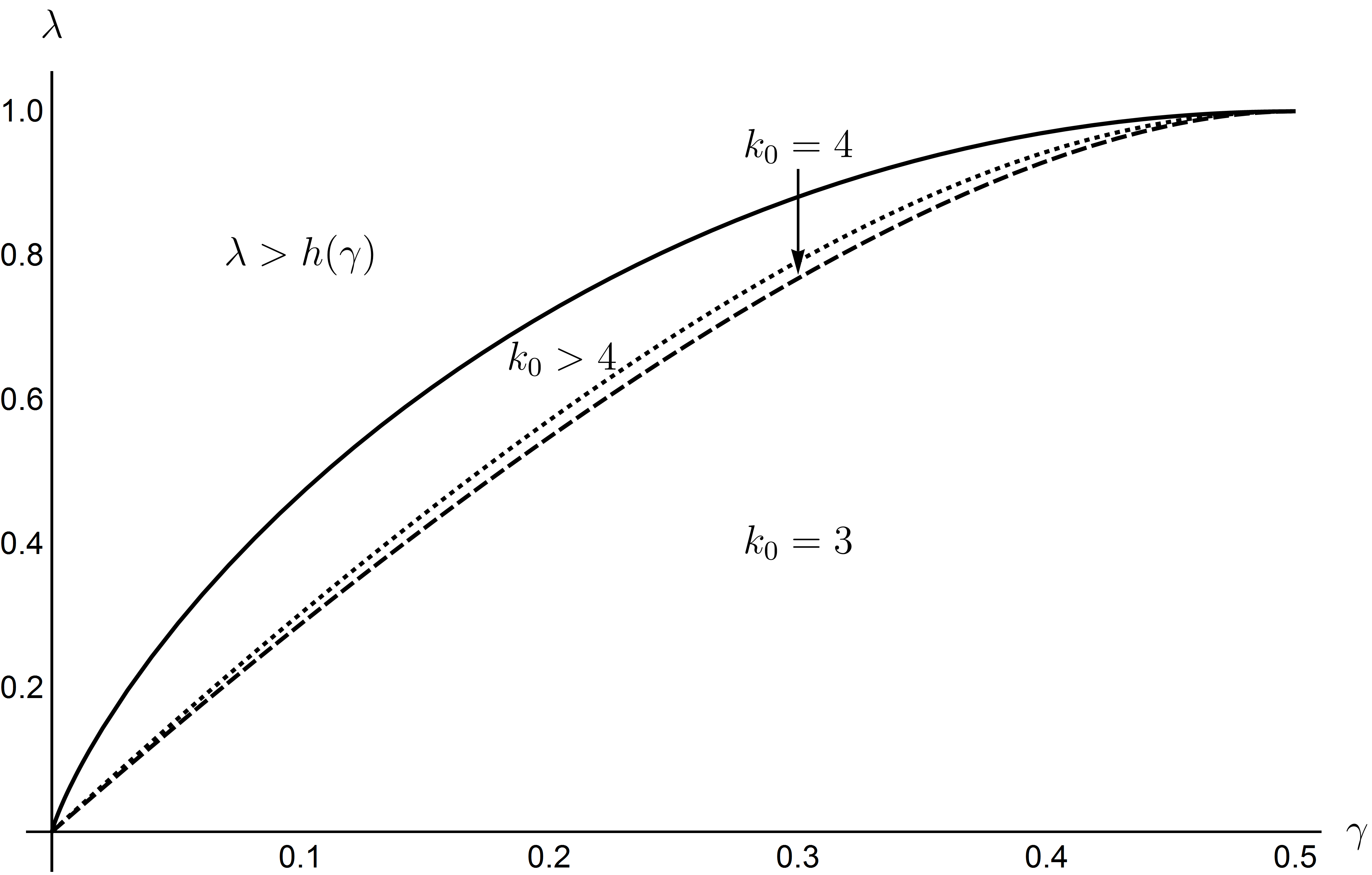}
\caption{\label{fig:k0Ranges}Illustration for Theorem \ref{thm:mainNormalized}. For $k < k_0= k_0(\gamma,\lambda)$ the $k$-th moment of $X$ is that of a normal distribution. The relevant range $\lambda < h(\gamma)$ is below the solid line. Note that $k_0=3$ for much of the parameters range.}
\end{figure}

\subsection{Preliminaries} 

{\bf General:}
Unless stated otherwise, all logarithms here are to base $2$.\\
Our default is that an asymptotic statement refers to $n\to \infty$, while the parameters $\gamma$ and $\lambda$ take fixed arbitrary values within their respective domains. Other parameters such as $k$ may or may not depend on $n$.\\
{\bf Entropy:} We use the standard notation $h(t)=-t\cdot\log t-(1-t)\cdot\log(1-t)$. Entropy and conditional entropy are always binary.\\
{\bf Linear algebra:} $U\le V$ means that $U$ is a linear subspace of the vector space $V$. The {\em weight}, $\|u\|$ of a vector $u\in \bits{n}$ is the number of its $1$ coordinates. Accordingly we call $u$ even or odd. Likewise, the weight $\|A\|$ of a binary matrix $A$, is the number of its $1$ entries. \\
The sets of even and odd vectors in $\bits{n}$ are denoted by $\evens{n}$ and $\odds{n}$.\\
The $i$-th row of a matrix $A$ is denoted by $A_i$. If $I\subseteq[k]$ then $A_I$ is the sub-matrix consisting of the rows $\{A_i \mid i\in I\}$. Also $v_I$ is the restriction of the vector $v$ to the coordinates in $I$.\\
For a subspace $U\le \bits{k}$ and $I\subset [k]$ we denote by $U_I$ the projection of $U$ to the coordinates in $I$, i.e., $U_I = \{u_I\mid u\in U\}$, and we use the shorthand $d_I(U) = \dim U_I$, and $d(U)=\dim U$. 

\section{From moments to enumeration.} \label{sec:ReductionToEnumeration}
To recap: $C=C_{n,\lambda}$ is a random linear subspace of $\bits{n}$, and $L=L_{n,\gamma}$ is the $\gamma n$-th layer of $\bits{n}$. We fix $0<\gamma<1$, $0<\lambda<h(\gamma)$, so that $\lambda n$ is an integer and $\gamma n$ is an even integer, and we start to investigate the moments of $X = |C\cap L|$, as $n\to\infty$.

The probability that $C$ contains a given subset of $\bits{n}$ depends only on its linear dimension:

\begin{proposition} \label{prop:CContainsSmallSet}
If $Y\subseteq \bits{n}$ has dimension $\dim(Y)=d$, then $\Pr(Y\subseteq C) = N^{-\lambda d}$.
\end{proposition}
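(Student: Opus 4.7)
The plan is a short direct calculation that exploits linearity and the independence of the rows of $K$. First I would note that the event $Y \subseteq C$ depends only on $\spn(Y)$, since $C$ is a subspace: indeed $Y \subseteq C$ iff $\spn(Y) \subseteq C$. So I may assume without loss of generality that $Y$ is itself a linear subspace of dimension $d$.

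Next I would pick a basis $b_1,\dots,b_d$ of $Y$ and assemble these as the columns of an $n\times d$ binary matrix $B$ of rank $d$. The condition $Y \subseteq C$ is then equivalent to the matrix equation $KB = 0$.

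Now I would use the fact that the rows $K_1,\dots,K_{\lambda n}$ of $K$ are mutually independent uniform elements of $\bits{n}$. For a fixed row $K_i$, the equation $K_i B = 0$ asserts that $K_i$ lies in the left null space of $B$, which is a subspace of $\bits{n}$ of dimension $n - \rank(B) = n-d$. Hence
\[
\Pr(K_i B = 0) \;=\; \frac{2^{n-d}}{2^n} \;=\; 2^{-d}.
\]
Multiplying over the $\lambda n$ independent rows yields
\[
\Pr(Y \subseteq C) \;=\; \Pr(KB = 0) \;=\; \bigl(2^{-d}\bigr)^{\lambda n} \;=\; 2^{-\lambda n d} \;=\; N^{-\lambda d},
\]
which is the claimed identity. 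There is no real obstacle here; the only thing to check carefully is that passing from $Y$ to $\spn(Y)$ is legitimate (it is, by linearity of $C$) and that $B$ may be chosen of full column rank (which is automatic from a basis).
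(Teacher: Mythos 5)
Your proof is correct and is essentially the paper's argument: each of the $\lambda n$ independent uniform rows of $K$ is orthogonal to (the span of) $Y$ with probability $2^{-d}$, giving $2^{-\lambda n d}=N^{-\lambda d}$ overall. The extra care you take with passing to $\spn(Y)$ and with the full-rank matrix $B$ is fine but not a departure from the paper's route.
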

\begin{proof}
As mentioned, we think of $C$ as the kernel of a uniform random $\lambda n\times n$ binary matrix $K$, so $Y \subseteq C$ iff every row of $K$ is orthogonal to $Y$. The probability of this event is $2^{-d}$ for a given row, and $2^{-\lambda n d}=N^{-\lambda d}$ for all rows together.
\end{proof}

\subsection{Interpreting the central moments of $X$}
We turn to express $X$ and its moments in terms of indicator random variables.

\begin{definition}
For a vector $u\in \bits{n}$, let $Y_u$ be the indicator for the event that $u\in C$. For a binary $k\times n$ matrix $A$ we let $Y_A$ be the indicator random variable for the event that every row of $A$ is in $C$.
\end{definition}

Proposition \ref{prop:CContainsSmallSet} plainly yields the first two central moments of $X$.
$$\E(X) = \sum_{u\in L}\E(Y_u) = |L|N^{-\lambda} = \binom{n}{\gamma n} N^{-\lambda} = N^{h(\gamma)-\lambda-\frac{\log n}{2n}+O(\frac 1 n)}.$$ 
Proposition \ref{prop:CContainsSmallSet} also implies that $\Cov(Y_u,Y_v) = 0$ for every $u\ne v\in L$. Hence,
$$\Var(X) = \sum_{u\in L}\Var(Y_u) = \binom{n}{\gamma n}N^{-\lambda}(1-N^{-\lambda}) = N^{h(\gamma)-\lambda-\frac{\log n}{2n}+O(\frac 1 n)}.
$$

In words, the first two moments of $X$ are not affected by the linearity of $C$. 

We now turn to higher order moments. Specifically we wish to compute the $k$-th central moment of $X$ for any $2<k \le o(\frac{n}{\log n})$.
 
We denote by $W_k=W_{k, \gamma}$ the set of binary $k\times n$ matrices in which every row has weight $\gamma n$. We also introduce
\begin{definition}
For a subspace $U\le \bits{k}$ we denote 
$$T_{U,n,\gamma} = T_U = \{A\in W_k \mid \im A \subseteq U\}$$
and
$$\overline T_{U,n,\gamma} = \overline T_U = \{A\in W_k \mid \im A = U\}.$$
\end{definition}

Let us expand the $k$-th central moment.
\begin{align*} \label{eqn:k'thMomentInitial}
\E\left((X-\E(X))^k\right) &= \E\left(\left(\sum_{u\in L}Y_u-\sum_{u\in L}\E(Y_u)\right)^k\right)\\&=\sum_{u_1,\ldots,u_k\in L} \sum_{I\subseteq [k]}\E\left(\prod_{i\in I}Y_{u_i}\right)\prod_{j\in [k]\setminus I}\left(-\E\left(Y_{u_j}\right)\right). \numberthis
\end{align*}
If $A$ is the matrix with rows $u_1,\ldots,u_k$, then by Proposition \ref{prop:CContainsSmallSet} this equals
\[\sum_{A\in W_{k}} \sum_{I\subseteq [k]}(-1)^{k-|I|}\cdot N^{-\lambda\cdot (\rank A_I-k+|I|)}.
\]
We group the matrices $A\in W_k$ with the same image $U$ and rewrite the above as
\[
\sum_{U\le \bits{k}} |\overline T_U| \sum_{I\subseteq [k]} (-1)^{k-|I|}\cdot N^{-\lambda\cdot (d_I(U)-k+|I|)},
\]
which we restate as
\begin{equation}\label{eqn:k'thMoment}
\E\left((X-\E(X))^k\right) = \sum_{U\le\bits{k}}|\overline T_U|R_U,
\end{equation}
where for any $U\le \bits{k}$
\begin{equation}\label{eqn:R_UDefinition}
R_U = \sum_{I\subseteq [k]}(-1)^{k-|I|}\cdot N^{-\lambda\cdot (d_I(U)-k+|I|)}
\end{equation} 

We proceed as follows:
\begin{enumerate}
\item We recall the notion of a {\em robust} linear subspace of $\bits{k}$, and compute $R_U$ separately for robust and non-robust subspaces.
\item Using M\"obius inversion, we restate Equation \ref{eqn:k'thMoment} in terms of $|T_U|$ rather than $|\overline T_U|$.
\end{enumerate}

\subsubsection{Computing $R_U$}
It is revealing to consider our treatment of $X$ alongside a proof of the Central Limit Theorem (CLT) based on the moments method (e.g., \cite{Fil10}). In that proof, the $k$-th moment of a sum of random variables of expectation zero is expressed as a sum of expectations of degree-$k$ monomials, just as in our Equation \ref{eqn:k'thMomentInitial}. These monomials are then grouped according to the relations between their factors. In the CLT proof, it is assumed that each tuple's non-repeating factors are independent, so monomials are grouped according to their degree sequence. Here, and specifically in Equation \ref{eqn:k'thMoment}, we need a more refined analysis that accounts for the linear matroid that is defined by the monomial's factors. 

In the proof the the CLT there holds $\E(M) = 0$ for every monomial $M$ that contains a degree-$1$ factor $Y$. This follows, since $\E(Y)=0$ and the rest of the monomial is independent of $Y$. Something similar happens here too. If $u$ does not participate in any linear relation with the other factors in its monomial, then $Y_u$ can play a role analogous to that of $Y$. This intuition is captured by the following definition and proposition.

\begin{definition}
Let $U\le\bits{k}$ be a linear subspace. We say that its $i$-th coordinate is {\em sensitive} if $d_{[k]\setminus \{i\}}(U)=d(U)-1$. We denote by $\sen(U)$ the set of $U$'s sensitive coordinates. Also, if $\sen(U)=\emptyset$, we say that $U$ is {\em robust}.
\end{definition}
It is not hard to see that equivalently, robustness means that every $1$-co-dimensional coordinate-wise projection of $U$ has the same dimension as $U$. Yet another description is that every system of linear equations that defines $U$ must involve all coordinates.

\begin{proposition}\label{prop:R_UAsymptotics}
For $U\le \bits{k}$ it holds that
\begin{enumerate}
\item If $U$ is robust then $R_U = \Theta\left(N^{- d(U) \lambda }\right)$.
\item If $U$ is not robust then $R_U = 0$.
\end{enumerate}
\end{proposition}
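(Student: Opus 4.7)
The plan is to rewrite $R_U$ so that its dependence on $U$ is transparent, by routing the coordinate-projection dimensions $d_I(U)$ through the intersections $U \cap V_J$, where $V_J := \spn\{e_j : j \in J\} \le \bits k$ is the coordinate subspace indexed by $J \subseteq [k]$. Substituting $J := [k] \setminus I$ in Equation~\ref{eqn:R_UDefinition} and applying rank--nullity to the projection $U \to U_I$ (whose kernel is $U \cap V_J$) gives $d_I(U) = d(U) - \dim(U \cap V_J)$, whereupon the expression rearranges into the cleaner form
$$R_U \;=\; N^{-\lambda d(U)} \sum_{J \subseteq [k]} (-1)^{|J|}\, N^{-\lambda\bigl(|J| - \dim(U \cap V_J)\bigr)}.$$
Each summand has magnitude at most $1$, with equality exactly when $V_J \subseteq U$; the $J = \emptyset$ term always contributes $1$.

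Along the way I would record the useful equivalence that coordinate $i$ is sensitive in $U$ iff $e_i \in U$ (immediate, since $U \cap V_{\{i\}}$ is $1$-dimensional precisely when $e_i \in U$). With this, part~(2) becomes pure cancellation. Assuming $U$ is not robust, fix some $e_i \in U$ and pair each $J \not\ni i$ with $J \cup \{i\}$. Because $U \cap V_{J \cup \{i\}} = (U \cap V_J) \oplus \spn(e_i)$, the dimension rises by exactly one as $J$ grows to $J \cup \{i\}$, offsetting the rise in $|J|$. The two members of each pair thus carry identical $N$-exponents but opposite signs, and the whole sum vanishes.

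For part~(1), robustness means no $e_i$ lies in $U$, which in turn forbids $V_J \subseteq U$ for any nonempty $J$ (otherwise every $e_i$ with $i \in J$ would be in $U$). Hence $|J| - \dim(U \cap V_J) \ge 1$ for every $J \ne \emptyset$, and the corresponding summand has magnitude at most $N^{-\lambda}$. In the range $k \le o(n/\log n)$ one has $2^k = N^{o(1)} \ll N^{\lambda}$, so the at most $2^k - 1$ error terms together contribute $o(1)$ beside the $J = \emptyset$ main term $1$, giving $R_U = (1+o(1))\,N^{-\lambda d(U)} = \Theta(N^{-\lambda d(U)})$. The only real points of care are the sensitivity/basis-vector equivalence and the direct-sum bookkeeping in the pairing argument; everything else is routine inclusion--exclusion.
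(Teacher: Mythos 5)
Your proof is correct and is essentially the paper's argument in dual bookkeeping: via rank--nullity, $d_I(U)=d(U)-\dim(U\cap V_{[k]\setminus I})$ and ``coordinate $i$ is sensitive iff $e_i\in U$,'' so your $J=\emptyset$ term is the paper's dominant $I=[k]$ term, your bound $|J|-\dim(U\cap V_J)\ge 1$ for robust $U$ is the paper's $d_I\ge d-k+|I|+1$, your use of $2^k=N^{o(1)}\ll N^{\lambda}$ just makes explicit what the paper leaves implicit in the assumption $k\le o(\frac n{\log n})$, and your sign-reversing pairing $J\leftrightarrow J\cup\{i\}$ on a single $e_i\in U$ is the one-coordinate version of the paper's factorization over all sensitive coordinates. (Your rewritten form corresponds to the exponent $d_I(U)+k-|I|$, which is what the derivation preceding Equation \ref{eqn:R_UDefinition} actually produces and what the paper's own subsequent computation uses; the sign $d_I(U)-k+|I|$ in the displayed equation is a slip, so you and the paper are indeed estimating the same quantity.)
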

\begin{proof}
We use here the shorthand $d = d(U)$ and $d_I = d_I(U)$.

We start with the case of a robust $U$. Note that for every $I\subsetneq [k]$ there holds $d_I\ge d -k+|I|+1$. For let us carry out the projection as \mbox{$k-|I|$} steps of $1$-co-dimensional projections. At each step the dimension either stays or goes down by one. But since $U$ is robust, in the first step the dimension stays.

We claim that in the expression for $R_U$ in Equation \ref{eqn:R_UDefinition}, the term $N^{-\lambda d}$ that corresponds to $I=[k]$ dominates the rest of the sum. Indeed, each of the other $2^k-1$ summands is $\pm\Theta(N^{-\lambda(d+1)})$. Consequently, $R_U = \Theta(N^{-\lambda d})$.

Let us consider next a non-robust $U$. If $I$ is a set of sensitive coordinates and $J$ is a set of non-sensitive coordinates, then $d_{I\cup J}=|I|+d_{J}$. Consequently:
\begin{align*}R_U&=  \sum_{I\subseteq \sen(U)} \sum_{J\subseteq [k]\setminus\sen(U)}  (-1)^{k-|I|-|J|} N^{-\lambda (d_{I\cup J}+k-|I|-|J|)} \\ &=  \sum_{I\subseteq \sen(U)}\sum_{J\subseteq [k]\setminus\sen(U)}  (-1)^{k-|I|-|J|} N^{-\lambda (d_{J}+k-|J|)} \\
&= \left(\sum_{I\subseteq \sen(U)}(-1)^{|I|}\right)\left(\sum_{J\subseteq [k]\setminus\sen(U)}  (-1)^{k-|J|} N^{-\lambda (d_{J}+k-|J|)}\right) = 0.
\end{align*}
\end{proof}

\subsubsection{From $|\overline T_U|$ to $|T_U|$}

In order for Equation \ref{eqn:k'thMoment} to be expressed in terms of $|T_U|$ rather than $|\overline T_U|$ we can appeal to the M\"obius inversion formula for vector spaces over a finite field (e.g., \cite{Sta11}, Ch 3.10).
\begin{align*}
\E\left((X-\E(X))^k\right) &= \sum_{U\le\bits{k}}R_U \sum_{V\le U} (-1)^{d(U)-d(V)}\cdot2^{\binom{d(U)-d(V)}2}|T_V| \\
&=\sum_{V\le \bits{k}}|T_V|\sum_{V\le U\le \bits{k}}R_U (-1)^{d(U)-d(V)}\cdot2^{\binom{d(U)-d(V)}2}. 
\end{align*}

Grouping the $U$'s by their dimension $i=d(U)$, we express the above as
$$
\sum_{V\le \bits{k}}|T_V| (-1)^{d(V)}\sum_{i=d(V)}^k (-1)^{i}\cdot 2^{\binom{i-d(V)}2} \sum_{\substack{V\le U\le \bits{k}\\ d(U) = i}} R_U.
$$
By Proposition \ref{prop:R_UAsymptotics}, this sum can be further rewritten as
$$
\sum_{V\le \bits{k}}|T_V| (-1)^{d(V)}\sum_{i=d(V)}^k (-1)^{i}\cdot 2^{\binom{i-d(V)}2} \cdot N^{-\lambda i} \cdot Z_{i, V}
$$
where
$$
Z_{i, V}=|\{U \mid V\le U\le \bits{k} ~\wedge~ d(U)=i ~\wedge~ U\text{~is robust}\}|.
$$

Note that if $V$ is non-robust then every $U\ge V$ is also non-robust. Hence, the outer sum terms corresponding to non-robust $V$'s vanish. If $V$ is robust, we claim that the inner sum is dominated by the term $i=d(V)$ and that consequently 
\begin{equation}\label{eqn:k'thMomentAfterMoebius}
\E\left((X-\E(X))^k\right) = \Theta\left(\sum_{\substack{V\le \bits{k}\\V\text{ robust}}}|T_V|\cdot N^{-\lambda d(V)}\right).
\end{equation}

Indeed, $V$ is contained in at most $2^{2+(i-d(V))(k-i)}$ dimension-$i$ spaces, so the absolute value of the inner sum's $i$-term is at most
$$2^{\binom{i-d(V)}2 -\lambda n i + 2 +(i-d(V))(k-i)}=2^{2+(i-d(V))(k-\frac{i+d(V)+1}2)-i\lambda n}\le 2^{-i(\lambda n  +1 - k)+2}.$$	 

In order to proceed we need to estimate the cardinalities $|T_V|$. As we show in Sections \ref{sec:MatrixEnumerationEvens} and \ref{sec:MatrixEnumerationGeneral}, at least for large enough $k$, Equation {\ref{eqn:k'thMomentAfterMoebius}} is dominated by the term $V=\evens{k}$, the subspace of even-weight vectors.

\section  {The intersection of $\evens{k}$ and the $\gamma n$-th layer}
\label{sec:MatrixEnumerationEvens}
In this section we give tight estimates for $|T| = |T_{\evens{k},n,\gamma}|$. As usual we assume that $0 < \gamma < \frac 12$ and $\gamma n$ is an even integer. We need the following terminology:

\begin{definition}
Let $A_{k\times n}$ be a binary matrix.
\begin{itemize}
\item A row of $A$ is said to satisfy the {\em row condition} if it weighs $\gamma n$. If this holds for every row of $A$, we say that $A$ satisfies the row condition.
\item The {\em column condition} for $A$ is that every column be of even weight.
\item Recall that $T_{\evens{k},n,\gamma}$ is the set of $k\times n$ binary matrices satisfying both the row and the column conditions.
\end{itemize}
\end{definition}

To estimate $|T|$, we define a certain probability measure $\pi=\pi_{k,n,\gamma}$ on binary $k\times n$ matrices. Under this measure the probability of the event $T$ is not too small, viz., inverse polynomial in $n$ and exponentially small in $k$. We then estimate $|T|$ by using our bounds on this probability.

In this distribution $\pi$ columns are chosen independently according to a distribution $P=P_{k,\gamma}$ that is supported on $\evens{k}$, and is $S_k$-invariant. Naturally, we choose it so that for every $i$: 

\begin{equation} \label{eqn:pExpectation}
\Pr_{u\sim P}(u_i=1) = \gamma.
\end{equation}
We seek a distribution $P$ of largest possible entropy that satisfies these conditions. The intuition behind this choice has to do with the theory of exponential families (E.g., \cite{WJ08} Chapter 3) which provides a framework to describe and study maximum entropy distributions. However, we do not directly rely on this theory so that this paper remains self-contained.

Concretely, for some $1 > \alpha > 0$ and for every $u\in \evens{k}$ we define
\begin{equation} \label{eqn:pDefinition}
P(u) = \frac{\alpha ^{\|u\|}}{Z}
\end{equation}

Here $Z=Z(\alpha,k) = \sum_{u\in \evens{k}}\alpha^{\|u\|}$. We claim that there is a unique $1 > \alpha > 0$ for which Condition \ref{eqn:pExpectation} holds. First, note that 
$$Z = \sum_{w\text { is even}} \binom k w\alpha^w = \frac{(1+\alpha)^k+(1-\alpha)^k}2.$$
Also,
$$\Pr_{u\sim P}(u_i=1) = \sum_{w\text{ is even}}\frac{\binom{k-1}{w-1}\alpha^w}{Z}= \alpha\frac{(1+\alpha)^{k-1}-(1-\alpha)^{k-1}}{(1+\alpha)^k+(1-\alpha)^k}$$
so that Equation \ref{eqn:pExpectation} becomes
\begin{equation} \label{eqn:gamma(alpha)}
\alpha\frac{(1+\alpha)^{k-1}-(1-\alpha)^{k-1}}{(1+\alpha)^k+(1-\alpha)^k}=\gamma.
\end{equation}
Denote the left side of this expression by $\gamma(k,\alpha)$. 

\begin{proposition} \label{prop:gammaIncreases}
Let $k \ge 2$. In the range $0 < \alpha < 1$ the function $\gamma(k,\alpha)$  increases from $0$ to $\frac 12$.
\end{proposition}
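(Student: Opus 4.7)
The plan is to interpret $\gamma(k,\alpha)$ probabilistically and reduce monotonicity to the positivity of a variance. By the $S_k$-invariance of $P=P_\alpha$ and the computation preceding Equation~\ref{eqn:gamma(alpha)}, one sees that
$$\gamma(k,\alpha) \;=\; \Pr_{u\sim P_\alpha}[u_i=1] \;=\; \frac{1}{k}\,\E_{u\sim P_\alpha}\|u\|,$$
so it suffices to track how $\E_{P_\alpha}\|u\|$ depends on $\alpha$.

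For the boundary values I will inspect $P_\alpha$ directly. As $\alpha\to 0^+$ the mass of $P_\alpha$ concentrates on the zero vector, so $\gamma(k,\alpha)\to 0$. As $\alpha\to 1^-$, $P_\alpha$ tends to the uniform distribution on $\evens{k}$, under which (for $k\ge 2$) each coordinate is equally likely to be $0$ or $1$, giving $\gamma(k,\alpha)\to \tfrac12$. Either limit can instead be read off from the closed form on the left-hand side of Equation~\ref{eqn:gamma(alpha)}.

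For strict monotonicity I will compute $\frac{d}{d\alpha}\E_{P_\alpha}\|u\|$ by hand. Writing $g(\alpha)=\sum_{u\in\evens{k}}\|u\|\,\alpha^{\|u\|}$, and noting that $\alpha Z'(\alpha)=g(\alpha)$ and $\alpha g'(\alpha)=\sum_{u}\|u\|^2\alpha^{\|u\|}$, the quotient rule yields
$$\frac{d}{d\alpha}\,\frac{g(\alpha)}{Z(\alpha)} \;=\; \frac{1}{\alpha}\Bigl(\E_{P_\alpha}\|u\|^2-(\E_{P_\alpha}\|u\|)^2\Bigr) \;=\; \frac{\Var_{P_\alpha}(\|u\|)}{\alpha}.$$
Hence $\tfrac{d}{d\alpha}\gamma(k,\alpha)=\tfrac{1}{k\alpha}\Var_{P_\alpha}(\|u\|)$.

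The last step, which is the only place the hypothesis $k\ge 2$ enters, is to check that this variance is strictly positive on $0<\alpha<1$. This is immediate because $P_\alpha$ assigns positive mass to both the zero vector and to any weight-$2$ vector, so $\|u\|$ is not almost surely constant. No step here is difficult; the conceptual content of the argument is the identity $\alpha Z'(\alpha)=g(\alpha)$, which is what converts the derivative of the mean weight into a variance and hence delivers the sign.
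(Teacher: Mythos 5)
Your proof is correct, and it takes a genuinely different (and cleaner) route than the paper's. The paper differentiates the explicit rational expression for $\gamma(k,\alpha)$ directly, writes numerator and $Z$ as sums of binomial coefficients over even indices, and shows positivity of the derivative by symmetrizing the terms $i$ and $t-i$ so that the numerator becomes a sum weighted by $(2i-t)^2$. You instead exploit the probabilistic meaning of $\gamma(k,\alpha)$ as $\frac1k\E_{u\sim P_\alpha}\|u\|$ and the identity $\alpha Z'(\alpha)=g(\alpha)$ to identify $k\alpha\cdot\frac{\partial\gamma}{\partial\alpha}$ with $\Var_{P_\alpha}(\|u\|)$, whose strict positivity for $k\ge 2$ is immediate because $P_\alpha$ gives positive mass to the zero vector and to weight-$2$ vectors in $\evens{k}$. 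The two arguments are two faces of the same fact -- the paper's $(2i-t)^2$ sum is exactly the combinatorial unwinding of your variance -- but yours is shorter, makes transparent where $k\ge 2$ is used, and is the standard exponential-family identity (mean-parameter derivative equals variance) that the paper alludes to in Section \ref{sec:MatrixEnumerationEvens} but deliberately avoids invoking. A further small plus: you verify the boundary behavior, $\gamma\to 0$ as $\alpha\to 0^+$ and $\gamma\to\frac 12$ as $\alpha\to 1^-$, which the statement asserts but the paper's proof leaves implicit, checking only monotonicity.
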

\begin{proof}
In the following, the sums are over even $i$, $j$ and $t$:
$$
\frac{\partial \gamma(k,\alpha)}{\partial \alpha} = \frac{\left(\sum_{i}i\binom{k-1}{i-1}\alpha^{i-1}\right)\left(\sum_{j}\binom kj \alpha^{j}\right)-\left(\sum_{i}\binom{k-1}{i-1}\alpha^{i}\right)\left(\sum_{j}j\binom kj \alpha^{j-1}\right)}{Z^2}. 
$$
Denoting $t = j+i$, the above equals
$$
\frac{\sum_{t} \alpha^t \sum_{i}(2i-t)\binom {k-1}{i-1}\binom k{t-i}}{\alpha Z^2} = \frac{\sum_{t} \alpha^t \sum_{i}(2i-t)i\binom {k}i\binom k{t-i}}
{k\alpha Z^2}.
$$
Grouping the $i$ and $t-i$ terms of the inner sum yields
$$
\frac{\sum_{t} \alpha^t \sum_{i}(2i-t)^2\binom {k}i\binom k{t-i}} {2k\alpha Z^2},
$$
which is clearly positive.
\end{proof}

It follows that the function $\gamma=\gamma(k,\alpha)$ has an inverse with respect to $\alpha$, which we denote by $\alpha=\alpha(k,\gamma)$.

\begin{proposition}\label{prop:alphaAsymptotics}
$$\alpha(k,\gamma) = \frac{\gamma}{1-\gamma} + O((1-2\gamma)^k)$$
for every fixed $\gamma \in (0,\frac 12)$ and ${k\to \infty}$.
\end{proposition}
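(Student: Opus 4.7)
The plan is to reparametrize, plug in the proposed limit, and invert using monotonicity. I would first introduce $r := \frac{1-\alpha}{1+\alpha}\in(0,1)$ and divide the numerator and denominator of $\gamma(k,\alpha)$ by $(1+\alpha)^{k}$ to obtain the clean factorization
$$\gamma(k,\alpha)\;=\;\frac{\alpha}{1+\alpha}\cdot\frac{1-r^{k-1}}{1+r^{k}}.$$
Setting $\alpha_0:=\gamma/(1-\gamma)$, one checks $\alpha_0/(1+\alpha_0)=\gamma$, and the associated $r$-value is exactly $1-2\gamma$, so substituting $\alpha=\alpha_0$ yields
$$\gamma(k,\alpha_0)\;=\;\gamma\cdot\frac{1-(1-2\gamma)^{k-1}}{1+(1-2\gamma)^{k}}\;=\;\gamma+O\bigl((1-2\gamma)^{k}\bigr).$$
Thus $\alpha_0$ is an approximate solution of the defining equation \ref{eqn:gamma(alpha)} with an error that is exponentially small in $k$.

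Next I would convert this into a bound on $|\alpha(k,\gamma)-\alpha_0|$ via the mean value theorem. Since $\gamma(k,\cdot)$ converges uniformly on compact subsets of $(0,1)$ to the limit $\alpha\mapsto\alpha/(1+\alpha)$, whose unique preimage of $\gamma$ is $\alpha_0$, and since $\gamma(k,\cdot)$ is a strictly increasing bijection onto $(0,1/2)$ by Proposition \ref{prop:gammaIncreases}, the inverse $\alpha(k,\gamma)$ must converge to $\alpha_0$. In particular, for all sufficiently large $k$ both $\alpha(k,\gamma)$ and $\alpha_0$ lie in a fixed closed interval $J\subset(0,1)$ around $\alpha_0$, on which $r$ is uniformly bounded by some $\rho<1$.

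On $J$ the correction factor $(1-r^{k-1})/(1+r^{k})$ equals $1+O(\rho^{k})$ with an $\alpha$-derivative of size $O(\rho^{k})$, so $\partial_\alpha\gamma(k,\alpha)= \tfrac{1}{(1+\alpha)^{2}}+O(\rho^{k})$ is bounded below by a positive constant $c>0$ for all large $k$. Applying the mean value theorem on $J$ between $\alpha_0$ and $\alpha(k,\gamma)$ then gives
$$\bigl|\alpha(k,\gamma)-\alpha_0\bigr|\;\le\;c^{-1}\bigl|\gamma(k,\alpha_0)-\gamma\bigr|\;=\;O\bigl((1-2\gamma)^{k}\bigr),$$
which is the claim. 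The only mildly delicate point is the uniform positive lower bound on $\partial_\alpha\gamma(k,\cdot)$ on $J$ for all large $k$; this is precisely why I isolate the factor $\alpha/(1+\alpha)$ before differentiating, so that the remaining factor is exponentially flat in $\alpha$.
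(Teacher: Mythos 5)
Your argument is correct, but it reaches the conclusion by a different mechanism than the paper. The shared first step is the exact evaluation at $\alpha_0=\gamma/(1-\gamma)$: your identity $\gamma(k,\alpha_0)=\gamma\cdot\frac{1-(1-2\gamma)^{k-1}}{1+(1-2\gamma)^k}$ is precisely the computation the paper uses for its lower bound. From there the paper stays entirely algebraic: it guesses an explicit second point $\frac{\gamma_0+\epsilon}{1-\gamma_0}$ with $\epsilon=2\gamma_0\frac{(1-2\gamma_0)^{k-1}}{1-(1-2\gamma_0)^{k-1}}$, verifies by a direct inequality that $\gamma(k,\cdot)$ exceeds $\gamma_0$ there, and then invokes only the monotonicity of Proposition \ref{prop:gammaIncreases} to trap $\alpha(k,\gamma)$ in an interval of exponentially small length — no derivatives, no localization, and fully explicit constants. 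You instead run a stability argument for the inverse: uniform convergence of $\gamma(k,\cdot)$ to $\alpha\mapsto\alpha/(1+\alpha)$ on compacta to localize $\alpha(k,\gamma)$ in a fixed neighborhood $J$ of $\alpha_0$, a uniform positive lower bound on $\partial_\alpha\gamma(k,\cdot)$ on $J$, and the mean value theorem to convert the evaluation error into $|\alpha(k,\gamma)-\alpha_0|=O((1-2\gamma)^k)$. This works, and it is arguably the more routine and more transferable scheme (it would apply verbatim to other perturbed defining equations); the price is the extra analytic bookkeeping you already flagged. One small imprecision: the $\alpha$-derivative of the factor $\frac{1-r^{k-1}}{1+r^k}$ is $O(k\rho^{k-2})$ rather than $O(\rho^k)$, since differentiating $r^{k-1}$ and $r^k$ produces a factor of $k$; this is still exponentially small, so your lower bound $\partial_\alpha\gamma(k,\alpha)\ge c>0$ on $J$ for large $k$, and hence the proof, is unaffected.
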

\begin{proof}
The proposition follows from the following inequality: 
$$\gamma\left(k,\frac{\gamma_0}{1-\gamma_0}\right) \le \gamma_0 \le \gamma\left(k,\frac{\gamma_0+\epsilon}{1-\gamma_0}\right)$$
where $\epsilon=2\gamma_0\cdot \frac{(1-2\gamma_0)^{k-1}}{1-(1-2\gamma_0)^{k-1}}$.

The lower bound is easily verified, since
$$
\gamma\left(k,\frac{\gamma_0}{1-\gamma_0}\right) = \gamma_0 \cdot\frac{1-(1-2\gamma_0)^{k-1}}{1+(1-2\gamma_0)^k}.
$$
For the upper bound, our claim,
$$
\gamma\left(k,\frac{\gamma_0+\epsilon}{1-\gamma_0}\right)= (\gamma_0+\epsilon)\frac{(1+\epsilon)^{k-1}-(1-2\gamma_0-\epsilon)^{k-1}}{(1+\epsilon)^k+(1-2\gamma_0-\epsilon)^k}\ge \gamma_0,
$$
is equivalent to
$$
(1+\epsilon)^{k-1}\epsilon \ge (2\gamma_0+\epsilon)(1-2\gamma_0-\epsilon)^{k-1}.
$$
To see this, note that the l.h.s.\ is $\ge \epsilon$, and the r.h.s.\ is $\le {(2\gamma_0+\epsilon)(1-2\gamma_0)^{k-1}}$. Finally, the latter two expressions are identical.

\end{proof}

We turn to compute the entropies of the distributions we have just defined:
$$h(\pi) = nh(P)$$
where
\begin{align*}
h(P) &= -\sum_{u\in\evens{k}}\frac{\alpha^{\|u\|}}{Z}\log\frac{\alpha^{\|u\|}}{Z}= \log Z\cdot\sum_{u\in \evens{k}} \frac{\alpha^{\|u\|}}{Z} -\sum_{u\in\evens{k}}\frac{\|u\|\alpha^{\|u\|}}{Z}\log\alpha\\
&= \log Z - \E_{u\sim P}(\|u\|)\log \alpha = \log Z - k\Pr_{u\sim P}(u_1=1)\log \alpha = \log Z -k\gamma\log \alpha.
\end{align*}
To sum up:
$$ 
h(\pi) = n(\log Z -k\gamma\log \alpha).
$$

We denote 
$$
F(k,\gamma) = \frac{h(\pi)}{n} = \log Z -k\gamma\log \alpha = \log((1+\alpha)^k+(1-\alpha)^k)-k\gamma\log\alpha-1.
$$
We next evaluate $\pi(A)$ for a matrix $A\in T$. Let $u_1,\ldots ,u_n$ be the columns of $A$. Then
$$
\pi(A) = \prod_{i=0}^n P(u_i) = \prod_{i=1}^{n} \frac{\alpha^{\|u_i\|}}{Z}= \frac{\alpha^{\|A\|}}{Z^{n}} = \frac{\alpha^{\gamma k n}}{Z^{n}} = 2^{-h(\pi)}.
$$
Since $\pi$ is constant on $T$, this yields an expression for $|T|$. Namely,
\begin{equation} 
|T| = \frac{\Pr_{A\sim \pi}(A\in T)}{\pi(A)} = \Pr_{A\sim \pi}(A\in T)\cdot 2^{h(\pi)}. \label{eqn:TFromEntropyAndProbability}
\end{equation}

This is complemented by the following Lemma.
\begin{restatable}{lemma}{lemTProbability}\label{lem:TProbability}
Fix $\gamma \in (0,\frac 12)$. Then, for every $k\ge 3$ and $n\in \mathbb N$, there holds
$$
\Pr_{A\sim \pi_{k,n,\gamma}}(A\in T) = n^{-\frac k2}\cdot 2^{\pm O(k)}.$$
\end{restatable}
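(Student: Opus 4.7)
Because $P$ is supported on $\evens{k}$, every column of $A\sim\pi$ has even weight with probability $1$, so the column condition in the definition of $T$ is automatic; the event $A\in T$ is just the event that all $k$ row sums equal $\gamma n$. Writing the i.i.d.\ columns as $u^{(1)},\ldots,u^{(n)}\sim P$ and $S=\sum_{j=1}^n u^{(j)}\in\mathbb Z^k$, relation (\ref{eqn:pExpectation}) gives $\E[S]=n\mu$ with $\mu=(\gamma,\ldots,\gamma)$, and the lemma becomes
\[
\Pr(S=n\mu)=n^{-k/2}\cdot 2^{\pm O(k)}.
\]
This is exactly a multivariate local central limit theorem (LCLT) at the mean, for a sum of $n$ i.i.d.\ integer vectors supported on the parity lattice $L=\{x\in\mathbb Z^k:\sum_i x_i\in 2\mathbb Z\}$, a sublattice of $\mathbb Z^k$ of index $2$.

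\textbf{Main term via Fourier inversion.} The plan is to evaluate $\Pr(S=n\mu)$ via the inversion formula
\[
\Pr(S=n\mu)=\frac{1}{(2\pi)^k}\int_{[-\pi,\pi]^k}\phi(\theta)^n\,e^{-in\mu\cdot\theta}\,d\theta,
\]
using the explicit characteristic function coming from the exponential form of $P$:
\[
\phi(\theta)=\frac{1}{2Z}\left(\prod_{j=1}^k(1+\alpha e^{i\theta_j})+\prod_{j=1}^k(1-\alpha e^{i\theta_j})\right).
\]
The modulus $|\phi|$ attains its maximum $1$ at exactly two points in the fundamental domain, $\theta=0$ and $\theta=\pi\mathbf{1}$, reflecting the index-$2$ parity structure of $L$. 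Around each, the Taylor expansion $\log\phi(\theta)=-\tfrac12\theta^\top\Sigma\theta+O(\|\theta\|^3)$ (with $\Sigma=\Cov_{u\sim P}(u)$) combined with Laplace's method produces a combined Gaussian main term of
\[
\frac{2}{(2\pi n)^{k/2}\sqrt{\det\Sigma}}\cdot(1+o(1)).
\]
The $S_k$-symmetry of $P$ forces $\Sigma=(a-b)I_k+bJ_k$, so $\det\Sigma=(a-b)^{k-1}(a+(k-1)b)$. Proposition~\ref{prop:alphaAsymptotics} (saying $\alpha$ is exponentially close in $k$ to $\gamma/(1-\gamma)$, so $P$ is extremely close to $\mathrm{Bern}(\gamma)^{\otimes k}$ restricted to $\evens{k}$) gives $a=\gamma(1-\gamma)+O((1-2\gamma)^k)$ and $b=O((1-2\gamma)^k)$, hence $\det\Sigma=(\gamma(1-\gamma))^k(1+o(1))=2^{-\Theta(k)}$. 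Combined with $(2\pi n)^{k/2}=n^{k/2}\cdot 2^{\Theta(k)}$, the main term is $n^{-k/2}\cdot 2^{\pm O(k)}$, as required.

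\textbf{Bulk estimate---the main obstacle.} To complete the proof one must show the integral contribution from $\theta$ outside small torus-neighborhoods of $\{0,\pi\mathbf{1}\}$ is negligible relative to the main term, \emph{uniformly in $k$}. The needed ingredient is a $k$-independent strict modulus gap $|\phi(\theta)|\le r(\gamma)<1$ on the bulk. Writing $\phi(\theta)=(\phi_0(\theta)+\phi_0(\theta+\pi\mathbf{1}))/(1+(1-2\beta)^k)$, where $\phi_0$ is the $\mathrm{Bern}(\beta)^{\otimes k}$ characteristic function and $\beta=\alpha/(1+\alpha)$, one computes $|\phi_0(\theta)|^2=\prod_j(1-4\beta(1-\beta)\sin^2(\theta_j/2))$, so the elementary inequality $\sum_j\sin^2(\theta_j/2)\ge\|\theta\|_{\mathrm{torus}}^2/\pi^2$ yields $|\phi_0(\theta)|\le e^{-c(\gamma)\eta_0^2}$ on the bulk, and analogously for $\phi_0(\theta+\pi\mathbf{1})$. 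The main technical difficulty is that the naive triangle-inequality bound $|\phi|\le|\phi_0(\theta)|+|\phi_0(\theta+\pi\mathbf{1})|$ may exceed $1$ for small $\eta_0$; a finer analysis of the cross term $\Re(\phi_0(\theta)\overline{\phi_0(\theta+\pi\mathbf{1})})$ is needed, exploiting that $\phi_0(\theta)$ and $\phi_0(\theta+\pi\mathbf{1})$ cannot simultaneously be close to $1$. Once the $k$-uniform gap is in place, the bulk integral is bounded by $r^n$, dominated by the main term whenever $k=o(n/\log n)$---the intended regime. Establishing this $k$-uniform modulus gap is the heart of the proof.
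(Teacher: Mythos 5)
Your proposal is a plan rather than a proof: the step you yourself flag as ``the heart of the proof''---a $k$-uniform bound on $|\phi(\theta)|$ away from $\{0,\pi\mathbf 1\}$, including the cross-term analysis needed because the triangle inequality applied to $\phi_0(\theta)+\phi_0(\theta+\pi\mathbf 1)$ can exceed $1$---is never carried out, so the bulk of the Fourier integral is not actually controlled. A second, quieter gap sits in the main term: you invoke Laplace's method and write the Gaussian contribution as $\frac{2}{(2\pi n)^{k/2}\sqrt{\det\Sigma}}(1+o(1))$, but here the dimension $k$ grows with $n$, and the error in the local expansion $\log\phi(\theta)=-\tfrac12\theta^\top\Sigma\theta+O(\|\theta\|^3)$ must be shown to cost only a multiplicative $2^{\pm O(k)}$ uniformly in $k$ (third-moment/Edgeworth-type corrections in dimension $k$ do not come for free); this is asserted, not proved. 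Finally, even if both points were settled, your route establishes the estimate only in the regime where the Gaussian term beats the bulk bound $r^n$, i.e.\ $k=O(n/\log n)$, whereas the lemma is stated (and proved in the paper) for \emph{every} $k\ge 3$ and $n\in\mathbb N$, with constants depending on $\gamma$ alone; for the paper's application this restriction would be tolerable, but it is a weaker statement.

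For contrast, the paper avoids Fourier analysis entirely. It exposes $A\sim\pi$ row by row: conditioned on the running parity vector $b^{i-1}$, the $i$-th row condition is the event that a sum of two independent binomials hits $\gamma n$, which by the multinomial-mode bound (Proposition \ref{prop:multinomialMode}) has probability $O(n^{-1/2})$, while the last three rows are handled jointly by a four-bin balls-in-bins argument giving $O(n^{-3/2})$; this yields the upper bound. The lower bound conditions on $\|b^i\|$ staying within $\sqrt n$ of $e_i n$ (with the right parity) and uses the binomial point-probability estimate (Proposition \ref{prop:BinomialSmallDeviation}), with Proposition \ref{prop:pAndeBounds} guaranteeing all relevant parameters are bounded away from $0$ and $1$ uniformly in $k$. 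That is what makes the $2^{\pm O(k)}$ dependence, valid for all $k$ and $n$, come out automatically---exactly the uniformity your sketch leaves open. If you want to pursue the LCLT route, the missing modulus-gap and the $k$-uniform Laplace error are the two estimates you must supply.
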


We will prove Lemma \ref{lem:TProbability} at the end of this section. Before doing so, we wish to explore its implications. Together with Equation \ref{eqn:TFromEntropyAndProbability}, Lemma \ref{lem:TProbability} allows us to conclude that
\begin{equation} \label{eqn:TAsymptotics}
|T| = N^{F(k,\gamma)-\frac{k\log n}{2n}\pm O(\frac kn)}
\end{equation}
if $k \ge 3$. 

For $k=2$, a matrix in $|T|$ is defined by its first row, so
$$
|T| = \binom n{\gamma n} = N^{h(\gamma) - \frac {\log n}{2n} + O(\frac 1n)}.$$

As we show later, $F(k,\gamma)$ has a linear (in $k$) asymptote. Consequently, the exponents in Equation \ref{eqn:TAsymptotics} are dominated by the $F(k,\gamma)$ term. Thus, to understand $|T|$'s behavior we need to investigate $F$, which is what we do next. 

\subsection{Basic properties of $F(k,\gamma)$}
We start with several simple observations about $F(k,\gamma)$. 
\begin{proposition}
For $\gamma \in (0,\frac 12)$ there holds $F(2,\gamma) = h(\gamma)$. Also, $F(k,\gamma) \le k-1$ for all $k \ge 2$.
\end{proposition}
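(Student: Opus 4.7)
The plan is to handle the two claims separately; both will follow from calculations already set up in the previous discussion, with no real new ingredients needed.

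For the identity $F(2,\gamma) = h(\gamma)$, the approach is to solve the defining equation $\gamma(k,\alpha)=\gamma$ explicitly in the special case $k=2$ and plug back into the formula for $F$. Concretely, for $k=2$ the expression for $\gamma(2,\alpha)$ simplifies dramatically: the even-weight restriction forces the two coordinates to agree, so $Z = 1+\alpha^2$ and $\gamma(2,\alpha) = \alpha^2/(1+\alpha^2)$. Inverting gives $\alpha^2 = \gamma/(1-\gamma)$, i.e. $\alpha = \sqrt{\gamma/(1-\gamma)}$. Substituting into
$$F(2,\gamma) = \log\bigl((1+\alpha)^2 + (1-\alpha)^2\bigr) - 2\gamma\log\alpha - 1 = \log(1+\alpha^2) - 2\gamma\log\alpha,$$
and expanding using $1+\alpha^2 = 1/(1-\gamma)$ and $\log\alpha = \tfrac12\log(\gamma/(1-\gamma))$, the right-hand side collapses to $-\gamma\log\gamma - (1-\gamma)\log(1-\gamma) = h(\gamma)$.

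For the bound $F(k,\gamma)\le k-1$, the key observation — already essentially made in the text — is that $F(k,\gamma) = h(P)$, namely $F(k,\gamma)$ is precisely the (binary) entropy of the distribution $P$ on $\evens{k}$ constructed above. Since $P$ is supported on $\evens{k}$, whose cardinality is $2^{k-1}$, the Shannon bound gives $h(P) \le \log |\evens{k}| = k-1$, which is the desired inequality. No sharpness argument is needed.

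I do not anticipate any real obstacle here: the first part is a one-line substitution once the relation $\alpha^2 = \gamma/(1-\gamma)$ is noted, and the second is immediate from the interpretation of $F$ as an entropy on a support of size $2^{k-1}$. The only point worth being careful about is checking that the additive $-1$ in the definition of $F$ (coming from the factor of $\tfrac12$ in $Z$) is correctly tracked in the $k=2$ computation, which it is.
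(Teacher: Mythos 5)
Your proposal is correct and follows essentially the same route as the paper: explicitly inverting $\gamma(2,\alpha)=\alpha^2/(1+\alpha^2)$ to get $\alpha^2=\gamma/(1-\gamma)$ and substituting into $F$, and then bounding $F(k,\gamma)=h(P)$ by the log of the support size $|\evens{k}|=2^{k-1}$. No issues.
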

\begin{proof}
For the first claim, note that $\gamma(2,\alpha) = \frac{\alpha^2}{1+\alpha^2}$ so $\alpha(2,\gamma) = \left(\frac{\gamma}{1-\gamma}\right)^{\frac 12}$. Hence 
$$F(2,\gamma) = \log Z - 2\gamma \log \alpha = \log (1+\alpha^2) - \gamma\log (\alpha^2) = h(\gamma).$$

The second claim holds since $F(k,\gamma) = h(P)$ is the binary entropy of a distribution with support size $2^{k-1}$. 
\end{proof}

Next we develop an efficient method to calculate $F$ to desirable accuracy. We recall (e.g., \cite{Cov06}, p. 26) the notion {\em cross entropy} of $D, E$, two discrete probability distributions $H(D,E):=-\sum_{i}D(i)\log E(i)$. Recall also that $H(D,E)\ge h(D)$ with equality if and only if $D=E$. We apply this to $P=P_{k,\gamma}$, with $\alpha = \alpha(k,\gamma)$ and to $Q$, a distribution defined similarly according to Equation \ref{eqn:pDefinition}, but with some $x$ in place of $\alpha$. Then 
\begin{align*} \label{eqn:CrossEntropy}
F(k,\gamma)&=h(P)\le H(P\mid Q)  = -\sum_uP(u)\log{Q(u)} = -\sum_u P(u)\log \frac{x^{\|u\|}}{Z_k(x)} \\
&=\log Z_k(x) - \sum_u P(u)\|u\| \cdot \log(x) = \log Z_k(x) - \E_{u\sim P}(\|u\|) \cdot\log(x) 
\\ & =\log Z_k(x) - \gamma k\log(x) \numberthis
\end{align*}
Denote the r.h.s.\ of Equation \ref{eqn:CrossEntropy} by $g(k,\gamma,x)$. It follows that for an integer $k \ge 2$ and $\gamma\in(0,\frac 12)$,
\begin{equation} \label{eqn:FByMinimization}
F(k,\gamma) = \min_{x\in(0,1)} g(k,\gamma,x) = \min_{x\in(0,\infty)} \log \left((1+x)^k+(1-x)^k\right)-\gamma k \log(x)-1.
\end{equation}
This minimum is attained at $x=\alpha(k,\gamma)$. Note that this expression allows us
to conveniently compute $F$ to desirable accuracy (see Figure \ref{fig:g}). Also, we take Equation \ref{eqn:FByMinimization} as a definition for $F(k,\gamma)$ for all real positive $k$. 

\begin{figure}
    \centering
    \begin{minipage}{0.47\textwidth}
        \centering
        \includegraphics[width=\textwidth]{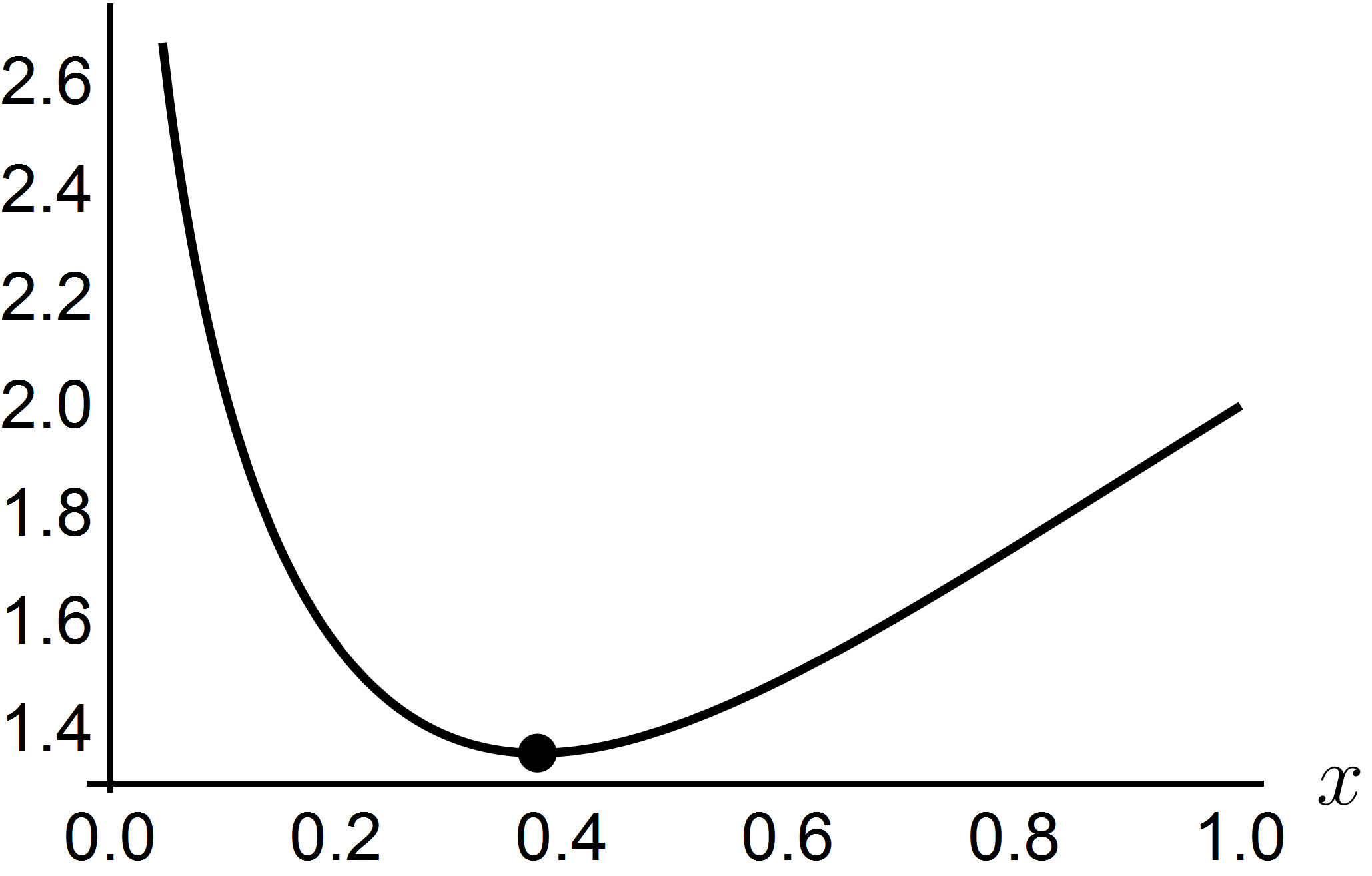}
        \caption{\label{fig:g}The function $g(3,\frac 15, x)$ and its minimum (see Equation \ref{eqn:FByMinimization}).}
    \end{minipage}\hfill
    \begin{minipage}{0.47\textwidth}
        \centering
        \includegraphics[width=\textwidth]{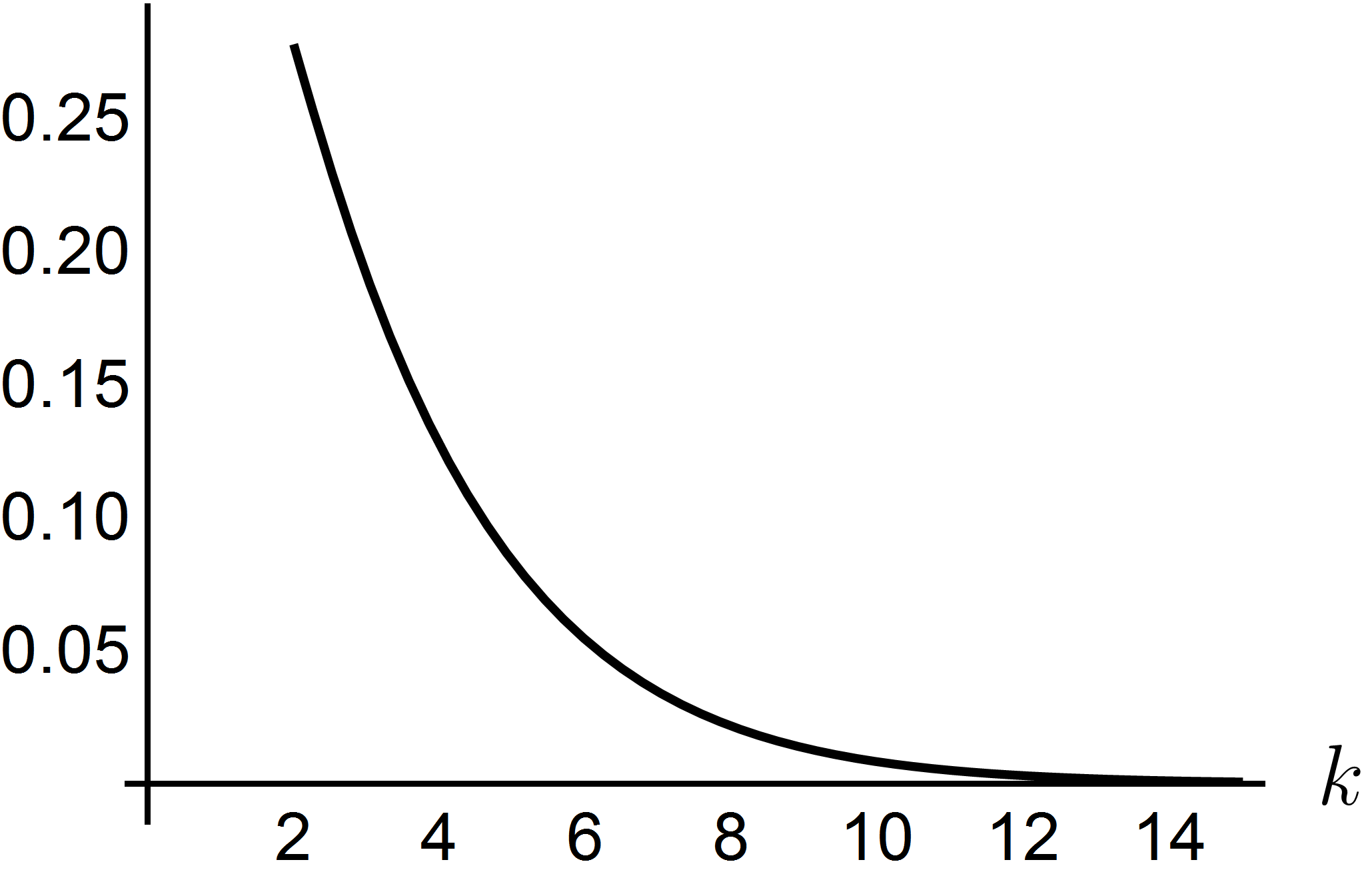} 
        \caption{\label{fig:F(k)}${F(k,\frac 15) - (k\cdot h(\frac 15)-1)}$.  (See Proposition \ref{prop:FBounds}).}
      
    \end{minipage}
\end{figure}

\begin{proposition} \label{prop:FBounds}
For an integer $k>1$ and $0<\gamma<\frac 12$, it holds that
$$kh(\gamma)-1\le F(k,\gamma) \le kh(\gamma)+\log(1+(1-2\gamma)^k)-1,$$
so,
$$
F(k,\gamma) = kh(\gamma) - 1 + O((1-2\gamma)^k)
$$ 
(see Figure \ref{fig:F(k)}).
\end{proposition}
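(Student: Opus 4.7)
Both bounds follow quickly from Equation \ref{eqn:FByMinimization} together with the explicit formula $F(k,\gamma) = \log((1+\alpha)^k+(1-\alpha)^k)-k\gamma\log\alpha-1$, where $\alpha=\alpha(k,\gamma)$.

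For the upper bound I will simply test the minimization $F(k,\gamma)=\min_{x\in(0,1)} g(k,\gamma,x)$ at $x=\gamma/(1-\gamma)$, which lies in $(0,1)$ since $\gamma<\frac 12$. Computing $1+x=(1-\gamma)^{-1}$ and $1-x=(1-2\gamma)/(1-\gamma)$ gives $(1+x)^k+(1-x)^k=(1-\gamma)^{-k}(1+(1-2\gamma)^k)$, so after collecting the logs of $\gamma$ and $1-\gamma$ the expression $g(k,\gamma,\gamma/(1-\gamma))$ simplifies to $kh(\gamma)+\log(1+(1-2\gamma)^k)-1$, which is exactly the claimed upper bound.

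For the lower bound, I will drop the non-negative term $(1-\alpha)^k$ from the sum inside the logarithm in the explicit formula for $F$, obtaining
$$F(k,\gamma)\ge k\log(1+\alpha)-k\gamma\log\alpha-1.$$
Reparametrize by $\beta := \alpha/(1+\alpha)\in(0,1)$, so that $\log(1+\alpha)=-\log(1-\beta)$ and $\log\alpha=\log\beta-\log(1-\beta)$. The right-hand side then becomes $k\bigl[-\gamma\log\beta-(1-\gamma)\log(1-\beta)\bigr]-1$, i.e., $k$ times the binary cross-entropy between parameters $\gamma$ and $\beta$. Gibbs' inequality (equivalently, non-negativity of the KL divergence $D(\mathrm{Ber}(\gamma)\,\|\,\mathrm{Ber}(\beta))$) bounds this cross-entropy from below by $h(\gamma)$, yielding $F(k,\gamma)\ge kh(\gamma)-1$.

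The asymptotic conclusion is then immediate from sandwiching, since $\log(1+(1-2\gamma)^k)=O((1-2\gamma)^k)$ for fixed $\gamma\in(0,\frac 12)$. I do not anticipate a real obstacle; the only slightly non-obvious move is dropping $(1-\alpha)^k$ and recognizing the resulting expression as a cross-entropy, which conveniently bypasses any direct manipulation of the implicit equation defining $\alpha$.
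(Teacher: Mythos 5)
Your proposal is correct and follows essentially the same route as the paper: the upper bound is obtained exactly as in the paper by evaluating $g(k,\gamma,x)$ at $x=\frac{\gamma}{1-\gamma}$, and the lower bound likewise rests on dropping the nonnegative $(1-x)^k$ term inside the logarithm. The only cosmetic difference is that you drop it at the optimizer $x=\alpha$ and invoke Gibbs' inequality for the resulting cross-entropy, whereas the paper drops it for all $x$ and minimizes $\log((1+x)^k)-\gamma k\log x-1$ directly; these are the same bound in two guises.
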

\begin{proof}
The upper bound follows from Equation \ref{eqn:FByMinimization} which yields
$$F(k,\gamma) \le g\left(k,\gamma,\frac{\gamma}{1-\gamma}\right) = kh(\gamma)+\log(1+(1-2\gamma)^k)-1.$$

We turn to proving the lower bound. Clearly, 
$$
g(k,\gamma,x) \ge \log((1+x)^k)-\gamma k\log (x) - 1.
$$
The r.h.s.\ expression attains its minimum at $x = \frac \gamma{1-\gamma}$ and this minimum equals $kh(\gamma)-1$. Equation \ref{eqn:FByMinimization} implies that this is a lower bound on $F(k,\gamma)$.
\end{proof}

\subsection{Proof of Lemma \ref{lem:TProbability}}

We turn to the proof Lemma \ref{lem:TProbability}. It will be useful to view a vector $u\sim P$ as being generated in steps, with its $i$-th coordinate $u_i$ determined in the $i$-th step. The following proposition describes the quantities involved in this process.

\begin{proposition} \label{prop:p0p1Facts}
For $k\ge 2$ and $0 <\gamma < \frac 12$, let $u\in \bits{k}$ be a random vector sampled from $P$. For $0\le i\le k$, let $w_i$ denote the weight of the prefix vector $(u_1 ,\ldots, u_i)$. Then:
\begin{enumerate}
\item The distribution of the bit $u_i$ conditioned on the prefix $(u_1 ,\ldots, u_{i-1})$ depends only on the parity of $w_{i-1}$.

\item
\begin{equation}  \label{eqn:p_0to1}
\Pr(u_i=1\mid w_{i-1}\text{ is even}) =  \alpha\cdot\frac{(1+\alpha)^{k-i}-(1-\alpha)^{k-i}}{(1+\alpha)^{k-i+1}+(1-\alpha)^{k-i+1}} 
 \end{equation}
and
\begin{equation} \label{eqn:p_1to0}
\Pr(u_i=1\mid w_{i-1}\text{ is odd}) = 
\alpha\cdot\frac{(1+\alpha)^{k-i}+(1-\alpha)^{k-i}}{(1+\alpha)^{k-i+1}-(1-\alpha)^{k-i+1}}. 
\end{equation}
\end{enumerate}
\end{proposition}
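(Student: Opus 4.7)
The plan is a direct computation. Sampling $u \sim P$ gives probabilities proportional to $\alpha^{\|u\|}$ on even-weight vectors, so I would condition on a prefix and use the standard generating-function identities
\[
\sum_{v \in \bits{m},\, \|v\| \text{ even}} \alpha^{\|v\|} = \frac{(1+\alpha)^m + (1-\alpha)^m}{2}, \qquad \sum_{v \in \bits{m},\, \|v\| \text{ odd}} \alpha^{\|v\|} = \frac{(1+\alpha)^m - (1-\alpha)^m}{2},
\]
which are immediate from $(1\pm\alpha)^m = \sum_j \binom{m}{j}(\pm\alpha)^j$.

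For a fixed prefix $(u_1, \ldots, u_{i-1})$ of weight $s = w_{i-1}$, I would write
\[
\Pr(u_1, \ldots, u_{i-1}) = \frac{\alpha^s}{Z} \sum_{\substack{v \in \bits{k-i+1} \\ \|v\| \equiv s \pmod 2}} \alpha^{\|v\|},
\]
observing that this probability depends on the prefix only through $s$, and that in the conditional probability $\Pr(u_i = 1 \mid u_1, \ldots, u_{i-1})$ the factor $\alpha^s/Z$ cancels, leaving only a dependence on the parity of $s$. This establishes part (1).

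For part (2), I would similarly compute
\[
\Pr(u_1,\ldots,u_{i-1}, u_i = 1) = \frac{\alpha^{s+1}}{Z} \sum_{\substack{v \in \bits{k-i} \\ \|v\| \equiv s+1 \pmod 2}} \alpha^{\|v\|},
\]
divide by $\Pr(u_1,\ldots,u_{i-1})$, and plug in the two generating-function identities above. When $s$ is even the numerator sum runs over odd-weight vectors in $\bits{k-i}$ and the denominator sum over even-weight vectors in $\bits{k-i+1}$, yielding exactly \eqref{eqn:p_0to1}; when $s$ is odd the roles of even and odd swap, yielding \eqref{eqn:p_1to0}.

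There is no real obstacle here: the entire argument is a bookkeeping exercise on binomial generating functions, and the only thing to get right is that the prefix dependence collapses to parity because of the cancellation of $\alpha^s/Z$ in numerator and denominator.
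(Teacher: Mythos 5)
Your proposal is correct and follows essentially the same route as the paper: condition on a fixed prefix, express the joint and marginal probabilities as sums over suffix weights with the parity constraint forced by the even-weight support of $P$, cancel the common factor $\alpha^{s}/Z$, and evaluate the resulting sums via the binomial identities $\sum_{\|v\|\text{ even/odd}}\alpha^{\|v\|}=\tfrac{(1+\alpha)^m\pm(1-\alpha)^m}{2}$. The paper writes these sums with explicit binomial coefficients rather than in generating-function form, but the argument is the same.
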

\begin{proof}
Fix a prefix $(u_1,\ldots, u_{i-1})$ of weight $w_{i-1}$. We sum over $x = \|u\|-w_{i}$ and $y = \|u\|-w_{i-1}$.
\begin{align*}
\Pr\left(u_i=1\mid u_1,\ldots,u_{i-1}\right) &= \frac{\Pr\left(u_i=1\cap u_1,\ldots,u_{i-1}\right)}{\Pr\left( u_1,\ldots,u_{i-1}\right)} 
= \frac{\sum_{x\nequiv w_{i-1} \bmod 2} \binom{k-i}{x}\frac{\alpha^{x+w_{i-1}+1}}{Z}}{\sum_{y\equiv w_{i-1}\bmod 2} \binom{k-i+1}{y}\frac{\alpha^{y+w_{i-1}}}{Z}}\\
&= \alpha \frac{{\sum_{x\nequiv w_{i-1} \bmod 2}} \binom{k-i}{w-1}\alpha^{w-1}}{{\sum_{y\equiv w_{i-1}\bmod 2}} \binom{k-i+1}{w}\alpha^w},
\end{align*}
yielding the claim.
\end{proof}

We denote the r.h.s.\ of Equations \ref{eqn:p_0to1} and $\ref{eqn:p_1to0}$ by $p_{0\to 1,i}=p_{0\to1,i,k}$ and $p_{1\to 0,i}=p_{1\to 0,i,k}$, respectively. Also, for $0\le i\le k$, let 
$$e_i = e_{i,k}= \Pr_{u\sim P} (w_i\text{ is odd}).$$
Here are some useful facts about these terms.
Equation \ref{eqn:pExpectation} yields 
\begin{align*} \label{eqn:pegammaConvexCombination}
\gamma &= \Pr_{u\sim P}(u_i = 1) =p_{0\to 1,i}\cdot \Pr_{u\sim P} (w_{i-1}\text{ is even}) + p_{1\to 0,i}\cdot \Pr_{u\sim P} (w_{i-1}\text{ is odd})\\
&=  p_{1\to 0,i}e_{i-1} + p_{0\to 1,i}(1-e_{i-1}). \numberthis
\end{align*}
By similar considerations, we have
$$e_i = e_{i-1} \cdot (1-p_{1\to 0,i}) +(1- e_{i-1})\cdot p_{0\to 1,i}.$$
By combining these equations we find
\begin{equation}\label{eqn:p0to1Rel}
p_{0\to 1, i}\cdot (1-e_{i-1}) = \frac{\gamma + (e_i-e_{i-1})}2
\end{equation}
and
\begin{equation}\label{eqn:p1to0Rel}
p_{1\to 0, i}\cdot e_{i-1} = \frac{\gamma - (e_i - e_{i-1})}2.
\end{equation}

We need some further technical propositions.
\begin{proposition} \label{prop:pAndeBounds}
For every $\gamma \in (0,\frac 12)$ there exists some $c = c(\gamma) > 0$ such that if $k \ge 3$ then
$$e_{i,k}~,~~p_{0\to 1,i,k}~,~~p_{1\to 0,i,k} \in [c,1-c]$$
for every $1\le i\le k-1$.
\end{proposition}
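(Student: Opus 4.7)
The plan is to first bound $\alpha = \alpha(k,\gamma)$ uniformly away from $0$ and $1$ over all $k\ge 3$, and then to reduce the three quantities to compact closed forms in the auxiliary variable $r=(1-\alpha)/(1+\alpha)$, from which the constants emerge directly. For the uniform bound on $\alpha$, the proof of Proposition \ref{prop:alphaAsymptotics} actually establishes $\gamma/(1-\gamma)\le \alpha(k,\gamma)\le (\gamma+\epsilon_k)/(1-\gamma)$ with an explicit $\epsilon_k=2\gamma(1-2\gamma)^{k-1}/(1-(1-2\gamma)^{k-1})$ that decreases in $k$. Plugging $k\ge 3$ and simplifying yields $\alpha(k,\gamma)\le \gamma/(1-\gamma)+(1-2\gamma)^2/(2(1-\gamma)^2)$, and a short algebraic check shows this right-hand side is strictly less than $1$ precisely when $\gamma<\tfrac12$. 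Thus there exist $0<\alpha_1(\gamma)\le\alpha_2(\gamma)<1$ with $\alpha\in[\alpha_1,\alpha_2]$ for all $k\ge 3$, and consequently $r\in[r_1,r_2]\subset(0,1)$.

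Next I would rewrite \eqref{eqn:p_0to1} and \eqref{eqn:p_1to0} by dividing numerator and denominator by $(1+\alpha)^{k-i+1}$. Setting $m=k-i\ge 1$ and using the identity $(1+\alpha)(1-r)=2\alpha$, a routine computation yields
$$
p_{0\to 1,i}=\frac{\alpha}{1+\alpha}\cdot\frac{1-r^m}{1+r^{m+1}},\qquad p_{1\to 0,i}=\frac{1+r^m}{2(1+r+r^2+\cdots+r^m)}.
$$
From the second form, $1+r+\cdots+r^m\ge 1+r^m$ immediately gives $p_{1\to 0,i}\le \tfrac12$, while $1+r+\cdots+r^m\le (1-r)^{-1}$ gives $p_{1\to 0,i}\ge (1-r)/2\ge(1-r_2)/2>0$. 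For $p_{0\to 1,i}$, the factor $\alpha/(1+\alpha)\le \alpha_2/(1+\alpha_1)<1$ supplies the upper bound, and $1-r^m\ge 1-r_2$ together with $1+r^{m+1}\le 2$ supplies a positive lower bound.

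For the parity probability $e_{i,k}$ the plan is to derive a clean closed form for $\E_P[(-1)^{w_i}]$. Writing $(-1)^{w_i}=\prod_{j\le i}(1-2u_j)$, expanding $\sum_{u\in\evens{k}}\alpha^{\|u\|}(-1)^{w_i}$ with the parity indicator $\tfrac12(1+(-1)^{\|u\|})$, and factorising the contributions of the first $i$ and the last $k-i$ coordinates independently, one obtains
$$
\E_P[(-1)^{w_i}]=\frac{(1-\alpha)^i(1+\alpha)^{k-i}+(1+\alpha)^i(1-\alpha)^{k-i}}{(1+\alpha)^k+(1-\alpha)^k}=\frac{r^i+r^{k-i}}{1+r^k}.
$$
Since $e_{i,k}=\tfrac12\bigl(1-\E_P[(-1)^{w_i}]\bigr)$ and $1+r^k-r^i-r^{k-i}=(1-r^i)(1-r^{k-i})$, this collapses to
$$
e_{i,k}=\frac{(1-r^i)(1-r^{k-i})}{2(1+r^k)},
$$
whence $e_{i,k}\le\tfrac12$ is immediate, and for $1\le i\le k-1$ the bound $e_{i,k}\ge (1-r_2)^2/4>0$ follows from $1-r^i,\,1-r^{k-i}\ge 1-r_2$ and $1+r^k\le 2$.

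The one step that requires genuine care is the uniform upper bound $\alpha\le\alpha_2<1$: the qualitative asymptotic in Proposition \ref{prop:alphaAsymptotics} by itself is not enough, but the quantitative estimate buried in its proof is, once one substitutes $k\ge 3$ and verifies the algebraic inequality above. Everything thereafter is routine bookkeeping with the two closed-form expressions.
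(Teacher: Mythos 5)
Your proof is correct, and it takes a route that differs from the paper's in two useful ways. The paper handles $p_{0\to 1,i,k}$ and $p_{1\to 0,i,k}$ by asserting (without proof) that they are monotone in $i$ and then checking only the endpoints $i=1$ and $i=k-1$, whereas you bound all $i$ at once via the closed forms $p_{0\to 1,i}=\frac{\alpha}{1+\alpha}\cdot\frac{1-r^{m}}{1+r^{m+1}}$ and $p_{1\to 0,i}=\frac{1+r^m}{2(1+r+\cdots+r^m)}$ with $m=k-i$; this makes your argument more self-contained, since the monotonicity claim is the one step the paper leaves to the reader. For $e_{i,k}$ you arrive at the same product formula $\frac{(1-r^i)(1-r^{k-i})}{2(1+r^k)}$ as the paper, but by a direct parity-character factorization of $\sum_{u\in\evens{k}}\alpha^{\|u\|}(-1)^{w_i}$ rather than by solving the recurrence \ref{eqn:pegammaConvexCombination} against \ref{eqn:gamma(alpha)} (your indexing is the correct one; the paper's ``$r=k-i-1$'' is an off-by-one slip, as its final expression corresponds to $r=k-i$). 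You also make explicit the uniform-in-$k$ bounds on $\alpha$ that the paper invokes only implicitly through Proposition \ref{prop:alphaAsymptotics}, and your extraction of the quantitative inequalities from that proposition's proof, together with the algebraic check that $\frac{\gamma}{1-\gamma}+\frac{(1-2\gamma)^2}{2(1-\gamma)^2}<1$ for $\gamma<\frac12$, is sound. One small remark: the bound you flag as the delicate one, $\alpha\le\alpha_2<1$, is actually the less essential of the two — every estimate you use only needs $r\le r_2<1$, i.e.\ $\alpha\ge\gamma/(1-\gamma)>0$, while the upper bound on $p_{0\to 1,i}$ already follows from $\frac{\alpha}{1+\alpha}<\frac12$ since $\alpha<1$ by the definition of $\alpha(k,\gamma)$ — but this does not affect correctness.
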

\begin{proof}
It is not hard to see that both $p_{0\to 1,i,k}$ and $p_{1\to 0,i,k}$ are monotone in $i$. Therefore it suffices to check what happens for $i=1$ and for $i=k-1$. For $i=k-1$ the two terms equal $\frac{\alpha^2}{1+\alpha^2}$ and $\frac 12$ respectively. Since $\alpha$ is bounded from $0$ by Proposition \ref{prop:alphaAsymptotics}, this yields the claim.\\
For $i=1$ we note that $p_{0\to 1,1,k}=\gamma$.\\
It remains to consider $p_{1\to 0,1,k}$. Denote $x = \frac{1-\alpha}{1+\alpha}$ and note that $x$ is bounded away from $1$. This yields the bounds:
$$p_{1\to 0,1,k} = \frac{\alpha}{1+\alpha}\cdot\frac{1+x^{k-1}}{1-x^k} \ge \frac{\alpha}{1+\alpha}\cdot\frac{1-x}{1+x}$$
and
$$1-p_{1\to 0,1,k} = \frac{1}{1+\alpha}\cdot\frac{1-x^{k-1}}{1-x^k} \ge \frac{1}{1+\alpha}\cdot\frac{1-x}{1+x}$$

We turn to deal with $e_{i,k}$. Denote $a=1+\alpha$, $b=1-\alpha$ and $r=k-i-1$. A bound on $e_i$ follows from Equations \ref{eqn:pegammaConvexCombination} and \ref{eqn:gamma(alpha)} since
\begin{align*}
e_{i} &= \frac{\gamma - p_{0\to t,i+1}}{p_{1\to 0,i+1} - p_{0\to 1,i+1}} = \frac{\frac{a^{k-1}-b^{k-1}}{a^k+b^k}-\frac{a^{r-1}-b^{r-1}}{a^{r}+b^{r}}}{\frac{a^{r-1}+b^{r-1}}{a^{r}-b^{r}}-\frac{a^{r-1}-b^{r-1}}{a^{r}+b^{r}}} = \frac{(a^r-b^r)(a^{k-r}-b^{k-r})}{2(a^k+b^k)} \\
&=\frac{(1-x^r)(1-x^{k-r})}{2(1+x^k)}\ge \frac{(1-x)^2}{2(1+x)}
\end{align*}
and likewise,
$$1-e_i = \frac{(1+x^r)(1+x^{k-r})}{2(1+x^k)} \ge \frac{(1-x)^2}{2(1+x)}.$$
\end{proof} 

The following simple and technical proposition will come in handy in several situations below. It speaks about an experiment where $n$ balls fall randomly into $r$ bins. An {\em outcome} of such an experiment is an $r$-tuple of nonnegative integers $a_1,\ldots,a_r$ with $\sum a_i=n$, where $a_i$ is the number of balls at bin $i$ at the end of the experiment.

\begin{proposition} \label{prop:multinomialMode}
Let $r\ge 2$ be an integer $\frac{1}{r}\ge c > 0$, and $p_1,\ldots,p_r\ge c$ with $\sum p_i=1$. We drop randomly and independently $n$ balls into $r$ bins with probability $p_i$ of falling into bin $i$. The probability of every possible outcome is at most $O\left(n^{-\frac{r-1}2}\right)$, where $c, r$ are fixed and $n$ grows.
\end{proposition}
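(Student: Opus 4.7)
The probability of a given outcome $(a_1,\ldots,a_r)$ with $\sum_i a_i = n$ is the multinomial PMF
$$P(a)=\binom{n}{a_1,\ldots,a_r}\prod_{i=1}^r p_i^{a_i}.$$
My plan is to bound $P(a)$ pointwise via Stirling's formula, after splitting outcomes into two regimes: a \emph{typical} regime in which the empirical distribution $a/n$ stays uniformly close to $(p_1,\ldots,p_r)$, and an \emph{atypical} regime that drifts away. In the atypical regime, some coordinate $a_i$ satisfies $|a_i - n p_i|\ge cn/2$, so viewing that coordinate as $\mathrm{Bin}(n,p_i)$ and invoking a Chernoff/KL estimate immediately gives $P(a)\le 2^{-\Omega(n)}$, which is far below the target $n^{-(r-1)/2}$. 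All of the real work thus happens in the typical regime.

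In the typical regime one has $a_i = \Theta(n)$ for every $i$, so Stirling's formula $m! = \sqrt{2\pi m}\,(m/e)^m(1+O(1/m))$ applies uniformly to $n!$ and each $a_i!$. After cancellation this yields
$$P(a) \;=\; \bigl(1+O(1/n)\bigr)\cdot\frac{1}{(2\pi n)^{(r-1)/2}\sqrt{\prod_i (a_i/n)}}\;\cdot\;\prod_{i=1}^r\Bigl(\frac{n p_i}{a_i}\Bigr)^{a_i}.$$
Two elementary observations finish the bound. First, the product on the right is exactly $2^{-n\,D(a/n\,\|\,p)}$ (with $D$ the KL divergence), hence is $\le 1$ by nonnegativity of KL. Second, because $a_i/n$ is bounded below by $c/2$ in the typical regime, the factor $\sqrt{\prod_i (a_i/n)}$ is bounded below by $(c/2)^{r/2}$, a constant depending only on $r$ and $c$. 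Combining these two bounds with the $(2\pi n)^{-(r-1)/2}$ factor gives $P(a) = O\bigl(n^{-(r-1)/2}\bigr)$, as desired.

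The one step requiring a little care is choosing the threshold that separates the two regimes: it must be loose enough that Stirling's error term is uniform over all retained outcomes (so every surviving $a_i$ is $\Omega(n)$), yet tight enough that Chernoff genuinely kills the complementary regime. Any threshold of the form $|a_i - n p_i| \le cn/2$ for all $i$ works, using $p_i \ge c$ together with $p_i \le 1-(r-1)c$ to keep $a_i/n$ bounded away from both $0$ and $1$. No step here is difficult; the proposition is, in essence, a uniform one-point bound extracted from the local central limit theorem for the multinomial, valid whenever the probabilities are bounded below.
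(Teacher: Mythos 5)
Your proof is correct, but it follows a different route from the paper's. The paper first reduces the problem to a single outcome: since every outcome has probability at most that of the mode, it suffices to bound the mode's probability, and the classical fact (Feller) that the mode satisfies $np_i-1<a_i$ for all $i$ places it automatically in the region where every $a_i=\Omega(n)$; the paper then substitutes $p_i\le\frac{a_i+1}{n}$ and applies Stirling to get the bound $O\bigl(\sqrt n/\prod_i\sqrt{a_i}\bigr)\le O\bigl(n^{-\frac{r-1}{2}}\bigr)$, with no case analysis at all. You instead bound every outcome pointwise, splitting into a typical regime (all $|a_i-np_i|\le cn/2$, handled by uniform Stirling together with nonnegativity of the KL divergence, which plays the role of the paper's substitution $p_i\le\frac{a_i+1}{n}$) and an atypical regime killed by a Chernoff bound on a single binomial marginal. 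Both arguments share the same Stirling core; the paper's mode reduction buys brevity and dispenses with the regime split entirely (at the cost of citing the location of the multinomial mode), while your version is self-contained, avoids that citation, and in effect proves the stronger uniform local-CLT-type statement directly, which would also let one read off where the bound is essentially attained. The only points needing the care you already flag are that the Stirling error must be uniform (guaranteed by $a_i\ge cn/2$ in the typical regime) and that the Chernoff constant depends only on the fixed $c$; both are handled correctly.
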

\begin{proof}
It is well known (e.g., \cite{Fel68} p.\ 171) that the most likely outcome of the above process $(a_1,\ldots,a_r)$, satisfies $np_i-1 < a_i$ for every $i$ and its probability is
\begin{align*}
\binom{n}{a_1,\ldots,a_r}\prod_{i=1}^r p_i^{a_i} &\le \binom{n}{a_1,\ldots,a_r}\prod_{i=1}^r \left(\frac{a_i+1}n\right)^{a_i} 
=\binom{n}{a_1,\ldots,a_r} \prod_{i=1}^r \left(\frac{a_i}n\right)^{a_i}\cdot\left(1+\frac{1}{a_i}\right)^{a_i}\\&
\le e^r\cdot \binom{n}{a_1,\ldots,a_r} \prod_{i=1}^r \left(\frac{a_i}n\right)^{a_i}
\le O\left(\frac{\sqrt n}{\prod_{i=1}^r \sqrt{a_i}}\right)\\&
\le O\left(\frac{\sqrt n}{\prod_{i=1}^r \sqrt{np_i-1}}\right) \le O\left(\frac{\sqrt n}{\sqrt{(cn-1)^r}}\right)\le O\left(n^{-\frac{r-1}2}\right)
\end{align*}
\end{proof}

\begin{proposition} \label{prop:BinomialSmallDeviation}
Let $a,c > 0$ be real and $n\in \mathbb N$ . Consider a random variable $X \sim B(n,p)$ where $c\le p\le 1-c$ and $np$ is an integer. Let $y$ be an integer such that $|y-pn| \le a \sqrt n$. Then $\Pr(X=y) \ge \Omega\left(n^{-\frac 12}\right)$ for fixed $a, c$ and $n\to\infty$.
\end{proposition}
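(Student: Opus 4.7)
My plan is to reduce the question to the known asymptotics of $\Pr(X = pn)$ (the mode, which by assumption is an integer) and control the ratio $\Pr(X=y)/\Pr(X=pn)$ when $y$ lies within $O(\sqrt{n})$ of the mean. Since $p$ is bounded in $[c, 1-c]$, Stirling's formula applied to $\binom{n}{pn} p^{pn}(1-p)^{(1-p)n}$ yields
\[
\Pr(X = pn) \;=\; \frac{1 + o(1)}{\sqrt{2\pi n p(1-p)}} \;=\; \Omega\bigl(n^{-1/2}\bigr),
\]
where the hidden constant depends only on $c$. So it suffices to show that the ratio $\Pr(X = y)/\Pr(X = pn)$ is bounded below by a positive constant depending only on $a$ and $c$.

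To control this ratio, write $y = pn + t$ with $|t| \le a\sqrt{n}$, and express
\[
\frac{\Pr(X = pn + t)}{\Pr(X = pn)} \;=\; \prod_{j=1}^{|t|} r_j,
\]
where each factor $r_j$ is of the form $\frac{(1-p)n - j'}{pn + j'} \cdot \frac{p}{1-p}$ (or its reciprocal, depending on the sign of $t$), for some $0 \le j' < |t|$. Each such factor equals $1 + O(|t|/n) = 1 + O(1/\sqrt{n})$ by our bound on $|t|$ and the fact that $pn, (1-p)n = \Theta(n)$. Taking logarithms,
\[
\log \frac{\Pr(X = pn + t)}{\Pr(X = pn)} \;=\; \sum_{j=1}^{|t|} \log r_j \;=\; O\!\left( |t| \cdot \tfrac{1}{\sqrt{n}} \right) \;+\; O\!\left( |t| \cdot \tfrac{|t|}{n} \right),
\]
after expanding $\log(1+x) = x - x^2/2 + O(x^3)$ and summing; a slightly more careful bookkeeping, where one tracks the main term $-t^2/(2np(1-p))$ arising from the near-cancellation of first-order contributions, shows that the logarithm is in fact bounded in absolute value by $\frac{a^2}{2c(1-c)} + o(1)$. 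Exponentiating gives $\Pr(X = y)/\Pr(X = pn) \ge e^{-a^2/(2c(1-c))}(1 + o(1))$, which is a positive constant depending only on $a$ and $c$.

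Combining the two estimates, $\Pr(X = y) \ge \Omega(n^{-1/2}) \cdot \Omega(1) = \Omega(n^{-1/2})$, as required.

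The only subtle point is tightening the telescoping ratio bound to conclude that the product is bounded below by a constant, rather than by $n^{-O(1)}$; I expect this to be the main computational obstacle, and the cleanest route is essentially the local central limit theorem for $B(n,p)$, i.e.\ to verify that the first-order terms in the expansion of $\log r_j$ combine to $-t^2/(2np(1-p)) + o(1)$ and the higher-order terms are negligible. Alternatively, one can cite a standard LCLT, which gives the same conclusion immediately and uniformly in $|y-pn| = O(\sqrt{n})$.
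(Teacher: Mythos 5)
Your argument is correct, and it takes a somewhat different route from the paper's. You anchor at the mean: establish $\Pr(X=pn)=\Omega(n^{-1/2})$ by Stirling, then bound the ratio $\Pr(X=y)/\Pr(X=pn)$ by telescoping consecutive-probability ratios (essentially the classical local CLT computation for the binomial). The paper instead works with $\Pr(X=y)$ directly at the point $y$: it writes $p^yq^{n-y}=\left(\frac yn\right)^y\left(\frac{n-y}{n}\right)^{n-y}\left(1-\frac{x\sqrt n}{y}\right)^y\left(1+\frac{x\sqrt n}{n-y}\right)^{n-y}$ with $y=pn+x\sqrt n$, notes that the two correction factors are $\Omega(e^{-x\sqrt n})$ and $\Omega(e^{x\sqrt n})$ so they cancel up to a constant, and applies Stirling to $\binom ny\left(\frac yn\right)^y\left(\frac{n-y}{n}\right)^{n-y}=\Omega(n^{-1/2})$. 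The two proofs are comparable in length and both depend only on $a$ and $c$; yours has the mild advantage of reducing to the well-known mode asymptotics, the paper's avoids any comparison step. One remark: the ``subtle point'' you flag at the end is not actually an obstacle. Since $p,1-p\ge c$ and $|t|\le a\sqrt n$, each factor $r_j$ satisfies $|\log r_j|=O(1/\sqrt n)$ with constant depending only on $a,c$, so the crude bound $\left|\sum_j\log r_j\right|\le O(|t|/\sqrt n)=O(a)$ already shows the ratio is bounded below by a positive constant; you do not need to extract the Gaussian main term $-t^2/(2np(1-p))$ or invoke an LCLT, although either of those would of course also finish the proof.
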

\begin{proof}
Let $q = 1-p$, and let us denote $y = pn+x\sqrt n$, where $|x|\le a$.
\begin{align*}
\Pr(X=y) &= \binom ny p^y q^{n-y} = \binom ny \left(\frac yn\right)^{y}\left(\frac {n-y}n\right)^{n-y}\left(1-\frac {x\sqrt n}{y}\right)^y \left(1+\frac {x\sqrt n}{n-y}\right)^{n-y}
\end{align*}

Expand into Taylor Series, using the fact that $|x|$ is bounded and $y=\Theta(n)$ to derive the following inequalities:
$$\left(1-\frac{x\sqrt n}{y}\right)^y \ge \Omega\left(e^{-x\sqrt n}\right)~~~\text{and}~~~~\left(1+\frac{x\sqrt n}{n-y}\right)^{n-y} \ge \Omega\left(e^{x\sqrt n}\right).$$
The proposition now follows from Stirling's approximation, as
$$\binom ny \left(\frac yn\right)^{y}\left(\frac {n-y}n\right)^{n-y} \ge \Omega(n^{-\frac 12}).$$
\end{proof}

We are now ready to prove the main lemma of this section.

\lemTProbability*
\begin{proof}
Every binary ${k\times n}$ matrix $A$ that is sampled from the distribution $\pi$ satisfies the column condition, and we estimate the probability that the row condition holds.

By Proposition \ref{prop:pAndeBounds}, there is some $c=c(\gamma)>0$ so that $p_{0\to 1,i}~,~ p_{1\to 0,i}~,~ e_{i}$ are in $[c,1-c]$ for every $1\le i\le k-1$.

We recall that $A$'s columns are sampled independently and view $A$ as being sampled row by row. Let $b^i$ be the vector $A_{1}+\ldots+ A_{i-1} \bmod 2$. We want to observe how the ordered pairs $(\|b^{i}\|, \|A_{i}\|)$ evolve as $i$ goes from $1$ to $k$. By Proposition \ref{prop:p0p1Facts}, this evolution depends probabilistically on $\|b^{i-1}\|$ and only on it. Namely, let $s_i$ be the number of coordinates $j$ where $b^{i-1}_j=0$ and $A_{i,j}=1$. Likewise $t_i$ counts the coordinates $j$ for which $b^{i-1}_j=A_{i,j}=1$. It follows that $\|A_i\| = s_i+t_i$, and $\|b^{i}\|=\|b^{i-1}\|+s_i-t_i$, where $s_i\sim B(n-\|b^{i-1}\|, p_{0\to 1,i})$ and $t_i\sim B(\|b^{i-1}\|, p_{1\to 0,i})$ are independent binomial random variables.

Clearly $A\in T$ iff $\bigwedge_{i=1}^k D_i$, where $D_i$ is the event that $\|A_i\| = \gamma n$. 

We seek next an upper bound on $\Pr(A\in T)$.
\begin{align*}
\Pr(A\in T) &= \Pr(\bigwedge_{i=1}^k D_i) = \prod_{i=1}^k \Pr(D_i\;\mid\; \bigwedge_{j=1}^{i-1} D_j)\\
&\le \left(\prod_{i=1}^{k-3} \max_{w}\Pr(D_i \;\mid\; \|b_{i-1}\|=w)\right) \cdot \max_{w} \Pr(D_{k-2}\wedge D_{k-1}\wedge D_k \; \mid\; \|b_{k-3}\| = w).
\end{align*}
The inequality follows, since conditioned on $\|b_{i-1}\|$, the event $D_i$ is independent of $D_1,\ldots , D_{i-1}$.
We proceed to bound these terms. For $1\le i\le k-3$, 
$$\Pr(D_i \;\mid\; \|b_{i-1}\|=w) = \Pr(s_i+t_i = \gamma n \;\mid\; \|b_{i-1}\|=w).$$
If $w \ge \frac n2$, we condition on $s_i$ and bound this expression from above by 
$$\max_x \Pr(t_i = \gamma n - x \;\mid\; \|b_{i-1}\|=w \wedge s_i =  x),$$
namely, the probability that a $B(w, p_{1\to 0,i})$ variable takes a certain value. By Proposition \ref{prop:multinomialMode}, this is at most $O(w^{-\frac 12}) \le O(n^{-\frac 12})$. When $w < \frac n2$ the same argument applies with reversed roles for $t_i$ and $s_i$.

The last three rows of $A$ require a separate treatment, since e.g., the last row is completely determined by the first $k-1$ rows. Let $G$ be the matrix comprised of $A$'s last three rows. Denote $\epsilon:=b^{k-3}$, and let $w:=\|\epsilon\|$. Again it suffices to consider the case $w \ge \frac n2$, and similarly handle the complementary situation.
If $\epsilon_j=1$, the $j$-th column in $G$ must be one of the vectors
$(1,0,0)^\intercal, (0,1,0)^\intercal, (0,0,1)^\intercal, (1,1,1)^\intercal$. Let $a_1,a_2,a_3,a_4$ denote the number of occurrences of each of these vectors respectively. There are $n-w$ indices $j$ with $\epsilon_j=0$, and a corresponding column of $G$ must be one of the four even-weight vectors of length $3$. We condition on the entries of these columns. Under this conditioning $a_i+a_4$ is determined by the row condition applied to row $k-3+i$, and clearly also $\sum_1^4 a_i=w$. This system of four linearly independent linear equations has at most one solution in nonnegative integers. To estimate how likely it is that this unique solution is reached, we view it as a $w$-balls and $4$-bins experiment. The probability of each bin is a product of two terms from among $p_{0\to 1,i}~, 1-p_{0\to 1,i}~, p_{1\to 0,i}~, 1-p_{1\to 0,i}$ where $i\in \{k-2,k-1\}$. Again, these probabilities are bounded away from $0$.
By Proposition \ref{prop:multinomialMode} the probability of success is at most  $O(n^{-\frac 32})$. Consequently, $\Pr(A\in T) \le n^{-\frac k2}\cdot 2^{O(k)}$. 

To prove a lower bound on $\Pr(A\in T)$, again we consider the rows one at a time. As before, it is easier to bound the probability of $D_i$ by first conditioning on $\|b^{i-1}\|$. However, at present more care is needed, since letting the $\|b^{i}\|$'s take arbitrary values is too crude. Firstly, as long as the row conditions hold, necessarily $\|b^i\|$ is even. In addition, we monitor the deviation of $\|b^i\|$ from its expectation, which is $n\cdot e_i$. Accordingly, we define the following sets:
$$\text{For~~}1\le i\le k-2,~~\text{let~~~} S_i := \{0 \le w\le n \mid ~~|w-e_i\cdot n| \le \sqrt n ~~\wedge ~~w \text{ is even}\}.$$
The intuition is that the event $\|b^i\|\in S_i$ makes it likely that $D_{i+1}$ holds, in which case it is also likely that $\|b^{i+1}\|\in S_{i+1}$. This chain of probabilistic implication yields our claim. To start, clearly $\|b^0\| \in S_0 := \{0\}$. 

Now,
\begin{align*}&\Pr(A\in T) = \Pr\left(\bigwedge_{i=1}^k D_i\right) \ge \Pr\left(\bigwedge_{i=1}^k D_i \wedge \bigwedge_{i=1}^{k-2} \|b^i\|\in S_i\right) \\
&= \left(\prod_{i=1}^{k-2} \Pr\left((D_i\wedge \|b^i\|\in S_i) \;\mid\; \bigwedge_{j=1}^{i-1} (D_j\wedge \|b^j\|\in S_j)\right)\right) \cdot \Pr\left((D_{k-1}\wedge D_k) \; \mid \; \bigwedge_{j=1}^{k-2}(D_j\wedge \|b^j\|\in S_j)\right) \\
&\ge \left(\prod_{i=1}^{k-2}\min_{w\in S_{i-1}} \Pr((D_i\wedge \|b^i\|\in S_i) \;\mid\; \|b^{i-1}\| = w) \right) \cdot \min_{w\in S_{k-2}} \Pr((D_{k-1}\wedge D_k) \; \mid \; \|b^{k-2}\| = w).
\end{align*}
It is in estimating these last terms that the assumption $\|b^i\|\in S_i$ becomes useful. We proceed to bound these terms, and claim the following:
\begin{enumerate}
\item $\min_{w\in S_{i-1}} \Pr((D_i\wedge \|b^i\|\in S_i) \;\mid\; \|b^{i-1}\| = w) \ge \Omega(\frac 1{\sqrt n})$ for every $1\le i\le k-2$.
\item $\min_{w\in S_{k-2}} \Pr((D_{k-1}\cap D_k) \; \mid \; \|b^{k-2}\| = w) \ge  \Omega(\frac{1}{n})$. 
\end{enumerate}
It is clear that the above inequalities imply that $\Pr(A\in T) \ge {n^{\frac k2}}\cdot 2^{-O(k)}$, which proves the lemma.

Fix some $1 \le i \le k-2$ and let $w\in S_{i-1}$, and assume that $D_i$ holds. Then
$$\|b^i\|-\|b^{i-1}\| \equiv s_i - t_i \equiv s_i + t_i \equiv \gamma n\equiv 0 \mod 2,$$
so that $\|b^i\|$ satisfies $S_i$'s parity condition. Therefore
$$ 
\Pr(D_i\wedge \|b^i\|\in S_i \;\mid\; \|b^{i-1}\| = w) = \Pr(D_i\wedge |\|b^i\|-\E(\|b^i\|)|\le \sqrt n \;\mid\; \|b^{i-1}\| = w)
$$
Namely
\begin{align*} \label{eqn:Pr(AinT)lowerBound1}
&\Pr(D_i\wedge \|b^i\|\in S_i \;\mid\; \|b^{i-1}\| = w) \\ &=\Pr(s_i+t_i = \gamma n \wedge |s_i - t_i - e_i\cdot n + w| \le \sqrt n\;\mid\; \|b^{i-1}\| = w). \numberthis
\end{align*}
We want to express this last condition in terms of $x = s_i - t_i$, where clearly $s_i = \frac{\gamma n+x}{2}$ and $t_i = \frac{\gamma n-x}{2}$. Equation \ref{eqn:Pr(AinT)lowerBound1} means that $e_i\cdot n-w- \sqrt n \le x\le e_i\cdot n-w + \sqrt n$ and $x\equiv \gamma n\mod 2$. Summing over all such $x$'s we have
\begin{equation} \label{eqn:Pr(AinT)lowerBound2}
\Pr(D_i\wedge \|b^i\|\in S_i \;\mid\; \|b^{i-1}\| = w) = \sum_x \Pr(s_i = \frac{\gamma n+x}{2}) \cdot \Pr(t_i = \frac{\gamma n-x}{2}).
\end{equation}
Here $s_i \sim B(n-w,p_{0\to1,i})$ and $t_i\sim B(w,p_{1\to 0,i})$. We use Proposition \ref{prop:BinomialSmallDeviation} to give lower bounds on a general term in Equation \ref{eqn:Pr(AinT)lowerBound2}. To this end we show that $\frac{\gamma n+x}2$ and $\frac{\gamma n -x}{2}$ are close, respectively, to the means of $s_i$ and $t_i$. 

Since $w\in S_{i-1}$, we can write $w=e_{i-1}\cdot n + y$ where $|y| \le \sqrt n$. The bounds on $x$ allow us to write $x = (e_i - e_{i-1})n - y + z$ for some $|z| \le \sqrt n$. By Equation \ref{eqn:p0to1Rel},
\begin{align*}
\left|\E(s_i)-\frac{\gamma n+x}2\right| &= \left|p_{0\to 1,i}\cdot (n-w)-\frac{\gamma n+x}2\right| \\ &= \left|p_{0\to 1,i}\cdot ((1-e_{i-1})n - y)-\frac{(\gamma+e_i-e_{i-1})n-y+z}2\right| 
\\&= \left|\frac{\gamma+(e_i-e_{i-1})}{2}n-p_{0\to 1,i}\cdot y-\frac{(\gamma+e_i-e_{i-1})n-y+z}2\right| \\
&=\left|\frac {y-z}2  - p_{0\to1,i}\cdot y \right| \le \sqrt n.
\end{align*}
By Proposition \ref{prop:BinomialSmallDeviation}, $\Pr(s_i = \frac{\gamma n+x}2) \ge \Omega(n^{-\frac 12})$. A similar proof, using Equation \ref{eqn:p1to0Rel}, shows that $\Pr(t_i = \frac{\gamma n-x}2 \ge \Omega(n^{-\frac 12}))$. Thus, each of the  $\Omega(\sqrt n)$, summands in Equation \ref{eqn:Pr(AinT)lowerBound2} is at least $\Omega(n^{-1})$, so that
$$\Pr(D_i\wedge \|b^i\|\in S_i \;\mid\; \|b^{i-1}\| = w) \ge \Omega(n^{-\frac 12}).$$

We turn to proving a lower bound on $\min_{w\in S_{k-2}} \Pr((D_{k-1}\wedge D_k) \; \mid \; \|b^{k-2}\| = w)$. The column condition implies that $A_k = b^{k-1}$. Thus, for  $w\in S_{k-2}$,
\begin{align*}
&\Pr((D_{k-1}\wedge D_k) \; \mid \; \|b^{k-2}\| = w) = \Pr(D_{k-1}\wedge \|b^{k-1}\| = \gamma n\mid \|b^{k-1}\| = w) \\
&=\Pr(s_{k-1}+t_{k-1}=\gamma n\wedge s_{k-1}-t_{k-1}+w=\gamma n) \\
&= \Pr\left(s_{k-1}=\gamma n-\frac w2\right)\cdot \Pr\left(t_{k-1}=\frac w2\right),
\end{align*}
where $s_{k-1}\sim B(n-w,p_{0\to 1,k-1})$ and $t_{k-1}\sim B(w,p_{1\to 0,k-1})$. Again, by applying Proposition \ref{prop:BinomialSmallDeviation} to $s_{k-1}$ and $t_{k-1}$, we conclude that the above is at least $\Omega(n^{-1})$. 
\end{proof}
      
\section{Bounding $|T_V|$ in general}\label{sec:MatrixEnumerationGeneral}
In this section we fix a robust subspace $V\le \bits{k}$ and bound its contribution to Equation \ref{eqn:k'thMomentAfterMoebius}. Let us sample, uniformly at random a matrix $A_{k\times n}$ in $T_V$. Since $T_V$ is invariant under column permutations, the columns of $A$ are equally distributed. We denote this distribution on $\bits{k}$ by $Q_V$, and note that 
$$\log |T_V| = h(A) \le n\cdot h(Q_V).$$

To bound $h(Q_V)$ we employ the following strategy. Express $V$ as the kernel of a $(k-d(V)) \times k$ binary matrix $B$ in reduced row echelon form. Suppose that $B_{i,j} = 1$. If $B_{i',j} = 0$ for every $i'<i$ we say that the coordinate $j$ is $i$-{\em new}. Otherwise, $j$ is said to be $i$-{\em old}. We denote the set of $i$-new coordinates by $\Delta_i$. We have assumed that $V$ is robust, so that $\bigcup_{i=1}^{k-d} \Delta_i = [k]$, since $j\not\in\bigcup_{i=1}^{k-d} \Delta_i$ means that coordinate $j$ is sensitive. Also $B$ is in reduced row echelon form, so all $\Delta_i$ are nonempty.

\begin{example}
The following $B_{3\times 7}$ corresponds to $k=7$ and $d(V) = 4$. In bold - the $i$-new entries in row $i$ for $i=1,2,3$.
$$
\begin{bmatrix}
\bf1 & 0 & 0 & \bf1 & \bf1 & 0 & 0 \\ 
0 & \bf1 & 0 & 1 & 0 & \bf1 & \bf1 \\ 
0 & 0 & \bf1 & 1 & 0 & 1 & 0 \\ 
\end{bmatrix}
$$
\end{example}

A vector $v$ sampled from $Q_V$ satisfies $Bv = 0$ and the expected value of each of its coordinates is $\E(v_i) = \gamma$. Consider $v$ as generated in stages, with the coordinates in $\Delta_i$ determined in the $i$-th stage. We express $v$'s entropy in this view:
\begin{equation} \label{eqn:h(v)InStages}
h(Q_V) = h(v) = h(v_{\Delta_1}) + \sum_{i=2}^{k-d(V)} h(v_{\Delta_i}\mid v_{\bigcup_{i'=1}^{i-1}\Delta_{i'}}).
\end{equation}

We begin with the first term. Since $\Delta_1$ is the support of $B$'s first row and since $Bv=0$, it follows that $v_{\Delta_1}$ has even weight. As we show in Lemma \ref{lem:PHasMaximumEntropy}, the distribution $P$ from Section \ref{sec:MatrixEnumerationEvens} has the largest possible entropy for a distribution that is supported on even weight vectors with expectation $\gamma$ per coordinate. Hence, 
$$h(v_{\Delta_1}) \le h(P_{{|\Delta_1|,\gamma}}) = F(|\Delta_1|,\gamma)$$

It takes more work to bound the other terms in Equation \ref{eqn:h(v)InStages}. Let $2\le i\le k-d(V)$. 
Before the $i$-th stage, $v$'s $i$-old coordinates are already determined. Since the inner product $\langle B_i,v\rangle=0$, the $i$-new coordinates of $v$ have the same parity as its $i$-old coordinates. Hence $\|v_{\Delta_i}\|$'s parity is determined before this stage. Let $\delta_i = \Pr(\|v_{\Delta_i}\|\text{ is odd})$. Since conditioning reduces entropy 
$$h(v_{\Delta_i}\mid v_{\bigcup_{i'=1}^{i-1}\Delta_{i'}}) \le h(v_{\Delta_i}\mid \text{parity of } \|v_{\Delta_i}\|) = h(v_{\Delta_i}) - h(\delta_i).$$

We have already mentioned that Lemma \ref{lem:PHasMaximumEntropy} characterizes the max-entropy distribution on even-weight vectors with given per-coordinate expectation. We actually do more, and find a  maximum entropy distribution $P=P_{m,\gamma,\delta}$ on $\bits{m}$ satisfying
\begin{equation} \label{eqn:PRowCondition}
\Pr_{u\sim P}(u_i=1)=\gamma
\end{equation}
for every $1\le i\le m$ and
\begin{equation} \label{eqn:PColumnCondition}
\Pr_{u\sim P}(\|u\|\text{ is odd})=\delta.
\end{equation} 

This distribution $P=P_{m,\gamma,\delta}$ extends something we did before, in that $P_{m,\gamma,0}$ coincides with $P_{m,\gamma}$ from Section \ref{sec:MatrixEnumerationEvens}. 

Since $v_{ֿ|\Delta_i|}$ also satisfies these conditions, this yields the bound $h(v_{\Delta_i}\mid v_{\bigcup_{i'=1}^{i-1}\Delta_{i'}}) \le F(|\Delta_i|, \gamma, \delta_i)$, where $F(m,\gamma,\delta) = h(P_{m,\gamma,\delta})-h(\delta)$. We conclude that
\begin{equation} \label{eqn:T_VBound}
\log|T_V|\le n\cdot h(Q_V) \le n\cdot \left(F(|\Delta_1|,\gamma)+\sum_{i=2}^{k-d(V)}F(|\Delta_i|,\gamma,\delta_i)\right).
\end{equation}

The relevant consistency relation is that $F(m,\gamma,0) = F(m,\gamma)$.
We determine next the distribution $P_{m,\gamma,\delta}$ and then return to the analysis of Equation \ref{eqn:T_VBound}.

\subsection{The function $F(m,\gamma,\delta)$} 
As explained above we now find the max-entropy distribution satisfying Equations \ref{eqn:PRowCondition} and \ref{eqn:PColumnCondition}. The following proposition gives a necessary condition for the existence of such a distribution.

\begin{proposition}\label{prop:p_condition}
If there is a distribution satisfying conditions \ref{eqn:PRowCondition} and \ref{eqn:PColumnCondition}, then $\gamma \ge \gammin$, where $\gammin = \frac \delta m$.
\end{proposition}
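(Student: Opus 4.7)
My plan is to derive the bound by a simple double-counting of the expected weight $\mathbb{E}_{u\sim P}\|u\|$, evaluating it in two different ways and comparing.

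First, I would compute $\mathbb{E}_{u\sim P}\|u\|$ via linearity of expectation. Since $\|u\| = \sum_{i=1}^m u_i$ and $\Pr_{u\sim P}(u_i=1)=\gamma$ for each $i$ by condition \ref{eqn:PRowCondition}, this immediately gives $\mathbb{E}_{u\sim P}\|u\| = m\gamma$.

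Next, I would lower-bound $\mathbb{E}_{u\sim P}\|u\|$ using the parity condition. The key observation is the pointwise inequality $\|u\| \ge \mathbbm{1}[\|u\|\text{ is odd}]$, which holds because an odd weight is at least $1$, while a zero weight contributes $0$ to the right-hand side. Taking expectations and applying condition \ref{eqn:PColumnCondition} yields $\mathbb{E}_{u\sim P}\|u\| \ge \Pr_{u\sim P}(\|u\|\text{ is odd}) = \delta$.

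Combining the two gives $m\gamma \ge \delta$, i.e., $\gamma \ge \delta/m = \gamma_{\min}$, which is exactly the claim. There is essentially no obstacle here: the proof is a one-line application of linearity of expectation combined with the trivial bound $\|u\|\ge \mathbbm{1}[\|u\|\text{ is odd}]$. The statement is best viewed as a sanity check clarifying for which $(\gamma,\delta)$ pairs the maximum-entropy problem solved in the subsequent subsection is even feasible.
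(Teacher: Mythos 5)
Your proof is correct and is essentially identical to the paper's: both compute $\E_{u\sim P}\|u\|=\gamma m$ from the per-coordinate condition and lower-bound it by $\Pr(\|u\|\text{ is odd})=\delta$ using the fact that every odd-weight vector has weight at least $1$. No differences worth noting.
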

\begin{proof}
Let $P$ be such a distribution and let $u\sim P$. By Equation \ref{eqn:PRowCondition}, $\E(\|u\|) = \gamma m$. The lower bound on $\gamma$ follows since each odd vector weighs at least $1$ and thus
$$\delta = \Pr(\|u\|\text{ is odd}) \le \E(\|u\|).$$
\end{proof}
\begin{remark}
As we show soon, the condition in Proposition \ref{prop:p_condition} is also sufficient. 
\end{remark}

Let $m\ge 2$ and assume that $m,\gamma,\delta$ satisfy the strict inequalities $0<\delta< 1$ and $\gammin < \gamma$. We define the distribution $P=P_{m,\gamma,\delta}$ on $\bits{m}$ as follows:

\begin{equation} \label{eqn:PGeneralDefinition}
P(u) = \begin{cases}
\frac {\alpha^{\|u\|}}Z & \text{if }\|u\|\text{ is even}\\
\frac {\beta\cdot\alpha^{\|u\|}}Z & \text{if }\|u\|\text{ is odd}\\
\end{cases}
\end{equation}
where
$$Z = \sum_{u\in \evens{m}} \alpha^{\|u\|}+\beta\sum_{u\in \odds{m}} \alpha^{\|u\|} = \frac{(1+\beta)(1+\alpha)^m+(1-\beta)(1-\alpha)^m}{2}.$$
As we show there exist unique positive reals $\alpha$, $\beta$ for which Equations \ref{eqn:PRowCondition} and \ref{eqn:PColumnCondition} hold. Note that 
$$\Pr_{u\sim P}(\|u\|\text{ is odd}) = \frac{\beta\left((1+\alpha)^m-(1-\alpha)^m\right)}{2Z},$$
so Equation \ref{eqn:PColumnCondition} is equivalent to
$$
\beta = \frac{\delta}{1-\delta}\cdot\frac{(1+\alpha)^m+(1-\alpha)^m}{(1+\alpha)^m-(1-\alpha)^m},$$
showing in particular that $\alpha$ determines the value of $\beta$. Substituting the above into Equation \ref{eqn:PRowCondition} gives
\begin{align*} 
\gamma &= \Pr(u_i = 1) = \alpha\frac{(1+\beta)(1+\alpha)^{m-1}+(1-\beta)(1-\alpha)^{m-1}}{2Z} \\
&= \alpha(1-\delta)\frac{(1+\alpha)^{m-1}-(1-\alpha)^{m-1}}{(1+\alpha)^m+(1-\alpha)^m}+\alpha\delta\frac{(1+\alpha)^{m-1}+(1-\alpha)^{m-1}}{(1+\alpha)^m-(1-\alpha)^m}.
\end{align*}
Denote the right side of this expression by $\gamma(m,\alpha,\delta)$. The following generalizes Proposition \ref{prop:gammaIncreases}.

\begin{proposition}
Let $m\ge 2$. In the range $1 > \alpha > 0$ the function $\gamma(m,\alpha,\delta)$ increases from $\gammin$ to $\frac 12$. 
\end{proposition}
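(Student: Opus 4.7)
The plan is to decompose $\gamma(m,\alpha,\delta)$ into contributions from even- and odd-weight vectors, and handle each piece using the exponential-family identity $\frac{d}{d\log\alpha}\mathbb{E}[W]=\mathrm{Var}(W)$. Concretely, write
\[
\gamma(m,\alpha,\delta)=(1-\delta)\,\gamma_{\mathrm{e}}(m,\alpha)+\delta\,\gamma_{\mathrm{o}}(m,\alpha),
\]
where
\[
\gamma_{\mathrm{e}}(m,\alpha)=\alpha\,\frac{(1+\alpha)^{m-1}-(1-\alpha)^{m-1}}{(1+\alpha)^m+(1-\alpha)^m},\qquad
\gamma_{\mathrm{o}}(m,\alpha)=\alpha\,\frac{(1+\alpha)^{m-1}+(1-\alpha)^{m-1}}{(1+\alpha)^m-(1-\alpha)^m}.
\]
Observe that $\gamma_{\mathrm{e}}$ is precisely the function $\gamma(k,\alpha)$ of Proposition~\ref{prop:gammaIncreases}. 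Moreover, by applying $\alpha\tfrac{d}{d\alpha}\log$ to the partition functions $\tfrac{(1+\alpha)^m\pm(1-\alpha)^m}{2}$, one sees that $m\gamma_{\mathrm{e}}=\mathbb{E}[W_{\mathrm{e}}]$ and $m\gamma_{\mathrm{o}}=\mathbb{E}[W_{\mathrm{o}}]$, where $W_{\mathrm{e}}$ and $W_{\mathrm{o}}$ are the random weights with $\Pr(W=w)\propto\binom{m}{w}\alpha^w$ restricted to even and to odd $w\in\{0,\ldots,m\}$, respectively.

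For monotonicity, I would invoke the standard fact that for any distribution of the form $\Pr(W=w)\propto c_w\alpha^w$ one has $\alpha\,\frac{d}{d\alpha}\mathbb{E}[W]=\mathrm{Var}(W)$. Since $m\ge 2$ guarantees that each of the supports $\{w\in\{0,\ldots,m\}:w\text{ even}\}$ and $\{w\in\{0,\ldots,m\}:w\text{ odd}\}$ contains at least two elements, both variances are strictly positive throughout $\alpha\in(0,1)$. Hence $\gamma_{\mathrm{e}}$ and $\gamma_{\mathrm{o}}$ are strictly increasing on $(0,1)$, and their convex combination $\gamma(m,\cdot,\delta)$ is as well. (If a self-contained derivation is preferred over the probabilistic one, the sum-of-squares manipulation in the proof of Proposition~\ref{prop:gammaIncreases} adapts verbatim to $\gamma_{\mathrm{o}}$.)

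It remains to compute the boundary values. As $\alpha\to 0^+$, Taylor expansion gives $(1+\alpha)^{m-1}-(1-\alpha)^{m-1}=2(m-1)\alpha+O(\alpha^3)$ and $(1+\alpha)^m+(1-\alpha)^m=2+O(\alpha^2)$, so $\gamma_{\mathrm{e}}\sim(m-1)\alpha^2\to 0$; for the odd piece the numerator tends to $2$ while the denominator behaves as $2m\alpha+O(\alpha^3)$, yielding $\gamma_{\mathrm{o}}\to 1/m$. Therefore $\gamma(m,\alpha,\delta)\to\delta/m=\gammin$. As $\alpha\to 1^-$ the terms $(1-\alpha)^{m-1}$ and $(1-\alpha)^m$ vanish, so both $\gamma_{\mathrm{e}}$ and $\gamma_{\mathrm{o}}$ tend to $\tfrac{2^{m-1}}{2^m}=\tfrac12$, and hence $\gamma(m,\alpha,\delta)\to\tfrac12$.

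There is no substantial obstacle: the monotonicity reduces to the positivity of a variance, and the boundary analysis is a one-line Taylor expansion. The only place where a little care is needed is the $\alpha\to 0$ limit of $\gamma_{\mathrm{o}}$, where numerator and denominator both vanish, but the linear-order cancellation immediately yields the value $1/m$ that is needed to match $\gammin=\delta/m$.
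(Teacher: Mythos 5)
Your route is essentially the paper's: the paper also reduces to the two extreme cases $\delta=0,1$ (equivalently, your convex-combination decomposition into $\gamma_{\mathrm e}$ and $\gamma_{\mathrm o}$), handles $\delta=0$ by Proposition \ref{prop:gammaIncreases}, and handles $\delta=1$ by the same sum-of-squares computation, which is exactly the positivity of the variance in your exponential-family identity $\alpha\,\tfrac{d}{d\alpha}\E(W)=\Var(W)$; your explicit check of the boundary values is a welcome addition that the paper leaves implicit. One slip to fix: the claim that for $m\ge 2$ the odd support contains at least two elements is false at $m=2$, where the odd weights are just $\{1\}$; indeed $\gamma_{\mathrm o}(2,\alpha)=\alpha\cdot\frac{2}{4\alpha}=\tfrac12$ is constant, so $\Var(W_{\mathrm o})=0$ and $\gamma_{\mathrm o}$ is not strictly increasing there (the same degeneracy would appear if you ran the sum-of-squares argument, since the only term has $(2i-t)^2=0$). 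The conclusion is unharmed: for $m=2$ and $\delta<1$ the combination $(1-\delta)\gamma_{\mathrm e}+\delta\gamma_{\mathrm o}$ is still strictly increasing from $\gammin=\tfrac{\delta}{2}$ to $\tfrac12$ because $\gamma_{\mathrm e}$ strictly increases and $\gamma_{\mathrm o}$ is constant, while at $m=2$, $\delta=1$ the statement is vacuous since $\gammin=\tfrac12$; so you should state strict positivity of the odd variance only for $m\ge 3$ and treat $m=2$ by this one-line remark.
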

\begin{proof}
Clearly, it is enough to prove the proposition for $\delta=0,1$. The case $\delta=0$ was dealt with in Proposition \ref{prop:gammaIncreases}. The same argument works for $\delta = 1$ as well, since
$$\gamma = \alpha\frac{(1+\alpha)^{m-1}+(1-\alpha)^{m-1}}{(1+\alpha)^m-(1-\alpha)^m} = \frac{\sum_{i\text{ odd}}\binom{m-1}{i-1}\alpha^i}{\sum_{i\text{ odd}}\binom{m}{i}\alpha^i}.$$
\end{proof}

Hence, $\gamma(m,\alpha,\delta)$ has an inverse with respect to $\alpha$, which we denote $\alpha(m,\gamma,\delta)$. The uniqueness of $\alpha$ and $\beta$ follows.

We can also define $P$ at the extreme values $\delta \in \{0,1\}$ and $\gamma=\gammin$ by taking limits in Equation \ref{eqn:PGeneralDefinition}. The limit $\alpha\to 0$ corresponds to $\gamma=\gammin$ and $\beta\to 0$ resp. $\beta\to\infty$ to $\delta=0$ or $\delta=\infty$. We still require, however, that $\gamma > 0$. E.g., if $\gamma=\gammin$, $P$ yields each weight $1$ vector with probability $\frac \delta m$ and the weight $0$ vector with probability $1-\delta$. Also, as already mentioned $P_{m,\gamma,0}$ coincides with $P_{m,\gamma}$ from Section \ref{sec:MatrixEnumerationEvens}.

We turn to compute $P$'s entropy:
\begin{align*} \label{eqn:PGeneralEntropy}
h(P) &= -\sum_{u\in \evens{m}}\frac{\alpha^{\|u\|}}{Z}\log\frac{\alpha^{\|u\|}}{Z}-\sum_{u\in \odds{m}}\frac{\beta\alpha^{\|u\|}}{Z}\log\frac{\beta\alpha^{\|u\|}}{Z}
\\ &= \log Z - \delta\log\beta  - \gamma m \log\alpha \\ &= h(\delta)+(1-\delta)\log((1+\alpha)^m+(1-\alpha)^m)+\delta\log((1+\alpha)^m-(1-\alpha)^m) \\ &~~~-\gamma m\log\alpha - 1 \numberthis
\end{align*}
and recall that $F(m,\gamma,\delta) = h(P) - h(\delta)$. Consistency for the boundary cases $\delta\in\{0,1\}$ or $\gamma = \gammin$ follows by continuity and passage to the limit. In particular, $F(m,\gamma,0) = F(m,\gamma)$. Also, let $F(m,\gamma,\delta) = -\infty$ for $\gamma < \gammin$.

For $\gammin<\gamma < \frac 12$ we also have the following generalization of Equation \ref{eqn:FByMinimization}, which follows from the same argument:
\begin{equation} \label{eqn:FByMinimizationGeneral}
F(m,\gamma,\delta) = \min_{x>0} g(m,\gamma,x,\delta)
\end{equation}
where
$$
g(m,\gamma,x,\delta) = (1-\delta)\log\left((1+x)^m+(1-x)^m\right)+\delta\log\left((1+x)^m-(1-x)^m\right)-\gamma m\log x - 1$$
with the minimum attained at $x=\alpha$.

We are now ready to show that $P$ is the relevant max-entropy distribution.

\begin{lemma}  \label{lem:PHasMaximumEntropy}
Fix $m\ge 2$, $0\le \delta\le 1$ and $\gammin\le \gamma < \frac 12$. The largest possible entropy of a $\bits{m}$-distribution satisfying Equations \ref{eqn:PRowCondition} and \ref{eqn:PColumnCondition}, is $h(P_{m,\gamma,\delta})$. 
\end{lemma}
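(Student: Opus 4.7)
The plan is to use the standard maximum-entropy / Gibbs-inequality argument, treating $P=P_{m,\gamma,\delta}$ as an exponential family whose sufficient statistics are exactly the quantities fixed by Equations \ref{eqn:PRowCondition} and \ref{eqn:PColumnCondition}. For the interior case $0<\delta<1$ and $\gammin<\gamma<\tfrac12$, let $Q$ be an arbitrary distribution on $\bits{m}$ satisfying the two constraints, and consider the cross entropy
\[
H(Q,P) \;=\; -\sum_{u\in\bits{m}} Q(u)\log P(u).
\]
By Gibbs' inequality, $h(Q)\le H(Q,P)$, so it suffices to show $H(Q,P)=h(P)$.

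Using the explicit form \eqref{eqn:PGeneralDefinition}, for every $u\in\bits{m}$ we have
\[
-\log P(u) \;=\; \log Z \;-\; \|u\|\log\alpha \;-\; \mathbbm{1}[\,\|u\|\text{ odd}\,]\log\beta,
\]
where $\alpha=\alpha(m,\gamma,\delta)$ and $\beta$ are the parameters determined in the preceding discussion. Taking expectation under $Q$ and applying the two constraints yields
\[
H(Q,P) \;=\; \log Z \;-\; \E_{u\sim Q}\bigl(\|u\|\bigr)\log\alpha \;-\; \Pr_{u\sim Q}(\|u\|\text{ odd})\log\beta \;=\; \log Z \;-\; \gamma m\log\alpha \;-\; \delta\log\beta.
\]
The key point is that the right-hand side does not depend on $Q$ at all --- only on the constraint values $\gamma$ and $\delta$. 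Comparing with the computation of $h(P)$ in Equation \eqref{eqn:PGeneralEntropy}, the right-hand side is exactly $h(P)$. Thus $h(Q)\le H(Q,P)=h(P)$, proving the lemma in the interior regime.

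The only remaining issue is the boundary cases $\delta\in\{0,1\}$ or $\gamma=\gammin$, which is where I expect the one mild subtlety: at these values $P$ is defined only as a limit of the exponential family (with $\alpha\to 0$ or $\beta\in\{0,\infty\}$), so the Gibbs inequality cannot be applied to $\log P(u)$ directly when $P(u)=0$ for some $u$. I would handle these by a short continuity/support argument: when $\delta=0$, every admissible $Q$ must be supported on $\evens{m}$ (else $\Pr_Q(\|u\|\text{ odd})>0$), so the $\log\beta$ term drops out entirely and the interior argument goes through verbatim; the $\delta=1$ case is symmetric; and $\gamma=\gammin$ forces $Q$ to be supported on vectors of weight $0$ or $1$, reducing the problem to an entropy comparison among distributions on $m+1$ atoms that matches the limiting $P$. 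Alternatively one may simply approximate any feasible $Q$ by interior distributions, apply the interior inequality, and pass to the limit using continuity of $h$ and of $h(P_{m,\gamma,\delta})$ in the parameters.
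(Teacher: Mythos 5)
Your proof is correct, but it follows a genuinely different route from the paper. You use the Gibbs/cross-entropy argument: since $-\log P(u)$ is an affine function of exactly the two statistics fixed by Equations \ref{eqn:PRowCondition} and \ref{eqn:PColumnCondition} (namely $\|u\|$ and the parity indicator), the cross entropy $H(Q,P)$ is the same for every feasible $Q$ and equals $h(P)$ by Equation \ref{eqn:PGeneralEntropy}, so $h(Q)\le H(Q,P)=h(P)$. The paper instead argues variationally inside the polytope of feasible distributions: it invokes strict concavity to get a unique, coordinate-symmetric maximizer, perturbs along constraint-preserving directions to derive the relations $a_{i-2}a_{i+2}=a_i^2$ and $a_i a_{i+3}=a_{i+1}a_{i+2}$ on the weight-level probabilities, and then identifies the maximizer with $P_{m,\gamma,\delta}$ via the uniqueness of $(\alpha,\beta)$, handling the degenerate geometric-sequence cases separately. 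Your argument is shorter and avoids the symmetry/uniqueness bookkeeping and the case analysis, and it is in the same spirit as the cross-entropy computation the paper already uses to derive Equation \ref{eqn:FByMinimization}; its only delicate point is exactly the one you flag, namely that Gibbs' inequality needs $P(u)>0$ on the support of $Q$, which you settle correctly at the boundary: for $\delta\in\{0,1\}$ every feasible $Q$ is automatically supported where the limiting $P$ is positive, and for $\gamma=\gammin$ the feasible polytope degenerates to the single point $P$ (the same observation the paper makes). The paper's approach buys a self-contained characterization of the optimizer without needing the explicit exponential form up front; yours buys brevity and transparency. Both are valid.
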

\begin{proof}
Let $\cal D$ denote the polytope of $\bits{m}$-distributions that satisfy Conditions \ref{eqn:PRowCondition} and \ref{eqn:PColumnCondition}. Note that if $\gamma = \gammin$ this polytope is reduced to a point, and the claim is trivial. We henceforth assume that $\gammin<\gamma$, and seek a distribution $Q\in \cal D$ of maximum entropy. This distribution is unique, since the entropy function is strictly concave. Also, the value of $Q(u)$ depends only on $\|u\|$ for all $u\in \bits{m}$, since the optimum is unique and this maximization problem is invariant to permutation of coordinates in $\bits{m}$. 

Let $a_i = Q(u)$ where $\|u\| = i$. We claim that 
\begin{equation} \label{eqn:PHasMaximumEntropyRelation1}
a_{i-2}\cdot a_{i+2} = a_{i}^2
\end{equation}
for every $2\le i\le m-2$. Indeed, let $x,y,y',z\in \bits{m}$ be the indicator vectors for, respectively, the sets $\{3,\ldots,i\}$,  $\{1,\ldots,i\}$, $\{3,\ldots,i+2\}$ and $\{1,\ldots,i+2\}$. Consider the distribution $Q+\theta$ where
$$\theta(u) = \begin{cases}
\epsilon& \text{for }u = y,y'\\
-\epsilon& \text{for }u = x,z\\
0 &\text{otherwise}.
\end{cases}
$$
Note that, if $a_{i-2},a_{i},a_{i+2}$ are positive, $Q+\theta\in \cal D$ for $|\epsilon|$ small enough. Hence, by the optimality of $Q$,
$$
0 = \nabla_\theta h(Q) = \log{\frac{a_{i-2}a_{i+2}}{a_{i}^2}},
$$
yielding Equation \ref{eqn:PHasMaximumEntropyRelation1}.

We also want to rule out the possibility that exactly one side of Equation \ref{eqn:PHasMaximumEntropyRelation1} vanishes. However, even if exactly one side vanishes, it is possible to increase $h(Q)$ by moving in the direction of either $\theta$ or $-\theta$.

A similar argument yields 
\begin{equation} \label{eqn:PHasMaximumEntropyRelation2}
a_i\cdot a_{i+3} = a_{i+1}\cdot a_{i+2}
\end{equation}
for $0\le i\le m-3$. Here, we take
$$
\theta(u) = \begin{cases}
\epsilon& \text{for }u = x,w\\
-\epsilon& \text{for }u = y,z\\
0 &\text{otherwise}.
\end{cases}
$$
where $x,y,z,w$ are the respective indicator vectors of $\{3,\ldots,i+2\}$, $\{3,\ldots,i+3\}$, $\{1,\ldots,i+2\}$ and $\{1,\ldots,i+3\}$.

Equation \ref{eqn:PHasMaximumEntropyRelation1} and \ref{eqn:PHasMaximumEntropyRelation2} imply that one of the following must hold:
\begin{enumerate}
\item $a_0,a_2,\ldots,a_{2\floor{\frac m2}}$ and $a_1,a_3,\ldots,a_{2\floor{\frac {m-1}2}+1}$ are geometric sequences with the same positive quotient.
\item $a_0 = (1-\delta)$, $a_1 = \delta$ and $a_i = 0$ for every $i\ge 2$.
\item $a_{m-1}$ and $a_m$ are $\delta$ and $1-\delta$ according to $m$'s parity, and $a_i=0$ for all $i\le m-2$. 
\end{enumerate}

Case 2 corresponds to $\gamma = \gammin$ and case 3 is impossible since $\gamma < \frac 12$, so we are left with case 1. If $0<\delta<1$, note that $Q$ must satisfy Equation \ref{eqn:PGeneralDefinition} for some positive $\alpha$ and $\beta$. By the uniqueness of these parameters, it follows that $Q=P$. 

If $\delta = 0,1$ then $a_i$ vanishes for odd resp.\ even $i$'s. Thus, $Q$ satisfies Equation \ref{eqn:PGeneralDefinition} with $\beta$ going to $0$ or $\infty$. 
\end{proof}

\subsection{Properties of $F(m,\gamma,\delta)$}

Our analysis of Equation \ref{eqn:T_VBound} requires that we understand $F$'s behavior in certain regimes.

\begin{figure}
    \centering
    \includegraphics[width=0.5\textwidth]{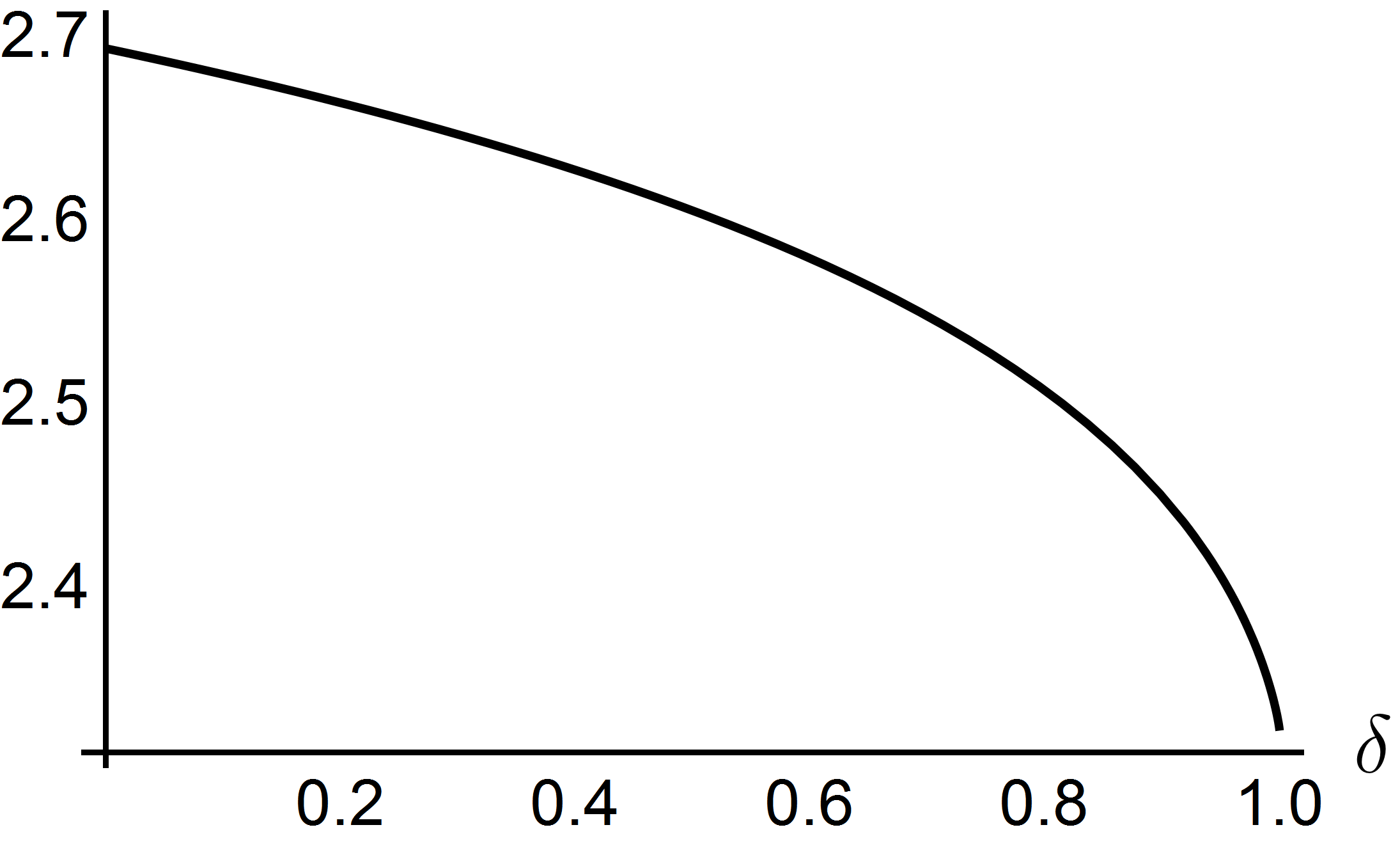} 
    \caption{Illustration for Lemma \ref{lem:FMonotoneInDelta} - \label{fig:FofDelta}$F(5,\frac 15,\delta)$}
\end{figure}

\begin{lemma} \label{lem:FMonotoneInDelta}
If $m>1$ is an integer, and $0<\gamma < \frac 12$, then $F(m,\gamma,\delta)$ is a non-increasing function of $\delta$ (see Figure \ref{fig:FofDelta}).
\end{lemma}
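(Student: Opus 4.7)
My plan is to use the minimization characterization of $F$ from Equation \ref{eqn:FByMinimizationGeneral}: $F(m,\gamma,\delta) = \min_{x>0} g(m,\gamma,x,\delta)$, with the minimum attained at $x=\alpha(m,\gamma,\delta)$. The strategy is to show that for \emph{each} fixed $x$ in the relevant range, $g(m,\gamma,x,\delta)$ is non-increasing in $\delta$, and then deduce monotonicity of the minimum by a standard envelope-style argument.

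The key computation is the partial derivative
$$\frac{\partial g}{\partial \delta}(m,\gamma,x,\delta) = \log \frac{(1+x)^m - (1-x)^m}{(1+x)^m + (1-x)^m},$$
since the terms involving $\gamma$ and $x$ (but not $\delta$) vanish under $\partial/\partial\delta$. For any $m \ge 1$ and $0<x<1$, both $(1+x)^m$ and $(1-x)^m$ are strictly positive, so the numerator is strictly smaller than the denominator and the logarithm is negative. Thus $g(m,\gamma,x,\delta)$ is strictly decreasing in $\delta$ on the whole relevant range $x\in(0,1)$.

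To pass from this pointwise monotonicity to the required statement, fix $0\le \delta_1 < \delta_2 \le 1$, and let $\alpha_1 = \alpha(m,\gamma,\delta_1)$ be the minimizer corresponding to $\delta_1$. Since $\gamma < 1/2$, the proposition following Equation \ref{eqn:PGeneralEntropy} guarantees $\alpha_1 \in (0,1)$, so the pointwise inequality applies at $x=\alpha_1$. Then
\begin{equation*}
F(m,\gamma,\delta_2) = \min_{x>0} g(m,\gamma,x,\delta_2) \le g(m,\gamma,\alpha_1,\delta_2) \le g(m,\gamma,\alpha_1,\delta_1) = F(m,\gamma,\delta_1),
\end{equation*}
which gives the desired monotonicity (in fact strict monotonicity in the interior).

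The only real obstacle is handling the boundary cases where the definition of $P_{m,\gamma,\delta}$ was given by taking limits: $\delta \in \{0,1\}$ and $\gamma = \gammin$. For these I would invoke continuity of $F$ in $\delta$, which follows from the continuity of the minimum in Equation \ref{eqn:FByMinimizationGeneral} together with the continuous dependence of $\alpha(m,\gamma,\delta)$ on its parameters, so the non-strict monotonicity extends to the closed interval $\delta\in[0,1]$. No other step seems delicate; the proof is essentially a one-line derivative calculation plus the envelope inequality.
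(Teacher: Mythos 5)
Your proof is correct and is essentially the paper's own argument: both fix the minimizer $\alpha$ for the smaller $\delta$, use $F(m,\gamma,\delta')\le g(m,\gamma,\alpha,\delta')$, and exploit that $g$ is non-increasing in $\delta$ at fixed $x\in(0,1)$ (the paper writes the finite difference $(\delta'-\delta)\bigl(\log((1+\alpha)^m-(1-\alpha)^m)-\log((1+\alpha)^m+(1-\alpha)^m)\bigr)\le 0$ where you take the $\delta$-derivative, a cosmetic distinction). The only point worth a word, which the paper dispatches in one sentence, is the regime $\delta>\gamma m$ where $\gamma<\gammin$ and $F(m,\gamma,\delta)=-\infty$, so monotonicity there is trivial rather than a continuity matter.
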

\begin{proof}
If $\delta > \gamma m$, then $\gamma < \gammin$ and $F(m,\gamma,\delta) = -\infty$. It suffices, therefore, to consider the range $0\le \delta < \gamma m$.

Let $0\le \delta < \delta' < \gamma m$ and let $\alpha = \alpha(m,\gamma,\delta)$. By Equations \ref{eqn:FByMinimizationGeneral} and \ref{eqn:PGeneralEntropy}:
\begin{align*}
&~~~~ F(m,\gamma,\delta') - F(m,\gamma,\delta) \le g(m,\alpha,\delta') - F(m,\gamma,\delta) \\ &
=(\delta'-\delta)\left(\log\left((1+\alpha)^m-(1-\alpha)^m\right)-\log\left((1+\alpha)^m+(1-\alpha)^m\right)\right) \le 0
\end{align*} 
\end{proof}

We now return to the case $\delta = 0$, and discuss the convexity of $F$ in this regime.

\begin{lemma} \label{lem:FConvexInk}
For any $0<\gamma < \frac 12$ the function $F(m,\gamma)$ is strictly convex in $m$ for $m \ge 2$. (See Figure \ref{fig:F(k)}).
\end{lemma}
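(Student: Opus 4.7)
The plan is to differentiate the variational formula $F(m,\gamma)=\min_{x>0} g(m,\gamma,x)$ from Equation~\ref{eqn:FByMinimization} using the envelope theorem and reduce positivity of $F''(m)$ to a one-variable inequality. Writing $\alpha=\alpha(m,\gamma)$ for the unique minimizer, the envelope theorem gives $F'(m)=\partial_m g(m,\gamma,\alpha)$, and implicit differentiation of $\partial_x g(m,\gamma,\alpha(m,\gamma))=0$ yields the Schur complement formula
$$F''(m)=\left(\partial_m^2 g-\frac{(\partial_m\partial_x g)^2}{\partial_x^2 g}\right)\bigg|_{x=\alpha}.$$
Since $\partial_x^2 g>0$ at the minimum, strict convexity of $F(\cdot,\gamma)$ reduces to positivity of the Hessian determinant of $g$ along the optimal curve.

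I would then compute each second partial derivative explicitly, writing $a=1+\alpha$, $b=1-\alpha$, $S_j=a^j+b^j$, $D_j=a^j-b^j$, and simplify using the optimality relation $\gamma/\alpha=D_{m-1}/S_m$ from Equation~\ref{eqn:gamma(alpha)}. The optimality relation neatly cancels the $-\gamma/x$ contribution to $\partial_m\partial_x g$ against the $D_{m-1}/S_m$ contribution, leaving
$$\partial_m^2 g=\frac{(ab)^m\log^2(a/b)}{S_m^2},\qquad \partial_m\partial_x g\big|_{\mathrm{opt}}=\frac{2m(ab)^{m-1}\log(a/b)}{S_m^2},$$
together with $\partial_x^2 g|_{\mathrm{opt}} = m(m-1)S_{m-2}/S_m + m\gamma(1-m\gamma)/\alpha^2$. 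Substituting these, canceling the common $\log^2(a/b)/S_m^4$, multiplying by $\alpha$, and using the recursion $S_m=S_{m-1}+\alpha D_{m-1}$, the determinantal inequality boils down to
$$(m-1)\alpha\bigl[S_m S_{m-2}-D_{m-1}^2\bigr]+D_{m-1}S_{m-1}\;\ge\;4m\alpha(ab)^{m-2}.$$

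The crucial algebraic identity is
$$S_m S_{m-2}-D_{m-1}^2=4(ab)^{m-2},$$
obtained by expanding $S_m S_{m-2}=S_{2m-2}+2(1+\alpha^2)(ab)^{m-2}$ and $D_{m-1}^2=S_{2m-2}-2(ab)^{m-1}$ and using $ab=1-\alpha^2$. Once this is substituted, the $(m-1)\alpha$- and $m\alpha$-terms collapse and the inequality becomes $D_{m-1}S_{m-1}\ge 4\alpha(ab)^{m-2}$, i.e., $a^{2(m-1)}-b^{2(m-1)}\ge 4\alpha(ab)^{m-2}$. Dividing by $(ab)^{m-1}$ and setting $r=(1+\alpha)/(1-\alpha)>1$, this becomes
$$r^{m-1}-r^{-(m-1)}\;\ge\;r-r^{-1},$$
which is immediate since $t\mapsto r^t-r^{-t}$ has derivative $(\log r)(r^t+r^{-t})>0$ and is therefore strictly increasing on $(0,\infty)$. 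Strict inequality holds exactly when $m-1>1$, so $F''(m)>0$ for $m>2$, yielding strict convexity on $[2,\infty)$ (the borderline $F''(2)=0$ is consistent with $F(2,\gamma)=h(\gamma)$ lying on every chord to a larger argument and does not obstruct strict midpoint convexity on the closed interval).

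The main obstacle is the algebraic bookkeeping: after substituting the optimality condition and the recursion $S_m=S_{m-1}+\alpha D_{m-1}$, roughly six terms appear, and one must recognize that they regroup around the identity $S_m S_{m-2}-D_{m-1}^2=4(ab)^{m-2}$ in order to reduce the question to the elementary monotonicity of $t\mapsto r^t-r^{-t}$. Once this identity is in hand the rest is mechanical.
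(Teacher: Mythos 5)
Your proof is correct and follows essentially the same route as the paper: the envelope/Schur-complement reduction of $\frac{\partial^2 F}{\partial m^2}$ to the positivity of the Hessian determinant of $g$ at $x=\alpha$, explicit second derivatives written in terms of $a=1+\alpha$, $b=1-\alpha$, the optimality relation from Equation \ref{eqn:gamma(alpha)}, and the identity $(a^m+b^m)(a^{m-2}+b^{m-2})-(a^{m-1}-b^{m-1})^2=4(ab)^{m-2}$. Your exact reduction to $r^{m-1}-r^{-(m-1)}\ge r-r^{-1}$ is a slightly cleaner endgame and even treats the borderline $m=2$ (where the determinant vanishes) more carefully than the paper's strict inequality chain, which degenerates to equality exactly there.
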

\begin{proof}
Since $\gamma$ is fixed throughout the proof, we can and will denote $F(m) = F(m,\gamma)$, $g(m,x) = g(m,\gamma,x)$. Also, $\alpha = \alpha(m,\gamma)$ is the value of $x$ which minimizes $g(m,\gamma,x)$. This allows us to extend the definition of $\alpha$ to real $m$. Note that Equation \ref{eqn:gamma(alpha)} still holds in this extended setting, and that $1> \alpha> 0$. In addition, $a = 1+\alpha$ and $b = 1-\alpha$. 

Our goal is to show that for $m\ge2$ there holds
$$\frac{\partial^2 F}{\partial m^2}(m,\alpha) \ge 0.$$
It follows from  Equation \ref{eqn:FByMinimization} that 
\begin{equation} \label{eqn:dg/dxVanishes}
\frac{\partial g}{\partial x}(m,\alpha) = 0.
\end{equation}
Taking the derivative w.r.t.\ $m$ yields
\begin{equation} \label{eqn:dalpha/dm}
\frac{\partial^2 g}{\partial x \partial m}(m,\alpha) + \frac{\partial^2 g}{\partial x^2}(m,\alpha)\frac{d\alpha}{dm} = 0.
\end{equation}
Using Equation \ref{eqn:dg/dxVanishes} we obtain:
\begin{align*} 
\frac{\partial F}{\partial m} &= \frac{\partial g}{\partial m}(m,\alpha) + \frac{\partial g}{\partial x}(m,\alpha)\frac{d \alpha}{dm} =\frac{\partial g}{\partial m}(m,\alpha).
\end{align*}
Next,
$$
\frac{\partial^2 F}{\partial m^2} = \frac{\partial^2 g}{\partial m^2}(m,\alpha) + \frac{\partial^2 g}{\partial m\partial  x}(m,\alpha)\frac{d \alpha}{dm} = \frac{\partial^2 g}{\partial m^2}(m,\alpha) - \frac{\left(\frac{\partial^2 g}{\partial m \partial x}(m,\alpha)\right)^2}{\frac{\partial^2 g}{\partial x^2}(m,\alpha)}
$$
where the second equality follows from Equation \ref{eqn:dalpha/dm}. The partial derivatives commute since $g$ is smooth. We claim that $\frac{\partial^2 g}{\partial x^2} > 0$. To this end we refer to the definition of $g$ in Equation \ref{eqn:CrossEntropy} and take its derivative twice, then use the defining relation relation between $\gamma$ and $\alpha$ (Equation \ref{eqn:gamma(alpha)}) to see that the sign of this derivative is the same as that of
\begin{align*}
&~~~~(m-1)(a^{m-2}+b^{m-2})(a^m+b^m)-m(a^{m-1}-b^{m-1})^2+\frac{(a^m+b^m)(a^{m-1}-b^{m-1})}{\alpha}\\
&> (m-1)(a^{m-2}+b^{m-2})(a^m+b^m)-m(a^{m-1}-b^{m-1})^2+(a^m+b^m)(a^{m-1}-b^{m-1})\\
&> 0.
\end{align*}
Thus, to prove the lemma it suffices to show that
$$
\frac{\partial^2 g}{\partial m^2}(m,\alpha)\frac{\partial^2 g}{\partial x^2}(m,\alpha) > \left(\frac{\partial^2 g}{\partial m\partial x}(m,\alpha)\right)^2
$$
when $m\ge2$.

We wish to show that $rs > t^2$, where
\begin{align*}
r &= \ln2(a^m+b^m)^2\frac{\partial^2 g}{\partial m^2}(m,\alpha) \\
s &= \ln2(a^m+b^m)^2\frac{\partial^2 g}{\partial x^2}(m,\alpha) \\
t &= \ln2(a^m+b^m)^2\frac{\partial^2 g}{\partial m\partial x}(m,\alpha)
\end{align*}
We start with the first order derivatives
$$
\frac{\partial g}{\partial m}(m,\alpha) = \frac{a^m\log a+b^m\log b}{a^m+b^m}-\gamma\log x
$$
and
$$
\frac{\partial g}{\partial x}(m,\alpha) = \frac{m(a^{m-1}-b^{m-1})}{a^m+b^m}-\frac{m\gamma}x.
$$

Expand the second order derivatives with $\gamma$ replaced according to Equation \ref{eqn:gamma(alpha)} to get
\begin{align*}
r &= m(m-1)(a^{m-2}+b^{m-2})(a^m+b^m)-m^2(a^{m-1}-b^{m-1})^2+\frac{m\gamma(a^m+b^m)^2}{\alpha^2} \\
&=m\left((m-1)(a^{m-2}+b^{m-2})(a^m+b^m)+\frac{(a^{m-1}-b^{m-1})(a^m+b^m)}{\alpha}-m(a^{m-1}-b^{m-1})^2 \right) \\
&> m\left((m-1)(a^{m-2}+b^{m-2})(a^m+b^m)+(a^{m+2}+b^{m+2})(a^m+b^m)-m(a^{m-1}-b^{m-1})^2\right)\\
&= 4m^2a^{m-2}b^{m-2}.
\end{align*}
The inequality follows from ${a^{m-1}-b^{m-1}}-\alpha(a^{m-2}+b^{m-2}) = \frac{a+b}{2}(a^{m-2}-b^{m-2}) > 0$.
Also
\begin{align*}
s &= (a^m+b^m)(a^m(\log a)^2 + b^m (\log b)^2)-(a^m\log a+b^m\log b)^2 \\
&= a^mb^m(\log a - \log b)^2. 
\end{align*}
and
\begin{align*}
t &= \left((m\log a+1)a^{m-1}-(m\log b+1)b^{m-1}\right)(a^m+b^m)\\
&-m(a^{m-1}-b^{m-1})(a^m\log a+b^m\log b) - \frac{\gamma(a^m+b^m)^2}{\alpha}\\
&= 2ma^{m-1}b^{m-1}(\log a-\log b).
\end{align*}
We therefore conclude that
$$rs > t^2$$
as claimed.
\end{proof}

The following corollary follows immediately from Lemma \ref{lem:FConvexInk}.

\begin{corollary} \label{cor:FConvexDiscrete}
For every $0<\gamma <\frac 12$ and every $2\le m\le m'$, the holds
$$F(m',\gamma) + F(m,\gamma) < F(m'+1,\gamma) + F(m-1, \gamma).$$
\end{corollary}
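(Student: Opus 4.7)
The plan is to deduce the corollary as a one-line translation of the strict convexity of $m \mapsto F(m,\gamma)$ established in Lemma~\ref{lem:FConvexInk}. The general fact I will invoke is that for any strictly convex $C^1$ function on an interval, the forward-difference map
$$\Delta(t) \;:=\; F(t+1,\gamma) - F(t,\gamma)$$
is strictly increasing in $t$. This follows by applying the mean value theorem on $[t,t+1]$ to write $\Delta(t) = \frac{\partial F}{\partial m}(\xi_t,\gamma)$ with $\xi_t \in (t,t+1)$, and then noting that the strict convexity provided by Lemma~\ref{lem:FConvexInk} forces $\frac{\partial F}{\partial m}(\cdot,\gamma)$ to be strictly increasing on $[2,\infty)$.

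Given $2 \le m \le m'$, I would evaluate this monotonicity at $t_1 = m-1$ and $t_2 = m'$. Since $m-1 < m \le m'$, we have $t_1 < t_2$ strictly, so $\Delta(m-1) < \Delta(m')$, which unfolds to
$$F(m,\gamma) - F(m-1,\gamma) \;<\; F(m'+1,\gamma) - F(m',\gamma),$$
and rearranging recovers the inequality in the corollary.

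There is no substantive obstacle; the work was already done in Lemma~\ref{lem:FConvexInk}. The only technical point worth mentioning is that the interval $[m-1,\,m'+1]$ must lie within the convexity range $[2,\infty)$ supplied by the lemma, which is automatic once $m \ge 3$. The boundary value $m = 2$ either falls outside the intended regime of the corollary or is handled by observing that the formula in Equation~\ref{eqn:FByMinimization} gives $F(1,\gamma) = -\infty$ (the integrand $-\gamma\log x$ is unbounded below as $x \to \infty$), in which case the claimed inequality holds trivially.
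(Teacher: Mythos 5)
Your main argument is exactly the paper's: the corollary is read off from the strict convexity of $m\mapsto F(m,\gamma)$ in Lemma \ref{lem:FConvexInk}, via the fact that the increments $F(t+1,\gamma)-F(t,\gamma)$ are strictly increasing, evaluated at $t=m-1$ and $t=m'$. That settles every case with $m\ge 3$, where the interval $[m-1,m'+1]$ lies inside the convexity range $[2,\infty)$.

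Your fallback for the boundary case $m=2$, however, is backwards. The term $F(m-1,\gamma)=F(1,\gamma)$ sits on the \emph{right-hand} side of the claimed inequality, so declaring $F(1,\gamma)=-\infty$ would make the inequality trivially \emph{false}, not trivially true. Nor can the case be rescued by another convention: taking the limiting value of the minimization formula \ref{eqn:FByMinimization} as $k\to 1^+$ (minimizing over $x\in(0,1)$) gives $F(1,\gamma)=0$, and then the $m=2$ case would demand $F(m'+1,\gamma)-F(m',\gamma)>h(\gamma)$, which contradicts the fact that the increments of $F$ are strictly increasing with limit $h(\gamma)$ (Proposition \ref{prop:FBounds}), hence all strictly below $h(\gamma)=F(2,\gamma)-F(1,\gamma)$. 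So your first option is the correct resolution: the inequality should be understood for $m\ge 3$, i.e.\ $m-1\ge 2$, which is also the only way it is invoked in Section \ref{sec:DeriveMoments} (blocks of size $1$ are treated separately there, via Proposition \ref{prop:F12Bound}); the stated range ``$2\le m$'' in the corollary is a slip in the paper rather than something your $-\infty$ convention can repair.
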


We also need the following result in order to bound $|T_V|$.
\begin{proposition} \label{prop:F12Bound}
Let $0 < \gamma < \frac 12$, $0\le \delta\le 1$ and $m\ge 2$. Then,
$$F(1,\gamma,\delta)+F(m+1,\gamma) < F(2,\gamma)+F(m,\gamma).$$
\end{proposition}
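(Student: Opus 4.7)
My approach is to split on whether $\delta=\gamma$. Because $\bits{1}$ has only two elements, every probability distribution $P$ on $\bits{1}$ automatically satisfies $\Pr_{u\sim P}(u_1=1)=\Pr_{u\sim P}(\|u\|\text{ is odd})$. Hence, if $\delta\ne\gamma$ no distribution meets the two constraints \ref{eqn:PRowCondition} and \ref{eqn:PColumnCondition} simultaneously, and the natural convention $F(1,\gamma,\delta)=-\infty$ (matching the paper's convention of assigning $-\infty$ whenever the feasible polytope is empty) makes the desired inequality trivially true. If $\delta=\gamma$ then the only admissible distribution is Bernoulli$(\gamma)$, with entropy $h(\gamma)$, so $F(1,\gamma,\gamma)=h(\gamma)-h(\gamma)=0$. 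Using $F(2,\gamma)=h(\gamma)$ from the earlier proposition, what remains to prove is
$$F(m+1,\gamma)-F(m,\gamma)<h(\gamma).$$

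For this discrete-derivative bound I would combine two tools already at hand. Strict convexity of $F(\cdot,\gamma)$ on $[2,\infty)$ (Lemma \ref{lem:FConvexInk}) forces the forward differences $d_m:=F(m+1,\gamma)-F(m,\gamma)$ to form a strictly increasing sequence in $m\ge 2$, since $d_m=\int_m^{m+1}\!F'(s,\gamma)\,ds$ and $F'(\cdot,\gamma)$ is strictly increasing. Proposition \ref{prop:FBounds} writes $F(m,\gamma)=mh(\gamma)-1+\epsilon_m$ with $0\le\epsilon_m\le O((1-2\gamma)^m)$, so $d_m=h(\gamma)+(\epsilon_{m+1}-\epsilon_m)\to h(\gamma)$ as $m\to\infty$. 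A strictly increasing sequence that converges to a limit $L$ lies strictly below $L$ throughout, so $d_m<h(\gamma)$ for every $m\ge 2$, which is exactly the required inequality.

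The only real obstacle is the boundary bookkeeping at $m=1$, where the maximum-entropy distribution $P_{m,\gamma,\delta}$ was only introduced for $m\ge 2$; once one adopts the $-\infty$ convention in the infeasible case and computes $F(1,\gamma,\gamma)=0$ in the feasible one, the rest of the proof is a one-line corollary of the convexity lemma plus the exponential-accuracy asymptotic for $F(m,\gamma)$.
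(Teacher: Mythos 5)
Your proposal is correct and follows essentially the same route as the paper: both reduce the claim to $F(m+1,\gamma)-F(m,\gamma)<h(\gamma)$ by observing $F(1,\gamma,\delta)\le 0$ and $F(2,\gamma)=h(\gamma)$, and both obtain that discrete-derivative bound from strict convexity of $F(\cdot,\gamma)$ (Lemma \ref{lem:FConvexInk}) together with the limiting slope $h(\gamma)$ from Proposition \ref{prop:FBounds}. Your treatment merely spells out the $m=1$ bookkeeping and the ``increasing differences converging to $h(\gamma)$'' argument that the paper states in one sentence.
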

\begin{proof}
Recall that $F(1,\gamma,\delta) \le 0$ and $F(2,\gamma)=h(\gamma)$. Thus, the claim follows from
$$F(m+1,\gamma) < F(m,\gamma) + h(\gamma).$$
This holds since $F$ is strictly convex in $m$ (Lemma \ref{lem:FConvexInk}) and since the limit slope of $F$ is $h(\gamma)$ (Proposition \ref{prop:FBounds}).
\end{proof}

\section{Derivation of the main theorems} \label{sec:DeriveMoments}
We can now return to the beginning of Section \ref{sec:MatrixEnumerationGeneral} and complete our proof.
Equation \ref{eqn:k'thMomentAfterMoebius} can be restated as
\begin{equation} \label{eqn:k'thMomentByDim}
\E\left((X-\E(X))^k\right) = \Theta\left(\sum_{d=0}^{k-1} G_d\right)
\end{equation}
where
\begin{equation} \label{eqn:GDefinition}
G_d = N^{-\lambda d}\sum_{\substack{V\le \bits{k}\\d(V)=d\\V\text{ robust}}}|T_V|.
\end{equation}

We need to determine which term dominates Equation \ref{eqn:k'thMomentByDim}. We use the crude upper bound of $2^{\min(d,k-d)\cdot k}$ on the number of $d$-dimensional linear subspaces $V$ of $\bits{k}$. This bound follows by considering the smaller of the two: a basis for $V$ or for its orthogonal complement.

We proceed to bound $|T_V|$ for a robust $d$-dimensional subspace $V\le \bits{k}$. When $d < \frac{k}{2}$, the trivial bound $\log|T_V|\le n\cdot h(Q_V) \le ndh(\gamma)$ suffices. Indeed, a vector sampled from $Q_V$ is determined by $d$ of its bits, each of which has entropy $h(\gamma)$. It follows that
\begin{equation} \label{eqn:G_dBoundSmalld}
G_d \le N^{d (h(\gamma)-\lambda)+\frac{kd}n}.
\end{equation}

To deal with the range $d\ge \frac{k}{2}$ we return to the notations of Equation \ref{eqn:T_VBound},
\begin{equation} \label{eqn:T_VBoundByF}
\frac{\log(|T_V|)}n \le F(m_1,\gamma) + \sum_{i=2}^{k-d}F(m_i,\gamma,\delta_i)\end{equation}
 where $m_i = |\Delta_i|$ and $\sum_{i=1}^{k-d} m_i = k$.

Lemma \ref{lem:FMonotoneInDelta} yields $F(m_i,\gamma,\delta_i) \le F(m_i,\gamma)$. By repeatedly applying Corollary \ref{cor:FConvexDiscrete} and Proposition \ref{prop:F12Bound} we get the upper bound
$$\frac{\log(|T_V|)}n \le F(2(d+1)-k,\gamma) + (k-d-1)F(2) = F(2(d+1)-k,\gamma) + (k-d-1)h(\gamma).$$

Hence, 
\begin{align*} \label{eqn:G_dBoundLarged}
\log G_d&\le -\lambda d n + dk + n(F(2(d+1)-k,\gamma)+(k-d-1)h(\gamma))\\
&=n\left(F(2(d+1)-k,\gamma)-(k-1)\lambda + (k-d-1)(h(\gamma)-\lambda)\right)+(k-d)k. \numberthis
\end{align*}

Our bounds on $G_d$ are in fact tight up to a polynomial factor in $n$ (but perhaps exponential in $k$). This follows from the existence of certain large terms in Equation \ref{eqn:GDefinition}. For $d < \frac k2$, pick any map $\varphi$ from $\{d+1,\ldots,k\}$ {\em onto} $\{1,\ldots,d\}$. Consider the space $V$ that is defined by the equations  $v_{i}=v_{\varphi(i)}$ for every $k\ge i >d$. It is clear that the space $V$ is robust. For $d \ge \frac k2$, consider the contribution of the term corresponding to
$$V = \left\{u\in \bits{k}\mid \sum_{i=1}^{t}u_i = 0 \wedge u_{t+1}=u_{t+2} \wedge u_{t+3} = u_{t+4} \wedge \ldots \wedge u_{k-1} = u_k\right\},$$
where $t = 2(d+1)-k$.

We turn to use these bounds to compute $X$'s central moments. We consider two cases, according the value of $\gamma$. 

\subsection{Moments of even order}
Let $k$ be even. By Lemma \ref{lem:FConvexInk} and Proposition \ref{prop:FBounds}, there is a positive integer $k_0=k_0(\gamma,\lambda)$ such that
$$\left\{2\le m\in \mathbb N \mid F(m,\gamma)-(m-1)\lambda > \frac m2(h(\gamma)-\lambda)\right\}=\{k_0, k_0+1, k_0+2,\ldots\}$$ 
We claim that the sum in Equation \ref{eqn:k'thMomentByDim} is dominated by either $G_{\frac k2}$ or $G_{k-1}$ depending on whether $k< k_0$ or $k\ge k_0$. 

\subsubsection{When $k<k_0$}
Since $k_0=k_0(\gamma,\lambda)$ does not depend of $n$, and since $k < k_0$ there is only a bounded number of $\bits{k}$-subspaces. We wish to compute the term $G_d=G_{\frac k2}$. We show that in this case, the sum in Equation \ref{eqn:GDefinition} is dominated by spaces of the form
\begin{equation}\label{eqn:zugiut}
V= \{v\in \bits{k} \mid v_{i_1}=v_{j_1}\wedge v_{i_2}=v_{j_2}\wedge\cdots\wedge v_{i_{\frac k2}}=v_{j_\frac k2}\},
\end{equation}
where the pairs $\{i_1,j_1\},\ldots,\{i_{\frac k2},j_{\frac k2}\}$ form a partition of $[k]$. Clearly, for such a space $V$, a matrix in $T_V$ is defined by $\frac k2$ of its rows, so
$$|T_{V}| = \binom {n}{\gamma n}^{\frac k2}.$$ 

If $U \le \bits{k}$ is robust, of dimension $\frac k2$, and not of this form \ref{eqn:zugiut}, then at least one of its associated $m_i$'s (see Equation \ref{eqn:T_VBoundByF}) equals $1$. By repeated application of Proposition \ref{prop:F12Bound}, it follows that
$$|T_U| \le N^{\frac{k}{2}F(2,\gamma)-\Omega(1)} = N^{\frac{k}{2}h(\gamma)-\Omega(1)},$$
which, as claimed, is exponentially negligible relative to $|T_V|$. The number of subspaces of the form \ref{eqn:zugiut} is $k!!$, whence
$$G_{\frac k2} = k!! \binom {n}{\gamma n}^{\frac k2} N^{-\lambda \frac k2}(1+N^{-\Omega(1)}) = N^{\frac k2 (h(\gamma)-\lambda) - \frac{k\log n}{4n}+O(\frac kn)}.$$

We turn to show that $G_d = o(G_{k/2})$ for any $d \ne \frac k2$. For $d < \frac k2$ this follows from Equation \ref{eqn:G_dBoundSmalld}. For $d > \frac k2$, due to Lemma \ref{lem:FConvexInk}, the r.h.s.\ of Equation \ref{eqn:G_dBoundLarged} is strictly convex in $d$, and therefore attains its maximum at $d=\frac k2$ or $d=k-1$. Since $k<k_0$, the former holds.\footnote{It is possible that the r.h.s.\ of Equation \ref{eqn:G_dBoundLarged} attains the same value with $d=\frac k2$ and $d=k-1$. Note that $G_{\frac k2}$ still dominates in this case, due to polynomial factors}

Equation \ref{eqn:k'thMomentByDim} yields
$$
\E\left((X-\E(X))^k\right) =  k!! \binom {n}{\gamma n}^{\frac k2} N^{-\lambda \frac k2}(1+o(1)).$$

\subsubsection{When $k\ge k_0$} 
Note that $\evens{k}$ is the one and only $(k-1)$-dimensional robust subspace of $\bits{k}$. Hence, by Equation \ref{eqn:TAsymptotics},
$$G_{k-1} = N^{-\lambda d}|T_{\evens{k}}| = N^{F(k,\gamma)-(k-1)\lambda -\frac{k\log n}{2n}+O(\frac kn)}.$$
We next show that the sum in Equation \ref{eqn:k'thMomentByDim} is dominated by this term. 
By Proposition \ref{prop:FBounds} and Equations \ref{eqn:G_dBoundLarged} and \ref{eqn:G_dBoundSmalld},
$$G_d \le N^{d(h(\gamma)-\lambda)-1+O((1-2\gamma)^k)+\frac {(k-d)k}n}$$
for all $0\le d\le k-2$. Consequently,
$$
\frac{G_d}{G_{k-1}} \le N^{(k-1-d)(\lambda-h(\gamma))+O((1-2\gamma)^k)+\frac {k\log n}{2n}+\frac {(k-d)k}n}.
$$
For large enough $k$, this is at most $N^{-\Omega(k-d)}$, so 
\begin{equation} \label{eqn:MomentLargek}
\E\left((X-\E(X))^k\right) =  G_{k-1}(1-o(1)) = N^{F(k,\gamma)-(k-1)\lambda -\frac{k\log n}{2n}+O(\frac kn)}
\end{equation}

It is left to show that Equation \ref{eqn:MomentLargek} holds for {\em all} $k \ge k_0$, but this follows again from the convexity of $F$. Namely, since $k\ge k_0$, the r.h.s.\ of Equation \ref{eqn:G_dBoundLarged} is strictly maximized by $d=k-1$, whence $G_d =o(G_{k-1})$ for $\frac k2 \le d < k-1$. For $d < \frac k2$, this inequality follows from $G_{d} < G_{\frac k2}$. 

We are now ready to state our main theorem:

\begin{theorem} \label{thm:mainEven}
For every $\gamma <\frac 12$ and $0<\lambda<h(\gamma)$ and for every even integer $k \le o(\frac n{\log n})$, the expectation $\E((X-\E(X))^k)$ is the larger of the two expressions
\begin{align*}
&k!!\binom{n}{\gamma n}^{\frac k2}N^{-\lambda\frac k2}(1+o(1)) \text{ ~~~~  and}\\
&N^{F(k,\gamma)-(k-1)\lambda-\frac{k\log n}{2n}+O(\frac kn)}.
\end{align*}
There is an integer $k_0=k_0(\gamma, \lambda)\ge 3$ such that the former term is the larger of the two when $k < k_0$ and the latter when $k\ge k_0$.
\end{theorem}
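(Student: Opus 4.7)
The plan is to start from Equation \ref{eqn:k'thMomentByDim}, which writes
$\E((X-\E(X))^k) = \Theta\bigl(\sum_{d=0}^{k-1} G_d\bigr)$ with $G_d = N^{-\lambda d}\sum_V |T_V|$ summed over robust $d$-dimensional subspaces $V \le \bits{k}$. I will identify, for each $d$, the subspaces contributing most to $G_d$, then compare the resulting estimates across $d$ to single out the dominant index and extract the leading constant.

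For the upper bounds, I would split on $d$. When $d < k/2$ the trivial entropy bound $\log|T_V|/n \le d\, h(\gamma)$ (a column of a matrix in $T_V$ is determined by $d$ coordinates, each with marginal probability $\gamma$) together with the crude count of at most $2^{dk}$ subspaces yields $G_d \le N^{d(h(\gamma)-\lambda)+O(kd/n)}$, linear in $d$ in the exponent. When $d \ge k/2$ I would invoke Equation \ref{eqn:T_VBound}, apply Lemma \ref{lem:FMonotoneInDelta} to replace each $F(m_i,\gamma,\delta_i)$ by $F(m_i,\gamma)$, and then repeatedly use Corollary \ref{cor:FConvexDiscrete} together with Proposition \ref{prop:F12Bound} to consolidate the block sizes $m_i$ into the extremal configuration $(2(d+1)-k,\,2,\,\ldots,\,2)$. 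This produces
\[
\log G_d \le n\bigl[F(2(d+1)-k,\gamma)-(k-1)\lambda+(k-d-1)(h(\gamma)-\lambda)\bigr]+(k-d)k.
\]
Matching lower bounds come from two explicit families: for $d = k/2$, the $k!!$ pair-matching subspaces of Equation \ref{eqn:zugiut}, each realizing $|T_V| = \binom{n}{\gamma n}^{k/2}$ exactly; for $d = k-1$, the unique robust subspace $\evens{k}$, whose size is given by Equation \ref{eqn:TAsymptotics} (relying on Lemma \ref{lem:TProbability}).

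To decide which $d$ wins I would invoke Lemma \ref{lem:FConvexInk}: the exponent appearing in the $d \ge k/2$ bound is strictly convex in $d$, so its maximum on that range is attained at an endpoint, either $d = k/2$ or $d = k-1$. Comparing those two endpoints defines $k_0(\gamma,\lambda)$, the smallest $m$ with $F(m,\gamma)-(m-1)\lambda > \tfrac{m}{2}(h(\gamma)-\lambda)$; for $k < k_0$ the $d = k/2$ estimate dominates, for $k \ge k_0$ the $d = k-1$ estimate does, and in either case the $d < k/2$ contributions are absorbed. The most delicate points I expect are (i) verifying that for $k < k_0$ every robust $k/2$-dimensional subspace not of pair-matching form loses a multiplicative $N^{-\Omega(1)}$ factor — precisely the regime where Proposition \ref{prop:F12Bound} applies, since some $m_i$ must equal $1$ — so that the $k!!$ constant is exact; and (ii) ensuring that the $2^{O(k)}$ slack from Lemma \ref{lem:TProbability} and from the subspace count is harmless, which is guaranteed exactly by the hypothesis $k = o(n/\log n)$.
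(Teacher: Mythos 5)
Your proposal follows essentially the same route as the paper's own derivation: the decomposition into the terms $G_d$ of Equation \ref{eqn:k'thMomentByDim}, the trivial entropy bound for $d<k/2$, the consolidation via Lemma \ref{lem:FMonotoneInDelta}, Corollary \ref{cor:FConvexDiscrete} and Proposition \ref{prop:F12Bound} into the configuration $(2(d+1)-k,2,\ldots,2)$ for $d\ge k/2$, the matching lower bounds from the pair-matching spaces and from $\evens{k}$, and the endpoint comparison via convexity (Lemma \ref{lem:FConvexInk}) that defines $k_0$. The two delicate points you flag are exactly the ones the paper handles, so the plan is correct and essentially identical to the published argument.
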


\subsection{Moments of odd order}
We turn to the case of odd $k > 2$. The arguments that we used to derive the moments of even order hold here as well, with a single difference, as we now elaborate.

The role previously held by $G_{\frac k2}$ is now be taken by either
$$G_{\frac {k-1}2} = \Theta \left(N^{\frac{k-1}{2}(h(\gamma)-\lambda)-\frac{(k-1)\log n}{4}}\right)$$ or 
$$G_{\frac {k+1}2} = \Theta \left(N^{\frac{k-3}{2}(h(\gamma)-\lambda)+F(3,\gamma)-2\lambda-\frac{(k+1)\log n}{4}}\right).$$
These asymptotics are for bounded $k$. Which of these two terms is larger depends on whether $F(3,\gamma) > (h(\gamma)-\lambda)$. This yields our main theorem for moments of odd order.

\begin{theorem} \label{thm:mainOdd}
For every $\gamma <\frac 12$ and $0<\lambda<h(\gamma)$ and for every odd integer $3 \le k \le o(\frac n{\log n})$, the expectation $\E((X-\E(X))^k)$ is the larger of the two expressions
\begin{align*}
&\Theta\left(N^{\frac{k-3}2 (h(\gamma)-\lambda)-\lambda-\frac{(k-1)\log n}4} \cdot N^{\max(h(\gamma), F(3,\gamma)-\lambda-\frac{\log n}{2n})}\right)  \text{ ~~~~  and}\\
&N^{F(k,\gamma)-(k-1)\lambda-\frac{k\log n}{2n}+O(\frac kn)}.
\end{align*} 
There is an integer $k_1=k_1(\gamma, \lambda)$ such that the former term is the larger of the two when $k < k_1$ and the latter when $k\ge k_1$.
\end{theorem}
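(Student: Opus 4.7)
The plan is to mirror the argument for Theorem \ref{thm:mainEven} and track the one place where the parity of $k$ matters. The starting point is again Equation \ref{eqn:k'thMomentByDim}: $\E((X-\E(X))^k) = \Theta(\sum_{d=0}^{k-1} G_d)$. The bounds \ref{eqn:G_dBoundSmalld} and \ref{eqn:G_dBoundLarged} on $G_d$ are valid regardless of parity, and combined with the strict convexity of $F$ in $m$ (Lemma \ref{lem:FConvexInk}) they imply that the dominating $G_d$ is attained either in the small-$d$ regime ($d<k/2$, where $G_d \le N^{d(h(\gamma)-\lambda)+O(k^2/n)}$ is increasing in $d$) or at $d=k-1$ (where the r.h.s.\ of \ref{eqn:G_dBoundLarged}, being convex in $d$, attains its maximum). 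In the even case both extremes of the small-$d$ regime and the $d\ge k/2$ regime coincide at $d=k/2$; for odd $k$ the two largest admissible indices on each side split into $d=(k-1)/2$ and $d=(k+1)/2$, and both must be considered.

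I would first compute $G_{(k-1)/2}$. The analog of the $\evens{2}^{k/2}$ construction is to pick a partition of $[k]$ into $(k-3)/2$ pairs and one triple, and take $V$ to be the space defined by $v_i=v_j$ on each pair together with two such identifications on the triple. Such a $V$ is robust of dimension $(k-1)/2$ and $|T_V| = \binom{n}{\gamma n}^{(k-1)/2}$; summing over the $\Theta(1)$ such partitions (for bounded $k$) and invoking Proposition \ref{prop:F12Bound} to rule out any other robust $(k-1)/2$-dimensional subspace contributing at the same order yields the first term $G_{(k-1)/2} = \Theta(N^{\frac{k-1}{2}(h(\gamma)-\lambda)-\frac{(k-1)\log n}{4n}})$. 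Next, for $G_{(k+1)/2}$, the analog of the even-case decomposition argument selects $V = \evens{3}\oplus \evens{2}^{(k-3)/2}$, giving $|T_V| \approx N^{F(3,\gamma)+\frac{k-3}{2}h(\gamma)}$ via Equation \ref{eqn:TAsymptotics} applied to each factor. The two candidates differ in their exponents by $F(3,\gamma)-h(\gamma)-\lambda-\frac{\log n}{2n}$, which is exactly the comparison inside the $\max$ in the statement.

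Finally, $G_{k-1}$ is handled identically to the even case, since $\evens{k}$ remains the unique robust $(k-1)$-dimensional subspace irrespective of parity, yielding $G_{k-1} = N^{F(k,\gamma)-(k-1)\lambda-\frac{k\log n}{2n}+O(k/n)}$. By Proposition \ref{prop:FBounds} the exponent grows like $k(h(\gamma)-\lambda)-\lambda+o_k(1)$, so as $k$ increases the ratio $G_{k-1}/\max(G_{(k-1)/2},G_{(k+1)/2})$ eventually exceeds $1$, and the existence of the threshold $k_1=k_1(\gamma,\lambda)$ follows; for $k\ge k_1$ the convexity argument from the even case shows that all intermediate $G_d$ with $\frac{k}{2}\le d<k-1$ are $o(G_{k-1})$, while the small-$d$ terms are also dominated.

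The main obstacle, as in the even case, is pinning down that the optimal robust subspace at dimension $(k+1)/2$ is the direct sum $\evens{3}\oplus\evens{2}^{(k-3)/2}$ rather than some other robust subspace. This amounts to running the optimization of Equation \ref{eqn:T_VBound}: Lemma \ref{lem:FMonotoneInDelta} lets us set all the $\delta_i$'s to zero, Corollary \ref{cor:FConvexDiscrete} pushes the block sizes $m_i$ toward two extremes, and Proposition \ref{prop:F12Bound} forbids any $m_i=1$, forcing one block of size $2(d+1)-k=3$ and the remaining $(k-3)/2$ blocks of size $2$. Once this is established, the rest of the argument is bookkeeping analogous to the even case, and the resulting dichotomy produces the stated threshold $k_1$.
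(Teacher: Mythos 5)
Your proposal is correct and follows essentially the same route as the paper: for odd $k$ the paper simply replaces the role of $G_{k/2}$ by the two competing terms $G_{\frac{k-1}{2}}$ and $G_{\frac{k+1}{2}}$ (realized, exactly as you say, by the pairs-plus-triple spaces and by $\evens{3}\oplus\evens{2}^{(k-3)/2}$), keeps the $\evens{k}$ term $G_{k-1}$, and otherwise reuses the even-case machinery. Your comparison of exponents ($F(3,\gamma)-\lambda$ versus $h(\gamma)$), your use of Lemma \ref{lem:FMonotoneInDelta}, Corollary \ref{cor:FConvexDiscrete} and Proposition \ref{prop:F12Bound} to force one block of size $3$ at dimension $\frac{k+1}{2}$, and the convexity argument for the threshold $k_1$ all coincide with (and in places spell out more fully than) the paper's argument.
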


\subsection{Normalized moments}
In this section we return to a theorem stated in the introduction. While it is somewhat weaker than our best results, we hope that is more transparent and may better convey the spirit of our main findings. 
Recall that 
$$\Var(X) = \binom{n}{\gamma n}N^{-\lambda}(1+o(1)).$$
Consider the variable $\frac{X}{\sqrt{\Var(X)}}$. By the same convexity arguments as above, its odd moments of order up to $k_0$ are $o_n(1)$. This yields the following result.

\MainNormalized*

\section{Discussion}
\subsection{Extensions and refinements} \label{subsec:Extensions}
Throughout this paper, we have limited $\gamma$ to the range $(0,\frac 12)$. What about $\gamma > \frac 12$? The function $F(k,\gamma,\delta)$ can be naturally extended to $\gamma\in(\frac 12, 1)$ and it satisfies the following obvious identity that follows by negating all bits in the underlying distribution.
$$F(m,\gamma,\delta) = \begin{cases}
F(m,1-\gamma,\delta)& \text{if }m\text{ is even}\\
F(m,1-\gamma,1-\delta)& \text{if }m\text{ is odd.}\\
\end{cases}
$$
In particular, when $\gamma > \frac 12$ and $m$ is odd, $F$ is increasing rather than decreasing in $\delta$. Also, Lemma \ref{lem:FConvexInk} is no longer valid. In fact, $F(m,\gamma)$ is larger than the linear function $m\cdot h(\gamma)-1$ when $m$ is even, but smaller than it when $m$ is odd (see Figure \ref{fig:FofGammawithDelta0And1} for an example of the odd case). 

\begin{figure}
    \centering
    \includegraphics[width=0.5\textwidth]{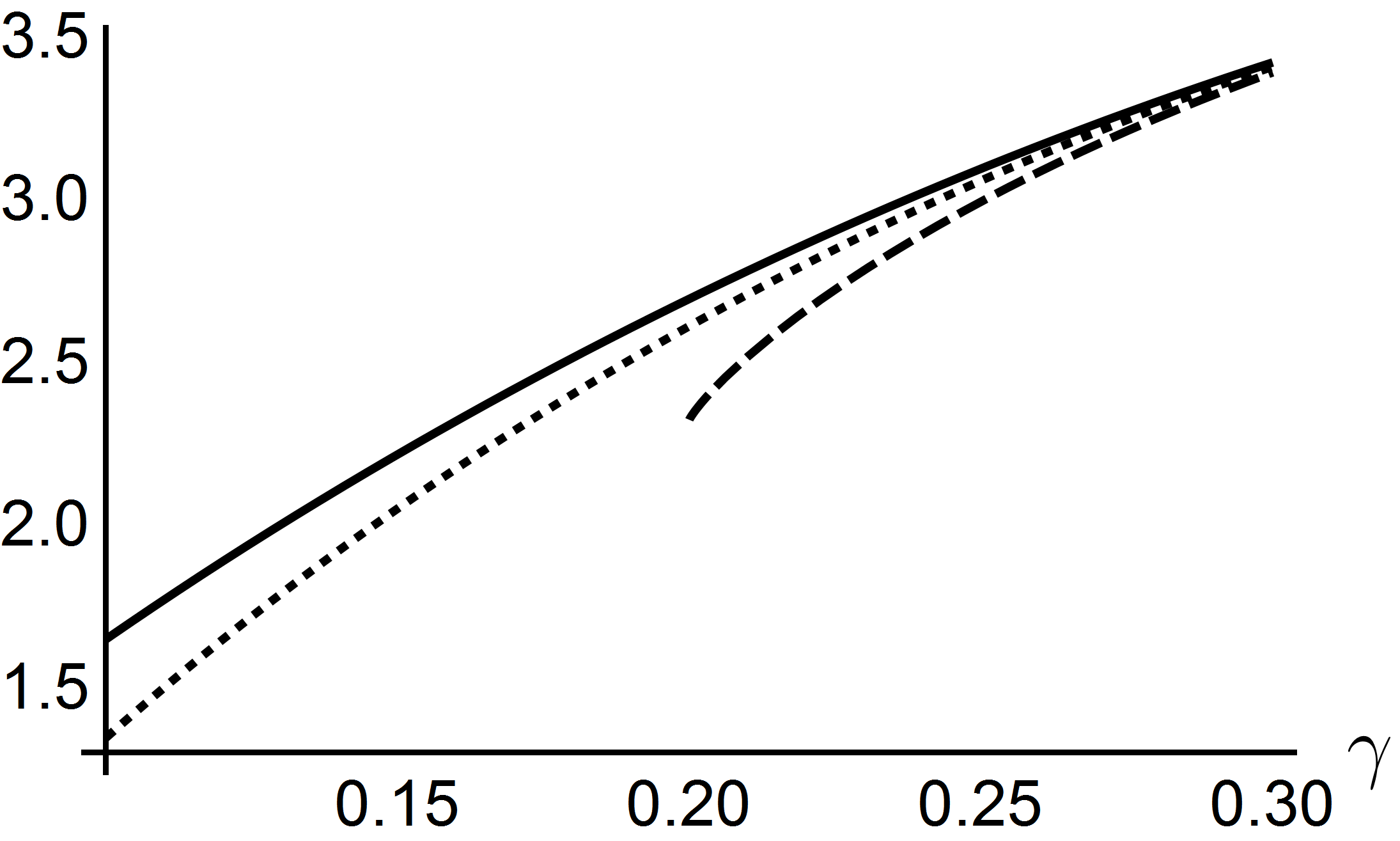}
    \caption{\label{fig:FofGammawithDelta0And1}. Illustration for Section \ref{subsec:Extensions} - Extending $F$ to $\gamma \in (\frac 12,1)$. {\bf Solid:} $F(5,\gamma)$ {\bf Dashed:} $F(5,\gamma,1)=F(5,1-\gamma)$ {\bf Dotted:} $5h(\gamma)-1$} 
\end{figure}

It can be shown that Theorem \ref{thm:mainEven} still holds in this range, but the odd moments are more complicated. The dominant term in Equation \ref{eqn:k'thMomentByDim} is no longer necessarily a product of $\evens{m}$ spaces. Rather, it may be a $(k-2)$-dimensional space, the exact parameters of which are determined by $\gamma$. 

We illustrate this unexpected additional complexity with a numerical example. Consider the following two $7$-dimensional subspaces of $\bits{9}$:
$$U = \left\{u\in\bits{9} \mid \sum_{i=1}^8 u_i = 0 \wedge u_9 = u_8\right\}$$
and
$$V = \left\{u\in \bits{9} \mid \sum_{i=1}^3 u_i = \sum_{i=4}^8 u_i = \sum_{i=7}^9 u_i\right\}.$$
For most values of $\gamma$ there holds $|T_U| > |T_V|$, but for $\gamma > 0.9997$ the opposite inequality holds.

We believe that further analysis along the lines of the present papers may yield these odd moments as well.

Similar phenomena occur when $\gamma n$ is odd. Due to parity considerations, $T_V$ is empty when there is an odd weight vector that is orthogonal to $V$. It turns out that computing the moments in this case comes down to essentially the same problem as the one described above for $\gamma > \frac 12$.

We next discuss the possible range of $k$. Namely, which moments we know. We are presently restricted to $k \le o(\frac n{\log n})$, but it is conceivable that with some additional work the same conclusions can be shown to hold for all $k \le o(n)$. The current bound arises in our analysis of the expression $\frac{G_d}{G_{k-1}}$ in Equation \ref{eqn:MomentLargek}. Our lower bound on $G_{k-1}$ includes a factor of $N^{-\frac{k\log n}{2n}}$, which is absent from our upper bound on $G_{d}$. Lemma \ref{lem:TProbability} can presumably be adapted to work for {\em general} robust subspaces, thereby improving this upper bound, thus yielding the same conclusions for $k$ up to $o(n)$. 

Pushing $k$ to the linear range $k\ge\Omega(n)$ is likely a bigger challenge, since many basic ingredients of our approach are no longer valid. If $k > (1-\lambda)n+1$, we expect our code to have dimension smaller than $k-1$, whereas our main theorems show that the $k$-th moment of $X$ is dominated by $(k-1)$-dimensional subsets of the $(\gamma n)$-th layer of $\bits{k}$.  Concretely, for $k\ge \Omega(n)$, our derivation of Equation \ref{eqn:MomentLargek} would fail, since the term $\frac{(k-d)k}n$ is no longer negligible. It is interesting to understand which terms dominate these very high moments. 

The above discussion about large $k$ is also related to the way that we sample random linear subspaces $C$ in this paper. In our model there is a negligible probability that $\dim(C)>(1-\lambda)n$. This can be avoided by opting for another natural choice, viz.\ to sample $C$ uniformly at random from among the $(1-\lambda)$-dimensional subspaces of $\bits{n}$. The effect of this choice manifests itself already in Proposition \ref{prop:CContainsSmallSet}. This effect is negligible when $d \ll (1-\lambda)n$, but becomes significant as $d$ grows, e.g., under the alternative definition $\Pr(Y\subseteq C) = 0$ whenever $\dim(Y) > (1-\lambda)n$. Presumably, $X$'s moments of order $\Theta(n)$ are sensitive to this choice of model.

There is further potential value to improving Lemma \ref{lem:TProbability}. A reduction in its error term would have interesting implications for the range $\frac{n}{\log n}\gg k > \frac{\log n}{-\log(1-2\gamma)}$. As things stand now, the difference between the upper and lower estimates in Proposition \ref{prop:FBounds} is smaller than the error term in our estimates for the moments and yields

$$N^{kh(\gamma)-1-(k-1)\lambda-\frac{k\log n}{2n}+O(\frac kn)}.$$
as our best estimate for the $k$-th moment. Reducing the error term in Lemma \ref{lem:TProbability} may significantly improve several of our results. 

\subsection{Open problems}
The long-term goal of this research is to understand the distribution of the random variable $X$. Although our computation of $X$'s moments is a step in this direction, we still do not yet have a clear view of this distribution. In particular, since all but boundedly many of $X$'s normalized moments tend to infinity, there is no obvious way to apply moment convergence theorems. 

Taking an even broader view, let us associate with a linear code $C$ the probability measure $\mu$ on $[0,1]$, with the CDF 
$$f(x) = |C|^{-1}\cdot |\{u\in C\mid \|u\| \le nx\}|.$$
We are interested in the typical behavior of this measure when $C$ is chosen at random.
In this context, our random variable $X$ corresponds to the PDF of $\mu$ at the point $\gamma$. Note that $\mu$ is typically concentrated in the range $\frac 12 \pm O(n^{-\frac 12})$, so that our questions correspond to large deviations in $\mu$. 

Many further problems concerning $\mu$ suggest themselves. What can be said about correlations between $\mu$'s PDF at two or more different points? Also, clearly, $\mu$ is binomial in expectation, but how far is it from this expectation in terms of moments, CDF, or other standard measures of similarity? We believe that the framework developed in this paper can be used to tackle these questions.

\bibliographystyle{amsplain}
\bibliography{jm}
\end{document}